\newcounter{claimcounter}
\newenvironment{claim}{\refstepcounter{claimcounter}{\medskip\noindent \underline{Claim \theclaimcounter:}}\itshape}{\smallskip}
\crefname{claimcounter}{Claim}{Claims}
\newcommand{\claimqed}[0]{\hfill $\blacksquare$}
\newenvironment{claimproof}[1]{\par\noindent\underline{Proof:}\space#1}{\claimqed}
\Crefname{algocf}{Algorithm}{Algorithms}
\begin{document}

\title[An Automata-based Framework for Verification and Bug Hunting in Quantum Circuits]{%
An Automata-based Framework for Verification and Bug Hunting in Quantum Circuits
(Technical Report)
}


\author{Yu-Fang Chen} 
    \orcid{0000-0003-2872-0336}             
    \affiliation{ 
      \institution{Academia Sinica}
      \department{Institute of Information Science}
      \country{Taiwan}
    }
    \email{yfc@iis.sinica.edu.tw}
\author{Kai-Min Chung} 
    \orcid{0000-0002-3356-369X}
    \affiliation{ 
      \institution{Academia Sinica}
      \department{Institute of Information Science}
      \country{Taiwan}
    }
    \email{kmchung@iis.sinica.edu.tw}
\author{Ondřej Lengál} 
    \orcid{0000-0002-3038-5875}             
    \affiliation{ 
      \department{Faculty of Information Technology}             
      \institution{Brno University of Technology}
      \country{Czech Republic}
    }
   \email{lengal@fit.vutbr.cz}         
\author{Jyun-Ao Lin} 
    \orcid{0000-0001-8560-2147}
    \affiliation{ 
      \institution{Academia Sinica}
      \country{Taiwan}
    }
    \email{jyalin@gmail.com}
\author{Wei-Lun Tsai} 
\orcid{0009-0003-5832-0867}
\affiliation{ 
  \institution{Academia Sinica}
  \department{Institute of Information Science}
  \country{Taiwan}
}
 \email{alan23273850@gmail.com}
\affiliation{ 
  \institution{National Taiwan University}
  \department{Graduate Institute of Electronics Engineering}
  \country{Taiwan}
}
\author{Di-De Yen} 
    \orcid{0000-0003-0045-9594}
    \affiliation{ 
      \institution{Max Planck Institute for Software Systems}
      \country{Germany}
    }
    \email{bottlebottle13@gmail.com}

\newcommand{\showcomment}[1]{#1}
\newcommand{\ol}[1]{\showcomment{\textcolor{blue}{\ifmmode \text{[OL: #1]}\else [OL: #1] \fi}}}
\newcommand{\yfc}[1]{\showcomment{\textcolor{purple}{\ifmmode \text{[YFC: #1]}\else [YFC: #1] \fi}}}
\newcommand{\km}[1]{\showcomment{\textcolor{green}{\ifmmode \text{[KM: #1]}\else [KM: #1] \fi}}}
\newcommand{\ja}[1]{\showcomment{\textcolor{red}{\ifmmode \text{[JA: #1]}\else [JA: #1] \fi}}}

\newtheorem*{remark}{Remark}

\newcommand{\hide}[1]{}
\newcommand{\vars}[0]{\mathbb{X}}     
\newcommand{\semidet}[0]{$\mathsf{semi}$-$\mathsf{determinize}$}
\newcommand{\low}[0]{\mathtt{L}}
\newcommand{\lowof}[1]{#1.\low}
\newcommand{\high}[0]{\mathtt{H}}
\newcommand{\highof}[1]{#1.\high}
\newcommand{\varf}[0]{\mathit{var}}
\newcommand{\varof}[1]{#1.\varf}
\newcommand{\rt}[0]{\mathit{root}}
\newcommand{\zero}[0]{\mathbf{0}}
\newcommand{\one}[0]{\mathbf{1}}
\newcommand{\inedge}[0]{\mathit{in}}
\newcommand{\dnc}[0]{\mathtt{X}}
\newcommand{\cemph}[1]{{\mathbf{\color{black}#1}}}
\newcommand{\lIfElse}[3]{\lIf{#1}{#2 \textbf{else}~#3}}

\newcommand{\complex}[0]{\mathbb{C}}
\newcommand{\integers}[0]{\mathbb{Z}}
\newcommand{\problemStatement}[3]{%
  \begin{center}
  \begin{tabularx}{\columnwidth}{@{}lX@{}}
  \toprule
  \multicolumn{2}{@{}c@{}}{\textsc{#1}}\tabularnewline
  \midrule
  \bfseries Input:    & #2 \\
  \bfseries Output: & #3 \\
  \bottomrule
  \end{tabularx}
  \end{center}
}

\newcommand{\algzero}[0]{\mathbf{0}}
\newcommand{\algone}[0]{\mathbf{1}}

\newcommand{\nat}[0]{\mathbb{N}}

\newcommand{\tuple}[1]{\langle #1 \rangle}
\newcommand{\pair}[2]{\tuple{#1, #2}}
\newcommand{\triple}[3]{\tuple{#1, #2, #3}}
\newcommand{\partto}[0]{\mathrel{\hookrightarrow}}   

\newcommand{\boolf}[0]{\mathit{BF}}
\newcommand{\boolfof}[1]{\boolf(#1)}

\newcommand{\Q}[0]{\mathcal{Q}}

\newcommand{\range}[1]{[#1]}
\newcommand{\dom}[0]{\mathrm{dom}}
\newcommand{\domof}[1]{\dom(#1)}

\newcommand{\ignore}[0]{\cdot}

\newcommand{\aut}[0]{\mathcal{A}}

\newcommand{\alphabet}[0]{\Sigma}
\newcommand{\bddalph}[0]{\alphabet_{\boxes}}
\newcommand{\bddalphof}[1]{\bddalph[#1]}
\newcommand{\lh}[0]{\low\high}
\newcommand{\lhof}[1]{\low\high^{#1}}
\newcommand{\lhalph}[0]{\alphabet_{\low\high}}
\newcommand{\boxalph}[0]{\alphabet_{\tout}}
\newcommand{\bdtalph}[0]{\Sigma_{\vars}}
\newcommand{\lhbdtalph}[0]{\Sigma_{\low\high\vars}}
\newcommand{\lhsymbof}[2]{\langle \low\colon #1, \high\colon #2\rangle}
\newcommand{\lhxsymbof}[3]{\langle \low\colon #1, \high\colon #2, #3\rangle}
\newcommand{\lhvarsymbof}[3]{\lhxsymbof{#1}{#2}{\varf\colon #3}}
\newcommand{\complexalph}[0]{\Sigma_\complex}

\newcommand{\stepover}[1]{\vdash^{\!\!#1}}
\newcommand{\lang}[0]{\mathcal{L}}
\newcommand{\langof}[1]{\lang(#1)}
\newcommand{\semof}[1]{\llbracket #1 \rrbracket}

\newcommand{\bdtlang}[0]{\lang^{\vars}}
\newcommand{\bdtlangof}[1]{\bdtlang(#1)}

\newcommand{\tree}[0]{T}
\newcommand{\treeof}[1]{\tree(#1)}
\newcommand{\arity}[0]{\#}
\newcommand{\symba}[2]{#1_{\!/#2}}
\newcommand{\arityof}[1]{\arity(#1)}
\newcommand{\alltreesof}[1]{\mathcal{T}_{#1}}
\newcommand{\alltrees}[0]{\alltreesof{\Sigma}}
\newcommand{\run}[0]{\rho}
\newcommand{\rootstates}[0]{\mathcal{R}}
\newcommand{\transtree}[3]{\trans{#1}{#2}{(#3)}}
\newcommand{\translh}[3]{#1 \xrightarrow{} (#2, #3)}
\newcommand{\translhx}[4]{#1 \xrightarrow{#4} (#2, #3)}
\newcommand{\transleaf}[2]{#1 \xrightarrow{} #2}

\newcommand{\trim}[0]{\mathit{trim}}
\newcommand{\trimof}[1]{\trim(#1)}

\newcommand{\domain}[0]{\mathbb{D}}
\newcommand{\bool}[0]{\mathbb{B}}

\newcommand{\bdtrun}[0]{\rho_\vars}

\newcommand{\bigO}[0]{\mathcal{O}}
\newcommand{\bigOof}[1]{\bigO(#1)}

\newcommand{\oneoversqrttwo}[0]{\frac 1 {\sqrt{2}}}
\newcommand{\oneoversqrttwopar}[0]{\mathchoice%
  {\Big(\oneoversqrttwo\Big)}%
  {\big(\oneoversqrttwo\big)}%
  {TODO}%
  {TODO}}

\newcommand{\identity}[0]{I}
\newcommand{\conjtransof}[1]{#1^\dagger}
\newcommand{\inverseof}[1]{#1^{-1}}
\newcommand{\tensor}[0]{\mathbin{\otimes}}
\newcommand{\cnot}[0]{\mathit{CNOT}}

\newcommand{\tagg}[0]{\mathrm{Tag}}
\newcommand{\untagg}[0]{\mathrm{UnTag}}

\newcommand{\autoq}[0]{\textsc{AutoQ}\xspace}
\newcommand{\sliqsim}[0]{\textsc{SliQSim}\xspace}
\newcommand{\sliqec}[0]{\textsc{SliQEC}\xspace}
\newcommand{\feynman}[0]{\textsc{Feynman}\xspace}
\newcommand{\feynopt}[0]{\textsc{Feynopt}\xspace}
\newcommand{\qcec}[0]{\textsc{Qcec}\xspace}
\newcommand{\qbricks}[0]{\textsc{Qbricks}\xspace}
\newcommand{\qiskit}[0]{\textsc{Qiskit}\xspace}
\newcommand{\vata}[0]{\textsc{Vata}\xspace}
\newcommand{\permutation}[0]{\textsc{Hybrid}\xspace}
\newcommand{\composition}[0]{\textsc{Composition}\xspace}
\newcommand{\correct}[0]{T\xspace}
\newcommand{\wrong}[0]{F\xspace}
\newcommand{\unknown}[0]{---\xspace}
\newcommand{\nacell}[0]{\cellcolor{black!20}}
\newcommand{\timeout}[0]{\nacell timeout}
\newcommand{\bestresult}[0]{\cellcolor{green!20}}
\newcommand{\wrongcell}[0]{\cellcolor{red!20}}

\newcommand{\bvbench}[0]{\textsc{BV}\xspace}
\newcommand{\groversingbench}[0]{\textsc{Grover-Sing}\xspace}
\newcommand{\grovermultbench}[0]{\textsc{Grover-All}\xspace}
\newcommand{\mctoffolibench}[0]{\textsc{MCToffoli}\xspace}
\newcommand{\randombench}[0]{\textsc{Random}\xspace}
\newcommand{\revlibbench}[0]{\textsc{RevLib}\xspace}
\newcommand{\feynmanbench}[0]{\textsc{FeynmanBench}\xspace}

\newcommand{\shadedbox}[1]{\colorbox{black!20}{#1}}

\newcommand{\ctoprulelr}[1]{\cmidrule[\heavyrulewidth](lr){#1}}

\makeatletter
\DeclareRobustCommand{\shortto}{%
  \mathrel{\mathpalette\short@to\relax}%
}

\DeclareRobustCommand{\shortminus}{%
  \mathrel{\mathpalette\short@minus\relax}%
}

\newcommand{\short@to}[2]{%
  \mkern2mu
  \clipbox{{.5\width} 0 0 0}{$\m@th#1\vphantom{+}{\rightarrow}$}%
}

\newcommand{\short@minus}[2]{%
  \mkern2mu
  \clipbox{{.5\width} 0 0 0}{$\m@th#1\vphantom{+}{-}$}%
}
\makeatother

\newcommand{\comm}{\tikz[baseline]
{\draw (0,0) -- (0.5ex,0.75ex) -- (0,1.5ex) -- (-0.5ex,0.75ex) -- (0,0) }
}

\newcommand{\noncomm}{\tikz[baseline]
{\draw (0,0) -- (0.5ex,0.75ex) -- (0,1.5ex) -- (-0.5ex,0.75ex) -- (0,0);
\draw (-0.4ex,0) -- (0.4ex, 1.5ex); }
}

\newcommand{\labeledto}[1]{\raisebox{-0.2pt}{\scalebox{1.2}{\ensuremath{{\shortminus}\hspace{-2.1pt}\raisebox{0.16ex}{$\scriptstyle\{#1\hspace{-0.28pt}\}$}\hspace{-2.4pt}{\shortto}}}}}

\tikzset{st/.style={font=\ttfamily,shape=rectangle,rounded corners=.5em,draw=black,fill=gray!30,inner xsep=.3em,inner ysep=0em,text height=2.3ex,text depth=1.0ex}}
\newcommand{\translab}[1]{\text{\small {\protect\tikz[baseline,node distance=2mm]{%
  \protect\node[st] (nd) at (0,.64ex) {\hspace{.1em}\texttt{\upshape\strut$#1$}\hspace{.1em}\strut};%
  \protect\node[inner sep=0mm, left=of nd.west] (lhs){};%
  \protect\node[inner sep=0mm, right=of nd.east] (rhs){};%
  \protect\draw (lhs) edge (nd);%
  \protect\draw[->] (nd) edge (rhs);%
  }}}}
\newcommand{\trans}[3]{#1 \translab{#2} #3}
\newcommand{\FF}{\ensuremath{\mathbf{false}}} 
\newcommand{\TT}{\ensuremath{\mathbf{true}}} 

\newcommand{\redlab}[1]{\text{\small {\protect\tikz[baseline,node distance=2mm]{%
  \protect\node[shape=rectangle,rounded corners=.5em,draw=black,inner xsep=.3em,inner ysep=0em,text height=2.3ex,text depth=1.0ex,fill=red!30] (nd) at (0,.64ex) {\hspace{.1em}\texttt{\upshape\strut$#1$}\hspace{.1em}\strut};%
  }}}}

\newcommand{\greenlab}[1]{\text{\small {\protect\tikz[baseline,node distance=2mm]{%
  \protect\node[shape=rectangle,rounded corners=.5em,draw=black,inner xsep=.3em,inner ysep=0em,text height=2.3ex,text depth=1.0ex,fill=green!30] (nd) at (0,.64ex) {\hspace{.1em}\texttt{\upshape\strut$#1$}\hspace{.1em}\strut};%
  }}}}
  
\newcommand{\bluelab}[1]{\text{\small {\protect\tikz[baseline,node distance=2mm]{%
  \protect\node[shape=rectangle,rounded corners=.5em,draw=black,inner xsep=.3em,inner ysep=0em,text height=2.3ex,text depth=1.0ex,fill=blue!30] (nd) at (0,.64ex) {\hspace{.1em}\texttt{\upshape\strut$#1$}\hspace{.1em}\strut};%
  }}}}

\newcommand{\purplelab}[1]{\text{\small {\protect\tikz[baseline,node distance=2mm]{%
  \protect\node[shape=rectangle,rounded corners=.5em,,draw opacity=0,inner xsep=.3em,inner ysep=0em,text height=2.3ex,text depth=1.0ex,fill=blue!10] (nd) at (0,.64ex) {\hspace{.1em}{#1}\hspace{.1em}\strut};%
  }}}}


\begin{abstract}

  We introduce a~new paradigm for analysing and finding bugs in quantum
  circuits.
  In our approach, the problem is given by a~triple $\{P\}\,C\,\{Q\}$ and
  the question is whether, given a~set~$P$ of quantum states on
  the input of a~circuit~$C$, the set of quantum states on the output is equal
  to (or included in) a~set~$Q$.
  While this is not suitable to specify, e.g., functional correctness of
  a~quantum circuit, it is sufficient to detect many bugs in quantum
  circuits.
  We propose a~technique based on \emph{tree automata} to compactly
  represent sets of quantum states and develop transformers to implement
  the semantics of quantum gates over this representation.
  Our technique computes with an algebraic representation of quantum states,
  avoiding the inaccuracy of working with floating-point numbers.
  We implemented the proposed approach in a~prototype tool
  and evaluated its performance against various benchmarks from the literature.
  The evaluation shows that our approach is quite scalable,
  e.g., we managed to verify a~large circuit with 40 qubits and 141,527 gates, or
  catch bugs injected into a~circuit with 320 qubits and 1,758 gates, where all
  tools we compared with failed.
  In addition, our work establishes a~connection between quantum program
  verification and automata, opening new possibilities to exploit the
  richness of automata theory and automata-based verification in the world of
  quantum computing.
  This is a~technical report for a~paper with the same name that appeared at PLDI'23~\cite{ChenLLTY23}.

\end{abstract}

\maketitle

\vspace{-0.0mm}
\section{Introduction}\label{sec:introduction}
\vspace{-0.0mm}

The concept of \emph{quantum computing} appeared around 1980 with the promise to
solve many problems challenging for classical computers.
Quantum algorithms for such problems started appearing later, such as
Shor's factoring algorithm~\cite{Shor94}, a~solution to the hidden subgroup
problem by Ettinger \emph{et al}.~\cite{EttingerHK04},
Bernstein-Vazirani's algorithm~\cite{BernsteinV93}, or Grover's
search~\cite{Grover96}.
For a~long time, no practical implementation of these algorithms has been
available due to the missing hardware.
Recent years have, however, seen the advent of quantum chips claiming to achieve
\emph{quantum supremacy}~\cite{AruteABBBBB2019}, i.e., the ability to solve
a~problem that a state-of-the-art supercomputer would take thousands of years to
solve.
As it seems that quantum computers will occupy a~prominent role in the future, systems and languages for their programming are in active development (e.g.,~\cite{WilleMN19,AltenkirchG05,GreenLRSV13}), and efficient quantum algorithms for solutions of real-world problems, such as
machine learning~\cite{BiamonteWPRWL17,CilibertoHIPRSW18}, recommendation systems~\cite{KerenidisP16},
optimization~\cite{Moll18}, or
quantum chemistry~\cite{CaoRO19}, have started appearing.

The exponential size of the underlying computational space and the probabilistic
nature makes it, however, extremely challenging to reason about quantum
programs---both for human users and automated analysis tools.
Currently, existing automated analysis approaches are mostly unable to handle large-scale circuits~\cite{QPMC,feng2017model,ying2021model,ying2021modelb,ying2014model}, inflexible in checking user-specified properties~\cite{Coecke_2011,burgholzer2020advanced,Fagan_2019,amy2018towards,GreenLRSV13,WeckerS14,PednaultGNHMSDHW17,ViamontesMH09,Samoladas08,ZulehnerHW19,ZulehnerW19,NiemannWMTD16,TsaiJJ21}, or imprecise and unable to catch bugs~\cite{yu2021quantum,perdrix2008quantum}. 
Scalable and flexible automated analysis tools for quantum circuits are indeed missing.

In this paper, we propose a~new paradigm for analysing and finding bugs in
quantum circuits.
In our approach, the problem is given by a~triple $\{P\}\,C\,\{Q\}$, where~$C$
is a~quantum circuit and~$P$ and~$Q$ are sets of quantum states.
The verification question that we address is whether the set of output quantum
states obtained by running~$C$ on all states from~$P$ is equal to (or included
in) the set~$Q$.
While this kind of property is not suitable to specify, e.g., functional
correctness of a~quantum circuit, it is sufficient to obtain a~lot of useful
information about a~quantum circuit, such as finding constants (will a~circuit
evaluate to the same quantum state for all inputs in~$P$) or detecting bugs.

We create a~framework for analysing the considered class of properties based on
\emph{(finite) tree automata} (TAs)~\cite{tata}.
Languages of TAs are set of trees; in our case, we consider TAs whose languages
contain full binary trees with the height being the number of qubits in the
circuit.
Each branch (a~path from a~root to a~leaf) in such a~tree corresponds to one
\emph{computational basis state} (e.g., $\ket{0000}$ or $\ket{0101}$ for
a~four-qubit circuit), and the corresponding leaf represents the \emph{complex
amplitude} of the state (we use an algebraic encoding of complex numbers by
tuples of integers to have a~precise representation and avoid possible
inaccuracies when dealing with floating-point numbers\footnote{
  Integer numbers of an arbitrary precision can be handled, e.g., by the
  popular \texttt{GMP}~\cite{GMP} package.
}; this encoding is sufficient for a~wide variety of quantum gates, including the
Clifford+T universal set~\cite{BoykinMPRV00}).
Sets of such trees can be in many cases encoded compactly using TAs, e.g.,
storing the output of Bernstein-Vazirani's algorithm~\cite{BernsteinV93} over
$n$~qubits requires a~vector of~$2^n$ complex numbers, but can be encoded by
a~linear-sized TA.
For each quantum gate, we construct a~transformation that converts the
input states TA to a~TA representing the gate's output states,
in a~similar way as classical program transformations are represented in~\cite{DAntoniVLM15}.
Testing equivalence and inclusion between the TA representing the set of outputs
of a~circuit and the postcondition~$Q$ (from $\{P\}\,C\,\{Q\}$) can then be done
by standard TA language inclusion/equivalence testing
algorithms~\cite{tata,lengal2012vata,AbdullaBHKV08,AbdullaHK07}.
If the test fails, the framework generates a~witness for diagnosis. 

One application of our framework is as a~quick \emph{underapproximation of
a~quantum circuit non-equivalence test}. 
Our approach can switch to a lightweight specification when equivalence checkers fail due to insufficient resources and still find bugs in the design.
Quantum circuit (non-)equivalence testing is an essential part of the quantum
computing toolkit.
Its prominent use is in verifying results of circuit optimization, which is
a~necessary part of quantum circuit compilation in order to achieve the expected
fidelity of quantum algorithms running on real-world quantum
computers, which are heavily affected by noise and
decoherence~\cite{Amy19,hietala2019verified,xu2022quartz,Moll18,PehamBW22,HattoriY18,SoekenWDD10,ItokoRIM20,NamRSCM18}.
Already in the world of classical programs, optimizer bugs are being found on
a~regular basis in compilers used daily by tens of thousands of programmers
(see, e.g.,~\cite{LivinskiiBR20}).
In the world of quantum, optimization is much harder than in the classical
setting, with many opportunities to introduce subtle and hard-to-discover bugs
into the optimized circuits.
It is therefore essential to be able to check that an output of an optimizer
is functionally equivalent to its input.
Moreover, global optimization techniques, such as genetic
algorithms~\cite{MasseyCS05,Spector06}, may use (somehow quantified) circuit
(non-)equivalence as the fitness function.

Testing quantum circuit (non-)equivalence is, however, a~challenging task
(QMA-complete~\cite{Janzing05}).
Due to its complexity, approaches that can quickly establish circuit
non-equivalence are highly desirable to be used, e.g., as a~preliminary check
before a~more heavy-weight
procedure, such
as~\cite{PehamBW22,YamashitaM10,WeiTJJ22,ViamontesMH07,burgholzer2020advanced},
is used.
One currently employed fast non-equivalence check is to use
random stimuli generation~\cite{BurgholzerKW21}.
Finding subtle bugs by random testing is, however, challenging with no
guarantees due to the immense \mbox{(in general uncountable) underlying state
space.}

Our approach can be used as follows: we start with a~TA encoding the set of
possible input states (created by the user or automatically) and run our
analysis of the circuit over it, obtaining a~TA~$\aut$ representing the set of
all outputs.
Then, we take the optimized circuit, run it over the same TA with inputs and obtain a~TA~$\aut'$.
Finally, we check whether $\langof \aut = \langof{\aut'}$.
If the equality does not hold, we can conclude that the circuits are not
functionally equivalent (if the equality holds, there can, however, still be
some bug that does not manifest in the set of output states).

We implemented our technique in a~prototype called \autoq and evaluated it over
a~wide range of quantum circuits, including some prominent quantum algorithms,
randomly generated
circuits, reversible circuits from \revlibbench~\cite{WGT+:2008}, and benchmarks
from the tool \feynman~\cite{amy2018towards}.
The results show that our approach is quite scalable.
We did not find any tool with the same functionality with ours and hence pick
the closest state-of-the art tools: a~circuit simulator \sliqsim~\cite{TsaiJJ21}
and circuit equivalence checkers \feynman~\cite{amy2018towards} (based on
path-sum) and \qcec~\cite{burgholzer2020advanced} (combining ZX-calculus,
decision diagrams, and random stimuli generation), as the baseline tools to compare with. 
In the first experiments, we evaluated \autoq's capability in verification
against pre- and post-conditions.
We managed to verify the functional correctness
(w.r.t.\ one input state) of a~circuit implementing Grover's search
algorithm with 40 qubits and 141,527 gates.
We then evaluated \autoq on circuits with injected bugs.
The results confirm our claim---\autoq was able to find injected bugs in various
huge-scale circuits, \mbox{including one with 320 qubits and 1,758 gates, which the
other tools failed to find.}

In addition to the practical utility, our work also bridges the gap between
quantum and classical verification, particularly
automata-based approaches such as \emph{regular (tree) model
checking}~\cite{BouajjaniJNT00,armc,NeiderJ13} or
string manipulation verification~\cite{yu2008symbolic,YuBI11}.
%
As far as we know,
our approach to verification of quantum circuits is the first based on automata.
The enabling techniques and concepts involved in this work are, e.g., the use of TAs to represent sets of quantum states and express the pre- and post-conditions, the compactness of the TA structure enabling efficient gate operations, and our TA transformation algorithms enabling the execution of quantum gates over TAs.
We believe that the connection of automata theory with the quantum world we establish can start new fruitful collaborations between the two rich fields.

\paragraph{Overview:} We use a concrete example to demonstrate how to use our
approach. Assume that we want to design a circuit constructing the Bell state,
i.e., a 2-qubit circuit converting a basis state $\ket{00}$ to a maximally
entangled state $\frac{1}{\sqrt{2}}(\ket{00}+\ket{11})$. We first prepare TAs
corresponding to the precondition (\cref{fig:00}) and postcondition
(\cref{fig:bell}). Both TAs use $q$ as the root state and accept only one tree.
One can see the correspondence between quantum states and TAs by traversing their structure. The precise definition will be given in~\cref{sec:preliminaries} and~\cref{sec:ta_predicate}. 
Our approach will then use the transformers from~\cref{sec:symbolic_gates,sec:TA_op_permutation,sec:TA_op} to construct a TA $\aut$ recognizing the quantum states after executing the EPR circuit (\cref{fig:ERPcircuit}) from the precondition TA (\cref{fig:00}). We will then use TA language inclusion/equivalence tool VATA~\cite{lengal2012vata} to check $\aut$ against the postcondition TA. If the circuit is buggy, our approach will return a witness quantum state that is reachable from the precondition, but not allowed by the postcondition. From our experience of preparing benchmark examples, in many cases, this approach helps us finding out bugs from incorrect designs.

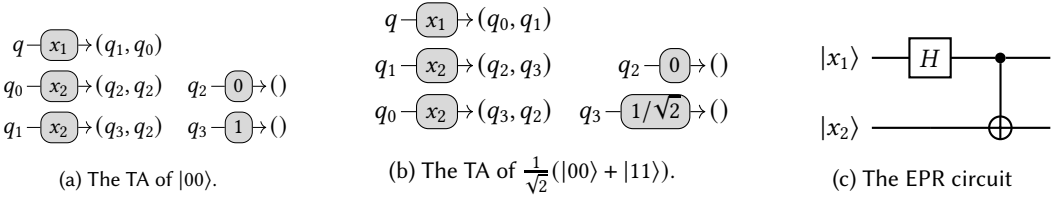
\begin{figure}[t]
 \scalebox{0.9}{
 \begin{subfigure}[b]{4cm}
     \begin{align*}
      \transtree {q} {x_1} {q_1, q_0} && \\
      \transtree {q_0} {x_2} {q_2, q_2}  && \transtree {q_2} {0} {}\\
      \transtree {q_1} {x_2} {q_3, q_2} && \transtree {q_3} {1} {}
     \end{align*}
\caption{The TA of $\ket{00}$.}\label{fig:00}
 \end{subfigure}
 }
 \hfill
 \begin{subfigure}[b]{4.2cm}
     \begin{align*}
      \transtree {q} {x_1} {q_0, q_1} && \\
      \transtree {q_1} {x_2} {q_2, q_3}  && \transtree {q_2} {0} {}\\
      \transtree {q_0} {x_2} {q_3, q_2} && \transtree {q_3} {1/\sqrt{2}} {}
     \end{align*}
 \caption{The TA of $\frac{1}{\sqrt{2}}(\ket{00}+\ket{11})$.}\label{fig:bell}
 \end{subfigure}
 \hfill
 \begin{subfigure}[b]{3cm}
 \begin{quantikz}
  \lstick{$\ket{x_1}$} & \gate{H} &\ctrl{1}   & \qw \\
  \lstick{$\ket{x_2}$} & \qw      &\targ{}     & \qw
 \end{quantikz}
 \caption{The EPR circuit} \label{fig:ERPcircuit}
 \end{subfigure}
 \vspace{-2mm}
 \caption{Constructing the Bell state}
 \vspace*{-3mm}
 \end{figure}

\hide{
We demonstrate how to use our approach to verify a 2-qubit circuit converting a basis state $\ket{00}$ to an entangled state $\frac{1}{\sqrt{2}}(\ket{00}+\ket{11})$.  

\begin{minipage}{4cm}
\begin{align*}
  \transtree {q} {x_1} {q_0, q^1_0} && \\
  \transtree {q^1_1} {x_2} {q_1, q_0} && \transtree {q_1} {1} {}\\
  \transtree {q^1_0} {x_2} {q_0, q_0}  && \transtree {q_0} {0} {}
\end{align*}
\end{minipage}
\begin{minipage}{4cm}
\begin{quantikz}
  \lstick{$\ket{x_1}$} & \gate{H} &\ctrl{1}   & \qw \\
  \lstick{$\ket{x_2}$} & \qw      &\targ{}     & \qw
\end{quantikz}
\end{minipage}
\begin{minipage}{4cm}
\begin{align*}
  \transtree {q} {x_1} {q_0, q^1_0} && \\
  \transtree {q^1_1} {x_2} {q_1, q_0} && \transtree {q_1} {1} {}\\
  \transtree {q^1_0} {x_2} {q_0, q_0}  && \transtree {q_0} {0} {}
\end{align*}
\end{minipage}}

\hide{

\ol{}

\vspace{-0.0mm}
\subsection*{Contributions}\label{sec:contribution}
\vspace{-0.0mm}
\begin{itemize}
	\item We explored the feasibility of using tree automata as symbolic representations for sets of (pure) quantum states and developed algorithms for quantum gate operations.
	\item .
	\item We pointed out the limitation of this approach and ways for future research direction.
	
\end{itemize}
}

\vspace{-0.0mm}
\section{Preliminaries}\label{sec:preliminaries}
\vspace{-0.0mm}

We assume basic knowledge of linear algebra and quantum circuits.
Below, we only give a~short overview and fix notation; see,
e.g., the textbook~\cite{NielsenC16} for more details.

By default, we work with vectors and matrices over complex numbers~$\complex$.
In particular, we use $\complex^{m\times n}$ to denote the set of all $m \times
n$ complex matrices.
Given a~$k \times \ell$ matrix~$(a_{xy})$, its \emph{transpose} is the
$\ell \times k$ matrix $(a_{xy})^T = (a_{yx})$ obtained by flipping~$(a_{xy})$
over its diagonal.
A~$1 \times k$ matrix is called a~\emph{row vector} and a~$k \times 1$ matrix is
called a~column vector.
To save vertical space, we often write a~column vector~$v$ using its row
transpose~$v^T$.
We use~$\identity$ to denote the \emph{identity} matrix of any dimension (which
should be clear from the context).
The \emph{conjugate} of a~complex number $a + bi$ is the complex number
$a - bi$, and the \emph{conjugate transpose} of a~matrix
$A = (a_{xy})$ is the matrix $\conjtransof A = (c_{yx})$ where $c_{yx}$ is the
conjugate of~$a_{yx}$.
For instance,
{
\small
$
\conjtransof{
\begin{pmatrix}
  1 + i & 2 - 2i & 3 \\
  4 - 7i & 0 & 0
\end{pmatrix}
}
=
\begin{pmatrix}
  1 - i & 4+7i \\
  2 + 2i & 0 \\
  3 &  0
\end{pmatrix}
$.
}
The \emph{inverse} of a~matrix~$A$ is denoted as $\inverseof A$.
A~square matrix~$A$ is \emph{unitary} if $\conjtransof A = \inverseof A$.
The \emph{Kronecker product} of $A = (a_{xy}) \in \complex^{k \times \ell}$ and $B \in
\complex^{m \times n}$ is the $km \times \ell n$ matrix $A \tensor B =
(a_{xy}B)$, for instance, 
{\small
\begin{equation}
\begin{pmatrix}
  1 + i  & 3 \\
  4 - 7i & 0
\end{pmatrix}  
\tensor
\begin{pmatrix}
  \frac 1 2 & 1 \\
  - \frac 1 2 & 0
\end{pmatrix}
=
\begin{pmatrix}
  (1 + i)\cdot 
    \begin{pmatrix}
      \frac 1 2 & 1 \\
      - \frac 1 2 & 0
    \end{pmatrix}
  & 3\cdot
    \begin{pmatrix}
      \frac 1 2 & 1 \\
      - \frac 1 2 & 0
    \end{pmatrix} \\[3mm]
  (4 - 7i)\cdot
    \begin{pmatrix}
      \frac 1 2 & 1 \\
      - \frac 1 2 & 0
    \end{pmatrix}
  & 0\cdot
    \begin{pmatrix}
      \frac 1 2 & 1 \\
      - \frac 1 2 & 0
    \end{pmatrix} \\
\end{pmatrix}  
=
\begin{pmatrix}
  \frac 1 2 + \frac 1 2 i  & 1 + i  & \frac 3 2  & 3 \\[0.5mm]
  -\frac 1 2 - \frac 1 2 i & 0      & -\frac 3 2 & 0 \\[0.5mm]
  2 - \frac 7 2 i          & 4 - 7i & 0          & 0 \\[0.5mm]
  -2 + \frac 7 2 i         & 0      & 0          & 0
\end{pmatrix}
.
%
\end{equation}
}
%


\subsection{Quantum Circuits}\label{sec:quantumcircuits}

\emph{Quantum states.}  In a quantum system with~$n$ qubits, the qubits can be entangled, and its
\emph{quantum state} can be a~quantum superposition of \emph{computational basis states}
$\{\ket{j}\mid j \in \{0,1\}^n\}$. For instance, given a~system with three qubits $x_1$,
$x_2$, and $x_3$, the computational basis state $\ket{011}$ denotes a~state where
qubit~$x_1$ is set to~0 and qubits~$x_2$ and~$x_3$ are set to~1.
The superposition is then denoted in the Dirac notation as a~formal sum 
$\sum_{j \in \{0,1\}^n} a_j\cdot\ket{j}$, where $a_0,a_1,\ldots,a_{2^n-1} \in
\complex$ are \emph{complex amplitudes}\footnote{%
We abuse notation and sometimes identify a~binary string with its (unsigned)
integer value in the \emph{most significant bit first} (MSBF) encoding, e.g.,
the string $0101$ with the number~$5$.
}
satisfying the property that
$\sum_{j \in \{0,1\}^n} |a_j|^2 = 1$. Intuitively, $|a_j|^2$ is the probability
that when we measure the state in the computational basis, we obtain the state~$\ket{j}$; these probabilities need to sum up to~1 for all
computational basis states.
We note that the quantum state can alternatively be represented by
a~$2^n$-dimensional column vector\footnote{
Observe that in order to satisfy the requirement for the amplitudes of quantum states, it must be a~\emph{unit} vector.
}
$(a_0,\ldots,a_{2^n-1})^T$ or by a~function
$T\colon \{0,1\}^n \to \mathbb{C}$, where $T(j)=a_j$ for all $j \in \{0,1\}^n$.
In the following, we will work mainly with the function representation, which we
will see as a mapping from the domain of assignments to Boolean variables (corresponding to
qubits) to~$\complex$.
For instance, the quantum state $\oneoversqrttwo \cdot \ket{00} + \oneoversqrttwo
\cdot \ket{01}$ can be represented by the vector
$(\oneoversqrttwo, \oneoversqrttwo, 0, 0)^T$ or the function
$T = \{00 \mapsto \oneoversqrttwo,
       01 \mapsto \oneoversqrttwo,
       10 \mapsto 0, 
       11 \mapsto 0\}$.
\paragraph{Quantum gates.}
Operations in quantum circuits are represented using quantum gates.
A~$k$-qubit \emph{quantum gate} (i.e., a~quantum gate with~$k$ inputs and~$k$
outputs) can be described using a~$2^k \times 2^k$ unitary matrix.
When computing the effect of a~$k$-qubit quantum gate~$U$ on the qubits $x_{\ell},
\ldots, x_{\ell + k -1}$ of an $n$-qubit quantum
state represented using a~$2^n$-dimensional vector~$v$, we proceed as follows.
First, we compute an auxiliary matrix~$U' = \identity_{n-(\ell+k-1)} \tensor U \tensor
I_{\ell-1}$ where $\identity_j$ denotes the $2^j$-dimensional identity
matrix.
Note that if~$U$ is unitary, then~$U'$ is also unitary.
Then, the new quantum state is computed as $v' = U' \times v$. For instance, let
$n = 2$ and $U$ be the Pauli-$X$ gate applied to the qubit~$x_1$.
\small
\begin{equation}\label{eq:apply_X}
X' = X \tensor I =
\begin{pmatrix}
  0 & 1 \\
  1 & 0
\end{pmatrix}
\tensor
\begin{pmatrix}
  1 & 0 \\
  0 & 1
\end{pmatrix}  
=
\begin{pmatrix}
  0 & 0 & 1 & 0 \\[0.5mm]
  0 & 0 & 0 & 1 \\[0.5mm]
  1 & 0 & 0 & 0 \\[0.5mm]
  0 & 1 & 0 & 0
\end{pmatrix},\ \ 
v' = X' \times v =
\begin{pmatrix}
  0 & 0 & 1 & 0 \\[0.5mm]
  0 & 0 & 0 & 1 \\[0.5mm]
  1 & 0 & 0 & 0 \\[0.5mm]
  0 & 1 & 0 & 0
\end{pmatrix}
\times
\begin{pmatrix}
  c_{00} \\[0.5mm]
  c_{01} \\[0.5mm]
  c_{10} \\[0.5mm]
  c_{11}
\end{pmatrix}
=
\begin{pmatrix}
  c_{10} \\[0.5mm]
  c_{11} \\[0.5mm]
  c_{00} \\[0.5mm]
  c_{01}
\end{pmatrix}
\end{equation}
\normalsize
\paragraph{Representation of complex numbers.}

In order to achieve accuracy with no loss of precision, in this paper,
when working with~$\complex$, we consider only a~subset of complex numbers that
can be expressed by the following algebraic encoding proposed
in~\cite{ZulehnerW19} (and also used in~\cite{TsaiJJ21}):
\begin{equation}\label{eq:algebraic_representation}
\oneoversqrttwopar^k  (a  + b \omega +c \omega^2 +d \omega^3 ),
\end{equation}
where $a,b,c,d,k \in \mathbb{Z}$ and $\omega = e^{\frac{i\pi}{4}}$, the unit vector that makes an angle of $45^\circ$ with the positive real axis in the complex plane).
A~complex number is then represented by a~five-tuple $(a,b,c,d,k)$.
Although the considered set of complex numbers is only a~small subset
of~$\complex$ (it is countable, while the set~$\complex$ is uncountable), the subset is already sufficient to describe a~set of quantum gates that can implement universal
quantum computation (cf.~\cref{sec:symbolic_gates} for more
details)\footnote{From Solovay-Kitaev theorem~\cite{dawson2005solovay},
rotations of $\pi/2^k$ gates, used, e.g., in Shor's algorithm~\cite{Shor94} and
\emph{quantum Fourier transform} (QFT)~\cite{Coppersmith02}, can be
approximated with $\bigOof{\mathrm{log}^{3.97}(\frac{1}{\epsilon})}$-many H, CNOT, and T gates
with an error rate~$\epsilon$.}.
The algebraic representation also allows efficient encoding of some
operations.
For example, because $\omega ^4 =-1$,
the multiplication of $(a,b,c,d,k)$ by $\omega$ can be carried out by a~simple right circular shift of the first four entries and then taking the opposite number for the first entry, namely $(-d,a,b,c,k)$, which represents the complex number $\oneoversqrttwopar^k  (-d  + a \omega + b  \omega^2 +c \omega^3)$.
In the rest of the paper, we use~$\algzero$ and~$\algone$ to denote the
tuples for zero and one, i.e.,
$(0,0,0,0,0)$ and $(1,0,0,0,0)$, respectively.
Using such an encoding, we represent quantum states by
functions of the form~$T\colon \{0,1\}^n \to \integers^5$.

\paragraph{Qubit Measurement.} 
After executing a quantum circuit, one can measure the final quantum state in
the computational basis.
The probability that the qubit $x_j$ of a quantum state $\sum_{i \in \{0,1\}^n}
a_i\cdot\ket{i}$ is measured as the basis state~$\ket{0}$ can be computed from the amplitude:
$\mathit{Prob}[x_j=\ket{0}]=\sum_{i \in \{0,1\}^{n-j}\times\{0\}\times \{0,1\}^{j-1}}|a_i|^2.$
When $x_j$ collapses to $\ket{0}$ after the measurement, amplitudes of states with $x_j=
\ket{1}$ become~0 and amplitudes of states with $x_j=\ket{0}$ are normalized
using $\frac{1}{\sqrt{\mathit{Prob}[x_j=\ket{0}]}}$.

\vspace{-0.0mm}
\subsection{Tree Automata}\label{sec:TA}
\vspace{-0.0mm}

\newcommand{
\begin{wrapfigure}[6]{r}{2.4cm}
\vspace*{-6mm}
\hspace*{-3mm}
\begin{minipage}{4cm}
\begin{tikzpicture}[]

  \node[] (root) {$f$};

  \node[below left of=root] (0) {$f$};
  \node[below right of=root] (1) {$g$};

  \node[below left of=0,xshift=3mm] (00) {$c_1$};
  \node[below right of=0,xshift=-3mm] (01) {$c_2$};
  \node[below left of=1,xshift=3mm] (10) {$f$};
  \node[below right of=1,xshift=-3mm] (11) {$c_1$};

  \node[below left of=10,xshift=3mm] (100) {$c_2$};
  \node[below right of=10,xshift=-3mm] (101) {$c_3$};

  \draw (root) -- (0);
  \draw (root) -- (1);
  \draw (0) -- (00);
  \draw (0) -- (01);
  \draw (1) -- (10);
  \draw (1) -- (11);
  \draw (10) -- (100);
  \draw (10) -- (101);

\end{tikzpicture}
\end{minipage}
\end{wrapfigure}
}[0]{
\begin{wrapfigure}[6]{r}{2.4cm}
\vspace*{-6mm}
\hspace*{-3mm}
\begin{minipage}{4cm}
\begin{tikzpicture}[]

  \node[] (root) {$f$};

  \node[below left of=root] (0) {$f$};
  \node[below right of=root] (1) {$g$};

  \node[below left of=0,xshift=3mm] (00) {$c_1$};
  \node[below right of=0,xshift=-3mm] (01) {$c_2$};
  \node[below left of=1,xshift=3mm] (10) {$f$};
  \node[below right of=1,xshift=-3mm] (11) {$c_1$};

  \node[below left of=10,xshift=3mm] (100) {$c_2$};
  \node[below right of=10,xshift=-3mm] (101) {$c_3$};

  \draw (root) -- (0);
  \draw (root) -- (1);
  \draw (0) -- (00);
  \draw (0) -- (01);
  \draw (1) -- (10);
  \draw (1) -- (11);
  \draw (10) -- (100);
  \draw (10) -- (101);

\end{tikzpicture}
\end{minipage}
\end{wrapfigure}
}

\begin{wrapfigure}[6]{r}{2.4cm}
\vspace*{-6mm}
\hspace*{-3mm}
\begin{minipage}{4cm}
\begin{tikzpicture}[]

  \node[] (root) {$f$};

  \node[below left of=root] (0) {$f$};
  \node[below right of=root] (1) {$g$};

  \node[below left of=0,xshift=3mm] (00) {$c_1$};
  \node[below right of=0,xshift=-3mm] (01) {$c_2$};
  \node[below left of=1,xshift=3mm] (10) {$f$};
  \node[below right of=1,xshift=-3mm] (11) {$c_1$};

  \node[below left of=10,xshift=3mm] (100) {$c_2$};
  \node[below right of=10,xshift=-3mm] (101) {$c_3$};

  \draw (root) -- (0);
  \draw (root) -- (1);
  \draw (0) -- (00);
  \draw (0) -- (01);
  \draw (1) -- (10);
  \draw (1) -- (11);
  \draw (10) -- (100);
  \draw (10) -- (101);

\end{tikzpicture}
\end{minipage}
\end{wrapfigure}

\paragraph{Binary Trees.} 
We use a~ranked alphabet~$\Sigma$ with binary symbols~$f$, $g$, \ldots{} and
constant symbols~$c_1$, $c_2$, \ldots{}.
A~\emph{binary tree} is a~\emph{ground term} over~$\Sigma$.
For instance, $T = f(f(c_1, c_2), g(f(c_2, c_3), c_1))$, shown in the right, represents
a~binary tree.
The set of \emph{nodes} of a~binary tree~$T$, denoted as~$N_T$, is defined
inductively as a~set of words over~$\{0,1\}$ such that for every constant
symbol~$c$, we define $N_c = \{\epsilon\}$, and for every binary symbol~$f$, we
define
$N_{f(T_0,T_1)} = \{\epsilon\} \cup \{a.w \mid a \in \{0,1\}  \land w\in
N_{T_a}\}$, where~$\epsilon$ is the empty word and~`$.$' is concatenation.
Each binary tree $T$ is associated with a~labeling function $L_T\colon \{0,1\}^* \to \Sigma$, which maps a~node in~$T$ to its label in~$\Sigma$.
A~tree is \emph{single-valued} if it contains only one constant symbol.

\paragraph{Tree Automata.}
We focus on tree automata on binary trees and refer the interested reader
to~\cite{tata} for a~general definition. 
A~\emph{(nondeterministic finite) tree automaton} (TA) is a~tuple $\aut = \tuple{Q, \Sigma,
	\Delta, \rootstates}$ where~$Q$ is a~finite set of \emph{states},
$\Sigma$~is a~ranked alphabet,
$\rootstates \subseteq Q$ is the set of \emph{root states}, and
$\Delta=\Delta_i \cup \Delta_l$ is a set of tree transitions consisting of the set~$\Delta_i$ of \emph{internal transitions} of the form $\transtree q f
{q_0,q_1}$ (for a~binary symbol~$f$) and the set~$\Delta_l$ of \emph{leaf transitions} of the form $\transtree q {c} {}$ (for a~constant symbol~$c$),
for $q, q_0, q_1 \in Q$. 
W.l.o.g., to simplify our correctness proof, we assume every
leaf transition of TAs has a~unique parent state, namely, for any two leaf transitions
$\transtree q {c} {},\transtree {q'} {c'} {} \in \Delta$, it holds that $c\neq c' \implies q\neq q'$.
We can conveniently describe TAs by providing only the set of root
states~$\rootstates$ and the set of transitions~$\Delta$.
The alphabet and states are implicitly defined as those that appear in~$\Delta$.
For example,
$\Delta = \{\transtree {q} {x_1} {q_1, q_0},\transtree {q} {x_1} {q_0, q_1}, \transtree {q_0} {\algzero} {},\transtree {q_1} {\algone} {}\}$ implies that $\Sigma=\{x_1,\algzero,\algone\}$ and $Q=\{q,q_0,q_1\}$. 

\paragraph{Run and Language.}
A \emph{run} of~$\aut$ on a~tree $T$ is another tree~$\run$ labeled with~$Q$ such that
\begin{inparaenum}[(i)]
  \item  $T$~and~$\run$ have the same set of nodes, i.e., $N_T=N_\run$,
  \item  for all leaf nodes $u\in N_T$, we have $\transtree {L_\run(u)} {L_T(u)} {} \in \Delta$, and
  \item  for all non-leaf nodes $v\in N_T$, we have $\transtree {L_\run(u)}
    {L_T(u)} {{L_\run(0.u)} ,{L_\run(1.u)} } \in \Delta$.
\end{inparaenum}
The run~$\run$ is \emph{accepting} if $L_\run(\epsilon) \in \rootstates$. 
The \emph{language} $\lang(\aut)$ of~$\aut$ is the set of trees accepted by~$\aut$, i.e.,
$\langof \aut = \{T \mid \text{there exists an accepting run of } \aut \text{
over } T\}$.
A~TA is (top-down) \emph{deterministic} if it has at most one root state and for
any of its transitions $\transtree {q} {x} {q_l, q_r}$ and $\transtree {q} {x}
{q'_l, q'_r}$ it holds that $q_l=q'_l$ and $q_r=q'_r$.
Any tree from the language of a~deterministic TA has a unique run in the TA.



\smallskip
\noindent
\begin{example}[Accepted tree and its run]\label{ex:tree_and_run}
Assume a~TA~$\aut_3$ with~$q$ as its single root state and the following transitions:
\begin{align*}
  \transtree {q} {x_1} {q^1_0, q^1_1} && \transtree {q^1_1} {x_2} {q^2_0, q^2_1} && \transtree {q^2_1} {x_3} {q_0, q_1} && \transtree {q_0} {\algzero} {}\\
  \transtree {q} {x_1} {q^1_1, q^1_0} && \transtree {q^1_1} {x_2} {q^2_1, q^2_0} && \transtree {q^2_1} {x_3} {q_1, q_0} && \transtree {q_1} {\algone} {}\\
  &&\transtree {q^1_0} {x_2} {q^2_0, q^2_0} && \transtree {q^2_0} {x_3} {q_0, q_0}
\end{align*}
%
%

%
\begin{center}
  \begin{tikzpicture}[]
  \tikzstyle{t1node}=[draw,rectangle,rounded corners=2mm,fill=blue!30]
  \tikzstyle{t2node}=[draw,rectangle,rounded corners=2mm,fill=red!30]
\tikzstyle{t3node}=[draw,rectangle,rounded corners=2mm,fill=green!30]

  \node[t1node] (root) {$x_1$};

  \node[below left of=root,xshift=-3mm] (0) {$x_2$};
  \node[below right of=root,xshift=3mm] (1) {$x_2$};
  \node[below left of=0,xshift=2mm] (00) {$x_3$};
  \node[below right of=0,xshift=-2mm,t2node] (01) {$x_3$};
  \node[below left of=1,xshift=2mm] (10) {$x_3$};
  \node[below right of=1,xshift=-2mm] (11) {$x_3$};

  \node[below left of=00,xshift=4mm] (000) {$\algone$};
  \node[below right of=00,xshift=-4mm] (001) {$\algzero$};
  \node[below left of=01,xshift=4mm] (010) {$\algzero$};
  \node[below right of=01,xshift=-4mm] (011) {$\algzero$};
  \node[below left of=10,xshift=4mm] (100) {$\algzero$};
  \node[below right of=10,xshift=-4mm] (101) {$\algzero$};
  \node[below left of=11,xshift=4mm] (110) {$\algzero$};
  \node[below right of=11,xshift=-4mm,t3node] (111) {$\algzero$};
  \draw (root) -- (0);
  \draw (root) -- (1);
  \draw (0) -- (00);
  \draw (0) -- (01);
  \draw (1) -- (10);
  \draw (1) -- (11);
  \draw (00) -- (000);
  \draw (00) -- (001);
  \draw (01) -- (010);
  \draw (01) -- (011);
  \draw (10) -- (100);
  \draw (10) -- (101);
  \draw (11) -- (110);
  \draw (11) -- (111);

\end{tikzpicture}

  \hspace{15mm}
  \begin{tikzpicture}[]
  \tikzstyle{t1node}=[draw,rectangle,rounded corners=2mm,fill=blue!30]
  \tikzstyle{t2node}=[draw,rectangle,rounded corners=2mm,fill=red!30]
\tikzstyle{t3node}=[draw,rectangle,rounded corners=2mm,fill=green!30]
  \node[right = 4.5cm of root,t1node] (qroot) {$q$};

  \node[below left of=qroot,t1node,xshift=-3mm] (q0) {$q^1_1$};
  \node[below right of=qroot,t1node,xshift=3mm] (q1) {$q^1_0$};
  \node[below left of=q0,xshift=2mm] (q00) {$q^2_1$};
  \node[below right of=q0,xshift=-2mm,t2node] (q01) {$q^2_0$};
  \node[below left of=q1,xshift=2mm] (q10) {$q^2_0$};
  \node[below right of=q1,xshift=-2mm] (q11) {$q^2_0$};

  \node[below left of=q00,xshift=4mm] (q000) {$q_1$};
  \node[below right of=q00,xshift=-4mm] (q001) {$q_0$};
  \node[below left of=q01,xshift=4mm,t2node] (q010) {$q_0$};
  \node[below right of=q01,xshift=-4mm,t2node] (q011) {$q_0$};
  \node[below left of=q10,xshift=4mm] (q100) {$q_0$};
  \node[below right of=q10,xshift=-4mm] (q101) {$q_0$};
  \node[below left of=q11,xshift=4mm] (q110) {$q_0$};
  \node[below right of=q11,xshift=-4mm,t3node] (q111) {$q_0$};
  \draw (qroot) -- (q0);
  \draw (qroot) -- (q1);
  \draw (q0) -- (q00);
  \draw (q0) -- (q01);
  \draw (q1) -- (q10);
  \draw (q1) -- (q11);
  \draw (q00) -- (q000);
  \draw (q00) -- (q001);
  \draw (q01) -- (q010);
  \draw (q01) -- (q011);
  \draw (q10) -- (q100);
  \draw (q10) -- (q101);
  \draw (q11) -- (q110);
  \draw (q11) -- (q111);

\end{tikzpicture}

\end{center}
Among others, $\aut_3$ accepts the above tree (in the left) with the
run (in the right).
Observe that all tree nodes satisfy the requirement of a valid run. E.g., 
the node $\greenlab{111}$ corresponds to the transition $\transtree {q_0} {\algzero} {}$, $\redlab{01}$ to $\transtree {q^2_0} {x_3} {q_0, q_0}$, and $\bluelab{\epsilon}$ to $\transtree{q} {x_1} {q^1_1, q^1_0}$, etc.

In $\aut_3$, we use states named~$q^n_0$ to denote only subtrees with all
zeros ($\algzero$) in leaves that can be generated from here, and states named~$q^n_1$ to denote only subtrees with a single~$\algone$ in the leaves that can be generated from it.
Intuitively, the TA accepts all trees of the height three with exactly
one~$\algone$ leaf and all other leaves~$\algzero$ (in our encoding of quantum states,
this might correspond to saying that~$\aut_3$ encodes an arbitrary computational basis
state of a~three-qubit system).
\hfill\qed

%
%
\end{example}
\smallskip
\hide{
\begin{minipage}{2cm}
\vspace*{-4mm}
\hspace*{1mm}
\scalebox{0.75}{
\begin{tikzpicture}[>=stealth',node distance=20mm]

  \pgfsetlinewidth{1bp}
  \tikzstyle{bddnode}=[draw,rectangle,rounded corners=2mm]
  \tikzstyle{bddleaf}=[bddnode]
  \tikzstyle{trans}=[->,>=stealth']
  \tikzstyle{translow}=[->,>=stealth',dashed]
  \tikzstyle{hidtrans}=[]
  \tikzstyle{ark}=[]
  \tikzstyle{blueark}=[fill=blue,opacity=0.3]
  \tikzstyle{redark}=[fill=red,opacity=0.3]

  \tikzstyle{outp}=[scale=0.75,fill=black!30,inner sep=0.6mm]

  \tikzstyle{bddnodex}=[bddnode,inner sep=1mm]


  \node[bddnodex] (q) {$q$};
  \node[above of=q,yshift=-10mm] (root) {};
  \node[bddnodex,below left of=q,yshift=-5mm] (q10) {$q^1_0$};
  \node[bddnodex,below right of=q,yshift=-5mm] (q11) {$q^1_1$};

  \node[bddnodex,below of=q10] (q20) {$q^2_0$};
  \node[bddnodex,below of=q11] (q21) {$q^2_1$};

  \node[bddnodex,below of=q20] (q0) {$q_0$};
  \node[bddnodex,below of=q21] (q1) {$q_1$};

  \draw (q) coordinate[xshift=-5mm,yshift=-5mm] (c1);
  \draw (q) coordinate[xshift= 5mm,yshift=-5mm] (c2);

  \draw (q10) coordinate[xshift=0mm,yshift=-5mm] (q10a);

  \draw (q11) coordinate[xshift=-5mm,yshift=-5mm] (q11a);
  \draw (q11) coordinate[xshift=0mm,yshift=-6mm] (q11b);

  \draw (q20) coordinate[xshift=0mm,yshift=-5mm] (q20a);

  \draw (q21) coordinate[xshift=-5mm,yshift=-5mm] (q21a);
  \draw (q21) coordinate[xshift=0mm,yshift=-6mm] (q21b);

  \draw (q0) coordinate[xshift=0mm,yshift=-8mm] (qc0);
  \draw (q1) coordinate[xshift=0mm,yshift=-8mm] (qc1);

  \draw[trans] (q) to (c1)
    to[bend right]
    coordinate[pos=0.6] (q1_looptr1)
    node[pos=0.9,left,xshift=-1mm,outp] {$0$}
    (q10);

  \draw[trans] (c1)
    to[bend right]
    coordinate[pos=0.3] (q1_looptr2)
    node[pos=0.9,above,yshift=1mm,outp] {$1$}
    (q11);

  \filldraw[blueark] (c1) to[bend right=15] (q1_looptr1) to[bend
  right=30] (q1_looptr2) to[bend left=10] cycle;
  \node at (c1) [xshift=-2mm,yshift=-4mm] {$x_1$};

  \draw[trans] (q) to (c2)
    to[bend left]
    coordinate[pos=0.3] (q1_looptr1)
    node[pos=0.9,above,yshift=1mm,outp] {$1$}
    (q10);

  \draw[trans] (c2)
    to[bend left]
    coordinate[pos=0.6] (q1_looptr2)
    node[pos=0.9,right,xshift=1mm,outp] {$0$}
    (q11);

  \filldraw[blueark] (c2) to[bend left=10] (q1_looptr1) to[bend
  right=30] (q1_looptr2) to[bend right=15] cycle;
  \node at (c2) [xshift=2mm,yshift=-4mm] {$x_1$};

  \draw[trans] (q10) to (q10a)
    to[bend right]
    coordinate[pos=0.6] (q10_looptr1)
    (q20);

  \draw[trans] (q10a)
    to[bend left]
    coordinate[pos=0.6] (q10_looptr2)
    (q20);

  \filldraw[redark] (q10a) to[bend right=15] (q10_looptr1) to[bend
  right=30] (q10_looptr2) to[bend right=15] cycle;
  \node at (q10a) [xshift=-0mm,yshift=-5mm] {$x_2$};

  \draw[trans] (q11) to (q11a)
    to[bend right=5]
    coordinate[pos=0.3] (q11a_looptr1)
    node[pos=0.8,above,xshift=-1mm,outp] {$0$}
    (q20);

  \draw[trans] (q11a)
    to[bend right]
    coordinate[pos=0.5] (q11a_looptr2)
    node[pos=0.9,below left,yshift=-1mm,outp] {$1$}
    (q21);

  \filldraw[blueark] (q11a) to[bend right=5] (q11a_looptr1) to[bend
  right=30] (q11a_looptr2) to[bend left=15] cycle;
  \node at (q11a) [xshift=-3mm,yshift=-4mm] {$x_2$};

  \draw[trans] (q11) to (q11b)
    to[bend left]
    coordinate[pos=0.2] (q11b_looptr1)
    node[pos=0.8,above,yshift=1mm,outp] {$1$}
    (q20);

  \draw[trans] (q11b)
    to[bend left]
    coordinate[pos=0.6] (q11b_looptr2)
    node[pos=0.9,right,xshift=1mm,outp] {$0$}
    (q21);

  \filldraw[blueark] (q11b) to[bend left=10] (q11b_looptr1) to[bend
  right=30] (q11b_looptr2) to[bend right=15] cycle;
  \node at (q11b) [xshift=0mm,yshift=-5mm] {$x_2$};

  \draw[trans] (q20) to (q20a)
    to[bend right]
    coordinate[pos=0.6] (q20_looptr1)
    (q0);

  \draw[trans] (q20a)
    to[bend left]
    coordinate[pos=0.6] (q20_looptr2)
    (q0);

  \filldraw[redark] (q20a) to[bend right=15] (q20_looptr1) to[bend
  right=30] (q20_looptr2) to[bend right=15] cycle;
  \node at (q20a) [xshift=-0mm,yshift=-5mm] {$x_3$};

  \draw[trans] (q21) to (q21a)
    to[bend right=5]
    coordinate[pos=0.3] (q21a_looptr1)
    node[pos=0.8,above,xshift=-1mm,outp] {$0$}
    (q0);

  \draw[trans] (q21a)
    to[bend right]
    coordinate[pos=0.5] (q21a_looptr2)
    node[pos=0.9,below left,yshift=-1mm,outp] {$1$}
    (q1);

  \filldraw[blueark] (q21a) to[bend right=5] (q21a_looptr1) to[bend
  right=30] (q21a_looptr2) to[bend left=15] cycle;
  \node at (q21a) [xshift=-3mm,yshift=-4mm] {$x_3$};

  \draw[trans] (q21) to (q21b)
    to[bend left]
    coordinate[pos=0.2] (q21b_looptr1)
    node[pos=0.8,above,yshift=1mm,outp] {$1$}
    (q0);

  \draw[trans] (q21b)
    to[bend left]
    coordinate[pos=0.6] (q21b_looptr2)
    node[pos=0.9,right,xshift=1mm,outp] {$0$}
    (q1);

  \filldraw[blueark] (q21b) to[bend left=10] (q21b_looptr1) to[bend
  right=30] (q21b_looptr2) to[bend right=15] cycle;
  \node at (q21b) [xshift=0mm,yshift=-5mm] {$x_3$};

  \draw[trans] (root) to (q);
  \draw[trans] (q0) to node[right] {$c_0$} (qc0);
  \draw[trans] (q1) to node[right] {$c_1$} (qc1);


  %

\end{tikzpicture}

}
\end{minipage}}

\newcommand{\figBasicGates}[0]{
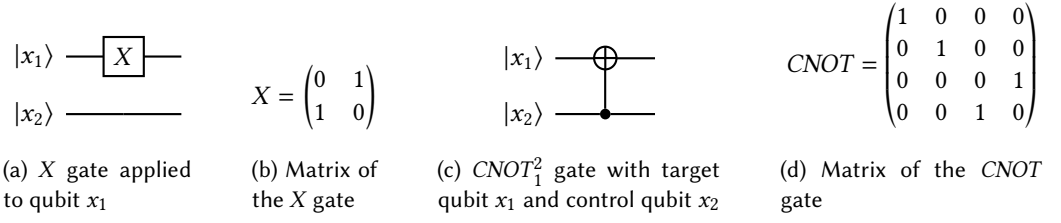
\begin{figure}[t]
\begin{subfigure}[b]{0.18\linewidth}
\begin{center}
\begin{quantikz}
  \lstick{$\ket{x_1}$} & \gate{X}      & \qw \\
  \lstick{$\ket{x_2}$} & \qw           & \qw
\end{quantikz}
\end{center}
\caption{$X$ gate applied to qubit~$x_1$}
\label{fig:Xgate}
\end{subfigure}
\hfill
\begin{subfigure}[b]{0.12\linewidth}
\begin{center}
\begin{equation*}
X = \begin{pmatrix}
  0 & 1 \\
  1 & 0
\end{pmatrix}
\end{equation*}
\end{center}
\caption{Matrix of the~$X$ gate}
\label{fig:Xmatrix}
\end{subfigure}
\hfill
\begin{subfigure}[b]{0.27\linewidth}
\begin{center}
\begin{quantikz}
  \lstick{$\ket{x_1}$} & \targ{} & \qw \\
  \lstick{$\ket{x_2}$} & \ctrl{-1}  & \qw
\end{quantikz}
\end{center}
\caption{$\cnot^2_1$ gate with target qubit~$x_1$ and control qubit~$x_2$ }
\label{fig:CNOTgate}
\end{subfigure}
\hfill
\begin{subfigure}[b]{0.25\linewidth}
\begin{center}
\begin{equation*}
\cnot = \begin{pmatrix}
1 & 0 & 0 & 0 \\
0 & 1 & 0 & 0 \\
0 & 0 & 0 & 1 \\
0 & 0 & 1 & 0
\end{pmatrix}
\end{equation*}
\end{center}
\caption{Matrix of the $\cnot$ gate}
\label{fig:CNOTmatrix}
\end{subfigure}
\vspace{-2mm}
\caption{Applications of $X$ and $\cnot$ gates and their matrices}
\label{fig:basic_gates}
\vspace*{-3mm}
\end{figure}
}

\newcommand{\tableUpdateFormulae}[0]{
\begin{table}[t]
	\begin{center}
		\caption{Symbolic update formulae for the considered quantum gates;
    $x_c$ and $x_c'$ denote control bits (if they exist), and $x_t$ denotes the
    target bit.}
    \vspace{-0.3cm}
		\scalebox{0.85}{
			\begin{tabular}{c|l}
				Gate & Update \\
				\hline
				X$_t$ & $B_{x_t}\cdot T_{\overline{x_t}} + B_{\overline{x_t}}\cdot T_{x_t}$ \\
				Y$_t$ & $\omega^2\cdot (B_{x_t}\cdot T_{\overline{x_t}} - B_{\overline{x_t}}\cdot T_{x_t})$ \\
				Z$_t$ & $B_{\overline{x_t}}\cdot T - B_{x_t}\cdot T$ \\
				H$_t$ & $(T_{\overline{x_t}} + B_{\overline{x_t}}\cdot T_{x_t} - B_{x_t}\cdot T) / \sqrt2$ \\
				S$_t$ & $B_{\overline{x_t}}\cdot T + \omega^2\cdot B_{x_t}\cdot T$ \\
				T$_t$ & $B_{\overline{x_t}}\cdot T + \omega\cdot B_{x_t}\cdot T$ \\
				Rx($\frac{\pi}{2}$)$_t$ & $(T - \omega^2\cdot (B_{x_t}\cdot T_{\overline{x_t}} + B_{\overline{x_t}}\cdot T_{x_t})) / \sqrt2$ \\
				Ry($\frac{\pi}{2}$)$_t$ & $(T_{\overline{x_t}} + B_{x_t}\cdot T - B_{\overline{x_t}}\cdot T_{x_t}) / \sqrt2$ \\
				CNOT$^c_t$ & $B_{\overline{x_c} }\cdot T +
				B_{x_c}\cdot (B_{\overline{x_t}}\cdot T_{x_t} + B_{x_t}\cdot T_{\overline{x_t}})$ \\
				CZ$^c_t$ & $B_{\overline{x_c}}\cdot  T+
				B_{x_c}\cdot (B_{\overline{x_t}}\cdot T - B_{x_t}\cdot T )$\\
				Toffoli$^{c,c'}_t$ & $B_{\overline{x_c}}\cdot T + 
				B_{x_{c}}\cdot (B_{\overline{x_{c'}}}\cdot T +
				B_{x_{c'}} \cdot (B_{\overline{x_t}}\cdot T_{x_t} + B_{x_t}\cdot T_{\overline{x_t}}))$ \\

			\end{tabular}
		}\label{tab:quantum_gates}
	\end{center}
  \vspace*{-3mm}
\end{table}
}

\vspace{-0.0mm}
\section{Encoding Sets of Quantum States with Tree Automata}\label{sec:ta_predicate}
\vspace{-0.0mm}
Observe that we can use (full) binary trees to encode functions $\{0,1\}^n \to \mathbb{Z}^5$, i.e., the function representation of quantum states. For instance, the tree
\begin{equation}
x_1(x_2(x_3(\algone,\algzero),x_3(\algzero,\algzero)),x_2(x_3(\algzero,\algzero),x_3(\algzero,\algzero)))
\end{equation}
encodes the function $T$ where $T(000)=\algone$ and $T(i)=\algzero$
for all $i \in \{0,1\}^3 \setminus \{000\}$.
Since TAs can concisely represent sets of binary trees, they can be used to
encode sets of quantum states. 

\begin{example}[Concise representation of sets of quantum states by TAs]\label{ex:concise_ta}
Here we consider the set of $n$-qubit quantum states $Q_n = \{\ket{i}\mid i \in
  \{0,1\}^n \}$, i.e., the set of all basis states.
  Note that $|Q_n| = 2^n$, which is exponential.
  Representing all possible basis states naively would require storing $2^{2^n}$ complex numbers.
  TAs can, however, represent such a~set much more efficiently. 

For the case when $n=3$, the set $Q_3$ can be represented by the TA $\aut_3$ from
  \cref{ex:tree_and_run} with $3n+1$ transitions (i.e., linear-sized).
The TA $\aut_3$ can be generalized to encode the set of all $n$-qubit states $Q_n =
  \{\ket{i}\mid i \in \{0,1\}^n \}$ for each $n \in \mathbb{N}$ by setting the
  transitions to 
\begin{align*}
  \transtree {q} {x_1} {q^1_0, q^1_1} && \transtree {q^1_1} {x_2} {q^2_0, q^2_1} &&\ldots &&\transtree {q^{n-1}_1} {x_n} {q_0, q_1} && \transtree {q_0} {\algzero} {}\\
  \transtree {q} {x_1} {q^1_1, q^1_0} && \transtree {q^1_1} {x_2} {q^2_1, q^2_0} &&\ldots &&\transtree {q^{n-1}_1} {x_n} {q_1, q_0} && \transtree {q_1} {\algone} {}\\
  &&\transtree {q^1_0} {x_2} {q^2_0, q^2_0} && \ldots&& \transtree {q^{n-1}_0} {x_n} {q_0, q_0}
\end{align*}
We denote the resulting TA by $\aut_n$.
Notice that although $Q_n$ has $2^n$ quantum states, $\aut_n$ has only $2n+1$ states and $3n+1$ transitions. 
\qed
\end{example}

Formally a~TA~$\aut$ recognizing a set of quantum states is a tuple $\tuple{Q,
\Sigma, \Delta, \rootstates}$, whose alphabet~$\Sigma$ can be partitioned into
two classes of symbols:
binary symbols $x_1,\ldots,x_n$ and 
a~finite set of leaf symbols $\Sigma_c\subseteq \mathbb{Z}^5$ representing all
possible amplitudes of quantum states in terms of computational bases.
By slightly abusing the notation, for a full binary tree $T\in \lang{(\aut)}$, we also
use~$T$ to denote the function $\{0,1\}^n \to  \mathbb{Z}^5$ that maps
a~computational basis to the corresponding amplitude of $T$'s quantum state.
The two meanings of~$T$ are used interchangeably throughout the paper.


\paragraph{Remark}
Note that TAs allow representation of \emph{infinite} languages, yet we
only use them for \emph{finite} sets, which might seem like the
model is overly expressive.
We, however, stick to TAs for the following two reasons:
\begin{inparaenum}[(i)]
  \item  there is an existing rich toolbox for TA manipulation and
    minimization, e.g.,~\cite{tata,lengal2012vata,AbdullaBHKV08,AbdullaHK07}, and
  \item  we want to have a~robust formal model for  extending our framework to
    parameterized verification, i.e., proving
    that an $n$-qubit algorithm is correct for any~$n$, which will require us to deal with infinite languages (cf., the framework of \emph{regular tree model checking}~\cite{AbdullaJMd02,armc}).
\end{inparaenum}

Moreover, we chose \emph{full} binary trees as the representation of quantum states.
We thought about using a more compact structure, e.g., allowing jump over a
transition with common left and right children (similar to ROBDD's elimination
of a~node with isomorphic subtrees~\cite{Bryant86}).
We decided against that because TAs already allow an efficient representation of
common children via a transition to the same left and right states, e.g.,
$\transtree{q}{x}{q',q'}$. The benefit of using a more compact tree
representation is thus limited.
Using a~more efficient data structure would also make the algorithms in the
following sections harder to understand.
We therefore leave the investigation of designing a~more efficient data
structure to our future work.



\vspace{-0.0mm}
\section{Symbolic Representation of Quantum Gates}\label{sec:symbolic_gates}
\vspace{-0.0mm}
With TAs used to concisely represent sets of quantum states, the next
task is to capture the effects of applying quantum gates on this representation.
When quantum states are represented as vectors, gates are represented as
matrices and gate operations are matrix multiplications. When states are
represented as binary trees, we need a new representation for quantum gates and
their operations. 
Inspired by the work of~\cite{TsaiJJ21}, we introduce \emph{symbolic update
formulae}, which are formulae that describe how a~gate transforms a~tree
representing a quantum state.
Later, we will lift the tree update operation to a set of trees encoded in a~TA.

We use the algebraic representation of quantum states from~\cref{eq:algebraic_representation} also for their
symbolic handling.
For instance, consider a~system with qubits $x_1$, $x_2$ and its state
\begin{equation}\label{eq:sp_state}
T= c_{00} \cdot \ket{00} + c_{01} \cdot \ket{01} +
   c_{10} \cdot \ket{10} + c_{11} \cdot \ket{11}
\end{equation}
for $c_{00}, c_{01}, c_{10}, c_{11} \in \integers^5$, four complex numbers represented in the
algebraic way.
The result of applying the~$X$ gate (the quantum version of the $\mathit{NOT}$ gate) on
qubit~$x_1$ (cf.\ \cref{fig:Xgate}) is $(c_{10}, c_{11}, c_{00}, c_{01})^T$
(cf.\ \cref{eq:apply_X}).
Intuitively, we observe that the effect of the gate is a~permutation of the
computational basis states that swaps the amplitudes of states where the $x_1$'s
value is~1 with states where the $x_1$'s
value is~0 (and the values of qubits other than~$x_1$
stay the same).
Concretely, it swaps the amplitudes of the pairs
$(\ket{{\mathbf{0}}0},\ket{{\mathbf{1}}0})$ and
$(\ket{{\mathbf{0}}1},\ket{{\mathbf{1}}1})$ to obtain
the quantum state
\begin{equation}\label{eq:x_state}
X(T) = c_{10}\cdot \ket{00} + c_{11}\cdot \ket{01} + c_{00}\cdot \ket{10} + c_{01}\cdot \ket{11}.
\end{equation}

\figBasicGates

Instead of executing the quantum gate by performing a~matrix-vector
multiplication, we will capture its semantics
\emph{symbolically} by directly manipulating the tree function $T\colon \{0,1\}^n \to
 \integers^5$.
For this, we will use the following operators on~$T$, parameterized by
a~qubit~$x_t$ ($t$~for ``target''):
\begin{align*}
  T_{x_t}(b_{n} \ldots b_t \ldots b_1) &{}= T(b_{n} \ldots 1 \ldots b_1) &
  B_{x_t}(b_{n} \ldots b_t \ldots b_1) & {}=b_t \\
  T_{\overline{x_t}}(b_{n} \ldots b_t \ldots b_1) &{}= T(b_{n} \ldots 0 \ldots b_1) &
  B_{\overline{x_t}}(b_{n} \ldots b_t \ldots b_1) & {}= \overline{b_t} .\\
  \multicolumn{2}{c}{(Projection) } & \multicolumn{2}{c}{(Restriction)}
\end{align*}
In the previous, $\overline{b_t}$~denotes the complement of the bit~$b_t$ (i.e.,
$\overline 0 = 1$ and $\overline 1 = 0$).
Intuitively, $T_{x_t}$ and $T_{\overline{x_t}}$ fix the value of qubit~$x_t$ to
be~1 and~0 respectively.
On the other hand, $B_{x_t}$~and $B_{\overline{x_t}}$ just take the value of
qubit~$x_t$ (or its negation) in the computational basis state.

Equipped with the operators, we can now proceed to express the semantics
of~$X$ symbolically.
Let us first look at the first two summands on the right-hand side of \cref{eq:x_state}: 
$T^0 = c_{10}\cdot \ket{00} + c_{11}\cdot \ket{01}$.
These summands can be obtained by manipulating the input function~$T$ in the
following way:
\begin{equation}
T^0 = B_{\overline{x_1}} \cdot T_{x_1}.
\end{equation}
Here, $T^0 = B_{\overline{x_1}} \cdot T_{x_1}$ is a~shorthand for
$T^0(b_{1} \ldots b_n) = B_{\overline{x_1}}(b_{1} \ldots b_n) \cdot
T_{x_1}(b_{1} \ldots b_n)$.
When we view~$T$ as a tree, the operation $T_{x_1}$ essentially copies the
right subtree of every~$x_1$-node to its left subtree, and
$B_{\overline{x_1}}\cdot {T_{x_1}}$ makes all leaves in every right subtree of $T_{x_1}$'s $x_1$-node zero.
%
%
This would give us
\begin{equation}
T^0 \quad=\quad
  c_{10} \cdot \ket{00} + c_{11} \cdot \ket{01} +
  0 \cdot \ket{10} + 0 \cdot \ket{11}
  \quad=\quad
  c_{10} \cdot \ket{00} + c_{11} \cdot \ket{01}.
\end{equation}
On the other hand, the last two summands in the right-hand side of \cref{eq:x_state}, i.e., 
$T^1 = c_{00}\cdot \ket{10} + c_{01}\cdot \ket{11}$, could be obtained by
manipulating~$T$ as follows:
\begin{equation}
T^1 = B_{x_1} \cdot T_{\overline{x_1}}.
\end{equation}
The tree view of $B_{x_1}\cdot {T_{\overline{x_1}}}$ is symmetric to $B_{\overline{x_1}} \cdot T_{x_1}$, which would give us the following state:
\begin{equation}
T^1 \quad=\quad
  0 \cdot \ket{00} + 0 \cdot \ket{01} +
  c_{00} \cdot \ket{10} + c_{01} \cdot \ket{11}
  \quad=\quad
  c_{00} \cdot \ket{10} + c_{01} \cdot \ket{11}.
\end{equation}

Finally, by summing~$T^0$ and $T^1$, we obtain \cref{eq:x_state}:
%
$
  T^0 + T^1 = 
  c_{10}\cdot \ket{00} + c_{11}\cdot \ket{01} +
  c_{00}\cdot \ket{10} + c_{01}\cdot \ket{11}.
$
%
That is, the semantics of the $X$~gate could be expressed using the following
symbolic formula:
\begin{equation}\label{eq:x_final}
  X_1(T) = 
  B_{\overline{x_1}} \cdot T_{x_1} +
  B_{x_1} \cdot T_{\overline{x_1}}.
\end{equation}
Observe that the sum effectively swaps the left and right subtrees of each
$x_1$-node.

\tableUpdateFormulae	

For multi-qubit gates, the update formulae get more complicated, since they
involve more than one qubit.
Consider, e.g., the ``controlled-NOT'' gate $\cnot^c_t$ (see \cref{fig:CNOTgate}
for the graphical representation and \cref{fig:CNOTmatrix} for its semantics).
The $\cnot^c_t$ gate uses~$x_t$ and~$x_c$ as the target and control qubit respectively.
Intuitively, it ``flips'' the target qubit's value when the control qubit's
value is~1 and keeps the original value if it is~0.
Similarly as for the $X$ gate, we can deduce a~symbolic formula for the update
done by a~$\cnot$ gate:
\begin{equation}
\cnot^c_t (T) =
  B_{\overline{x_c}}\cdot T +
  B_{x_c}\cdot (B_{\overline{x_t}}\cdot T_{x_t} +
  B_{x_t}\cdot T_{\overline{x_t}}).
\end{equation}
The sum consists of the following two summands:
\begin{itemize}
  \item  The summand $B_{\overline{x_c}}\cdot T$ says that when the control qubit is~0,
    $x_t$ and $x_c$ stay the same.
  \item  The summand $B_{x_c}\cdot (B_{\overline{x_t}}\cdot T_{x_t}
    +B_{x_t}\cdot T_{\overline{x_t}})$ handles the case when~$x_c$ is~1.
    In such a~case, we apply the $X$~gate on~$x_t$ (observe that the inner 
    term is the update formula of $X_t$ in~\cref{eq:x_final}).
\end{itemize}

\hide{
Let us now show how to compute $\cnot^2_1$ with the symbolic state~$T$ from \cref{eq:sp_state}:
\begin{align*}
B_{\overline{x_2}}\cdot T &
  {}=
  a_{00}\cdot \ket{00} + a_{01} \cdot \ket{01} + 0\cdot
  \ket{10} + 0\cdot \ket{11}&\text{(case $x_2$ is 0)}\\
  &
  {}=
  a_{00}\cdot \ket{00} + a_{01} \cdot \ket{01}
  \\
B_{x_2}\cdot B_{\overline{x_1}}\cdot T_{x_1} +
  B_{x_2}\cdot B_{x_1}\cdot T_{\overline{x_1}}
  &
  {}=
  0\cdot \ket{00} + 0\cdot \ket{01} +
  a_{11}\cdot \ket{10} + a_{10}\cdot \ket{11}
  &\text{(case $x_2$ is 1)}\\
  &
  {}= 
  a_{11}\cdot \ket{10} + a_{10}\cdot \ket{11}
\end{align*}
The sum of these two gives us the following expected result
\begin{equation}
\cnot^2_1 (T)=a_{00}\cdot \ket{00} + a_{01} \cdot \ket{01} +
  a_{11}\cdot \ket{10} + a_{10}\cdot \ket{11}.
\end{equation}}

One can obtain symbolic update formulae for other quantum gates in a~similar way.
In \cref{tab:quantum_gates} we give the formulae for the gates supported by our
framework (see \cref{sec:quantum_gates_semantics} for their usual
definition using matrices).
For a gate G,
we use the superscripts~$c$ and~$c'$ to denote that
$x_c$ and~$x_c'$ are the gate's control qubits (if they exist) and the subscript~$t$
to denote that~$x_t$ is the target bit (e.g., G$^{c,c'}_t$).
We note that the supported set of gates is much larger than is required to
achieve (approximate) universal quantum computation (for which it suffices to have, e.g.,
\begin{inparaenum}[(i)]
  \item  Clifford gates ($H$, $S$, and $\cnot$) and~$T$
    (see~\cite{BoykinMPRV00}) or
  \item  Toffoli and~$H$ (see~\cite{Aharonov03})).
\end{inparaenum}

\begin{theorem}\label{thm:}
  The symbolic update formulae in \cref{tab:quantum_gates} are correct (w.r.t.\
  the standard semantics of quantum gates, cf.\ \cite{NielsenC16}).
\end{theorem}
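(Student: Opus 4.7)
The plan is to verify each row of Table~\ref{tab:quantum_gates} by evaluating both the matrix semantics (from \cref{sec:quantumcircuits}) and the symbolic formula pointwise at an arbitrary basis state $b = b_n \ldots b_1$, and checking that they produce identical amplitudes. Concretely, I would fix an $n$-qubit state $T = \sum_{j \in \{0,1\}^n} a_j \ket{j}$, viewed as a function $T\colon \{0,1\}^n \to \integers^5$, and for a gate G acting on target $x_t$ (and possibly controls) I would compute $\mathrm{G}(T)(b)$ in two ways: once by unfolding the Kronecker-product definition $U' = \identity_{n-t} \tensor U \tensor \identity_{t-1}$ and reading off the row of $U' \cdot v$ indexed by $b$, and once by unfolding the symbolic operators $T_{x_t}, T_{\overline{x_t}}, B_{x_t}, B_{\overline{x_t}}$ at $b$ and summing.

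The key observation that drives the proof is that the Kronecker structure makes the gate act as the identity on all qubits other than the targets, so only amplitudes of basis states that agree with $b$ outside $\{x_t\}$ (and $\{x_c, x_{c'}\}$ for controlled gates) can contribute; this is exactly the action of the projection operators $T_{x_t}$ and $T_{\overline{x_t}}$, which slice $T$ along the $x_t$-coordinate. Meanwhile $B_{x_t}$ and $B_{\overline{x_t}}$ act as indicator selectors that pick the correct row of the unitary at $b$. To keep the argument uniform, I would first establish a single lemma: for any $2\times 2$ unitary $U = (u_{ij})$ acting on qubit $x_t$, the formula
\begin{equation*}
  u_{00}\cdot B_{\overline{x_t}}\cdot T_{\overline{x_t}} + u_{01}\cdot B_{\overline{x_t}}\cdot T_{x_t} + u_{10}\cdot B_{x_t}\cdot T_{\overline{x_t}} + u_{11}\cdot B_{x_t}\cdot T_{x_t}
\end{equation*}
equals the function representation of $(\identity\tensor U\tensor \identity)\cdot v$. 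Each single-qubit row of the table (X, Y, Z, H, S, T, Rx, Ry) then follows by substituting the concrete entries of~$U$ and simplifying, using the algebraic identities $\omega^2 = i$ and $\omega^4 = -1$ to match the tabulated scalars.

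For the controlled gates (CNOT, CZ, Toffoli), I would exploit the standard block-diagonal decomposition: a controlled-$U$ with control $x_c$ equals the direct sum of $\identity$ (on the $x_c=0$ subspace) and $U$ (on the $x_c=1$ subspace). Under the symbolic encoding, $B_{\overline{x_c}}\cdot T$ realizes the $\identity$ branch and $B_{x_c}\cdot(\cdots)$ realizes the $U$ branch, so correctness reduces to the single-qubit lemma applied inside the $B_{x_c}$ factor. Toffoli is handled by the same pattern nested once more, with the inner selector $B_{x_{c'}}$ distinguishing the two control settings.

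The main obstacle is not conceptual but clerical: making sure the pointwise multiplications of selectors and projections commute and compose as expected, that the $1/\sqrt 2$ prefactors and the $\omega^k$ scalars in the Y, S, T, Rx, Ry rows align exactly with the unitary entries, and that the bit at position $t$ is consistently handled when $t$ lies in the interior rather than at an endpoint of the qubit ordering. A clean organization that treats the single-qubit lemma once and derives every entry by substitution avoids repeating this bookkeeping eight times, which is where a direct case-by-case proof would otherwise become error-prone.
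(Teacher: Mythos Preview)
The paper states this theorem without proof; no argument for it appears in the body or the appendix (the appendix contains proofs only for Sections~5 and~6). Your proposal therefore cannot be compared to a paper proof, but as a standalone argument it is sound: the single-qubit lemma you isolate is exactly the right abstraction, the instantiation for each row is mechanical, and the block-diagonal decomposition for controlled gates is the standard reduction. One small caution: the paper's index conventions for the Kronecker product and the qubit ordering (cf.\ \cref{sec:quantumcircuits} and the $X \tensor I$ example) are a bit delicate, so when you write $\identity_{n-t} \tensor U \tensor \identity_{t-1}$ make sure it matches the MSBF convention the paper actually uses; otherwise the selectors $B_{x_t}, B_{\overline{x_t}}$ will pick the transposed row of~$U$.
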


\paragraph{A note on expressivity.}
The expressivity of our framework is affected by the following factors:
\begin{enumerate}
  \item  \emph{Algebraic complex number representation $(a,b,c,d,k)$}:
    This representation can arbitrarily closely approximate any complex number:
    First, note that $\omega= \cos{45^\circ}+i\sin{45^\circ}= \frac 1
    {\sqrt{2}} + i \frac 1 {\sqrt{2}}$ and when $b=d=0$, we have $(a,0,c,0,k) =
    \frac 1 {\sqrt{2}^k}(a  + c \omega^2 )=\frac a {\sqrt{2}^k} +\frac {ci}
    {\sqrt{2}^k}$.
    Then any complex number can be approximated arbitrarily closely by picking
    suitable $a$, $c$, and $k$.

  \item  \emph{Supported quantum gates}:
    We covered all standard quantum gates supported in modern quantum computers
    except parameterized rotation gate.
    From Solovay-Kitaev theorem~\cite{dawson2005solovay}, gates performing
    rotations by $\frac{\pi}{2^k}$ can be approximated with an error rate
    $\epsilon$ with $\mathcal{O}(\log^{3.97}(\frac{1}{\epsilon}))$-many gates
    that we support. 

  \item  \emph{Tree automata structure}:
    We use non-deterministic transitions of tree automata to represent a~set of
    trees compactly.
    Nevertheless, we can currently encode only a finite set of states,
    so encoding, e.g., all quantum states that satisfy $|\ket{10}| =
    |\ket{01}|$ is future work.
\end{enumerate}




In the next two sections, we discuss how to lift the tree update operation to a~set of trees encoded in a~TA.
Our framework allows different instantiations.
We will introduce two in this paper,
namely the
\begin{inparaenum}[(i)]
  \item  \emph{permutation-based}~(\cref{sec:TA_op_permutation}) and
  \item  \emph{composition-based}~(\cref{sec:TA_op}) approach.
\end{inparaenum}
The former is simple, efficient, and works for all but the H$_t$,
Rx($\frac{\pi}{2}$)$_t$, and Ry($\frac{\pi}{2}$)$_t$ gates from
\cref{tab:quantum_gates} (those whose effect is a permutation of tree
leaves, i.e., for gates whose matrix contains only one non-zero element in each row, potentially with a~constant scaling of amplitude), while the latter
supports all gates in the table but is less efficient.
The two approaches are compatible with each other, so one can, e.g., choose to
use the permutation-based approach by default and for unsupported gates fall
back on the composition-based approach.


\newcommand{\algPGateSimple}[0]{
\begin{algorithm}[t]
\KwIn{A TA $\aut=\tuple{Q, \Sigma,
	\Delta, \rootstates}$ and a gate U}
\KwOut{The TA U$(\aut)$}
\eIf(\tcp*[h]{need constant scaling}){$\mathrm{U} \in\{Y_t,Z_t,S_t,T_t\}$}{
Let $a_1$ and $a_0$ be the left and right scalar in U$(T)=  a_1 \cdot B_{x_t}\cdot T_1+a_0 \cdot B_{\overline{x_t}} \cdot T_0 $\;
$\aut_1:=\tuple{Q', \Sigma,
	\Delta_1, \rootstates'}$, where $\Delta_1= \Delta'_i \cup \{\transtree {q'} {a_1\cdot c}{} \mid \transtree {q}{c}{}\in \Delta_l\}$\;
$\aut^R := \tuple{Q\cup Q', \Sigma,
	\Delta^R\cup \Delta_1, \rootstates}$, where 
	\begin{align*} 
        \Delta^R = {} & \{\transtree {q} {a_0\cdot c}{} \mid \transtree
        {q}{c}{}\in \Delta_l\} \cup {}\\ 
                  & \{\transtree {q} {x_k}{q_0,q_1} \mid \transtree {q} {x_k}{q_0,q_1}\in \Delta_i \wedge k\neq t\}\cup{}\\
                  & \{\transtree {q} {x_k}{q_0,q'_1} \mid \transtree {q} {x_k}{q_0,q_1}\in \Delta_i \wedge k = t\}
    \end{align*}
}{$\aut^R:=\aut$; \tcp{when $\mathrm{U}=\mathrm{X}_t$}}

\If(\tcp*[h]{need swapping}){$\mathrm{U} \in\{X_t,Y_t\}$}{
Assume $\aut^R = \tuple{Q^R, \Sigma, \Delta^R, \rootstates}$\;
$\aut^R:=\tuple{Q^R, \Sigma,
	\Delta^{R}_1, \rootstates}$, where 
		\begin{align*} 
        \Delta^{R}_1= {} & \{\transtree {q} {x_k}{q_0,q_1} \mid \transtree {q} {x_k}{q_0,q_1}\in \Delta^R_i \wedge k\neq t\} \cup {}\\
                & \{\transtree {q} {x_k}{q_1,q_0} \mid \transtree {q} {x_k}{q_0,q_1}\in \Delta^R_i \wedge k = t\} \cup \{t \mid t \in \Delta^R_l\} 
    \end{align*}
}
\Return {$\aut^R$}\;
\caption{Algorithm for constructing U$(\aut)$, for $\mathrm{U}\in \{X_t,Y_t,Z_t,S_t,T_t\}$}
\label{algo:p_gate_single}
\end{algorithm}
}

\algPGateSimple

\vspace{-0.0mm}
\section{Permutation-based Encoding of Quantum Gates}\label{sec:TA_op_permutation}
\vspace{-0.0mm}

%

Let us first look at the simplest gate X$_t(T) =  B_{x_t}\cdot{T_{\overline{x_t}}}+B_{\overline{x_t}} \cdot T_{x_t}$.
Recall that in \cref{sec:symbolic_gates}, we showed that the formula essentially
swaps the left and right subtrees of each $x_t$-labeled node.
For a~TA~$\aut$, we can capture the effect of applying X$_t$ to all states in $\lang(\aut)$ by swapping the left and the right children of all $x_t$-labeled transitions $\transtree {q} {x_t} {q_0, q_1}$, i.e., update them to $\transtree {q} {x_t} {q_1, q_0}$. We use  X$_t(\aut)$ to denote the TA constructed following this procedure. 

\begin{theorem}
$\lang( \mathrm{X}_t(\aut) )  = \{\mathrm{X}_t(T) \mid  T\in \lang(\aut) \}$.
\end{theorem}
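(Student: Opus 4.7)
The plan is to reduce the statement to a correspondence between accepting runs of $\aut$ and accepting runs of $\mathrm{X}_t(\aut)$, mediated by a purely syntactic ``swap-at-$x_t$'' operation on trees. Concretely, for a full binary tree $T$ I will define $\sigma_t(T)$ to be the tree obtained by swapping the left and right subtrees of every node whose label is $x_t$. The first step is to show that $\sigma_t$ realizes the symbolic update $\mathrm{X}_t$ on the function-view of quantum states, i.e., $\sigma_t(T) = \mathrm{X}_t(T)$ as functions $\{0,1\}^n \to \integers^5$. This is a straightforward induction on the tree height: for any assignment $b_1 \ldots b_n$, traversing $\sigma_t(T)$ according to $b_1 \ldots b_n$ yields the same leaf as traversing $T$ according to $b_1 \ldots \overline{b_t} \ldots b_n$, which by the projection/restriction definitions of \cref{sec:symbolic_gates} is exactly $B_{x_t}\cdot T_{\overline{x_t}} + B_{\overline{x_t}} \cdot T_{x_t}$ evaluated at $b_1 \ldots b_n$.

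The second step is to exhibit a bijection between accepting runs of $\aut$ on trees $T$ and accepting runs of $\mathrm{X}_t(\aut)$ on trees $\sigma_t(T)$. Given an accepting run $\rho$ of $\aut$ on $T$, define $\rho'$ by applying the same swap-at-$x_t$ to the state-labels of $\rho$. Because $\mathrm{X}_t(\aut)$ is obtained from $\aut$ by replacing every transition $\transtree{q}{x_t}{q_0,q_1}$ with $\transtree{q}{x_t}{q_1,q_0}$ and leaving all other transitions and the set $\rootstates$ unchanged, one can verify node by node that $\rho'$ is an accepting run of $\mathrm{X}_t(\aut)$ on $\sigma_t(T)$: at $x_t$-labeled nodes the swap exactly matches the swapped transitions, while at all other nodes (binary symbols $x_k$ with $k \neq t$ and leaf symbols) the transitions are unchanged, so the run condition is preserved. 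The inverse direction is identical because $\sigma_t$ is an involution and the transformation on $\aut$ is an involution as well.

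Combining the two steps yields both inclusions. If $T \in \lang(\aut)$, then $\sigma_t(T) \in \lang(\mathrm{X}_t(\aut))$ by the bijection, and $\sigma_t(T) = \mathrm{X}_t(T)$ by step one, so $\mathrm{X}_t(T) \in \lang(\mathrm{X}_t(\aut))$. Conversely, any $T' \in \lang(\mathrm{X}_t(\aut))$ can be written as $\sigma_t(T)$ for $T = \sigma_t(T') \in \lang(\aut)$, and $T' = \sigma_t(T) = \mathrm{X}_t(T)$ belongs to the right-hand set.

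The main obstacle is the careful bookkeeping in step one: I need to confirm that the recursive ``swap subtrees at each $x_t$-node'' coincides with the semantic update formula uniformly across all assignments, paying attention to the fact that $x_t$ may occur at multiple levels if the alphabet reuses symbols (it does not in our encoding, where $x_t$ appears at a single depth in a full binary tree, which in fact simplifies the induction considerably). The TA-level reasoning in step two is then essentially local and mechanical.
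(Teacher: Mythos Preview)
Your proposal is correct and follows essentially the same route as the paper's proof (Lemma~\ref{lem:swappingsubtree}): define the syntactic swap of subtrees at $x_t$-nodes, apply the same swap to the accepting run to obtain an accepting run of $\mathrm{X}_t(\aut)$, and use that the swap is an involution to get the bijection and both inclusions. The paper is slightly terser about your step one (it just says the functionality ``holds obviously''), while you spell out the traversal argument, but the structure and key ideas are identical.
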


The update formulae of gates Z$_t$, S$_t$, and T$_t$ are all in the form
 $a_1 \cdot B_{x_t}\cdot T+a_0 \cdot B_{\overline{x_t}} \cdot T$ for $a_1, a_0
 \in \mathbb{C}$.
 Intuitively, the formulae scale the left and right subtrees of~$T$ with
 scalars~$a_0$ and~$a_1$, respectively.
 Their construction (\cref{algo:p_gate_single}) can be done by
 \begin{inparaenum}[(1)]
  \item  making one primed copy of~$\aut$ whose leaf labels are multiplied with~$a_1$ (Line~3),
  \item  multiplying all leaf labels of~$\aut$ with~$a_0$ (Line~4), and
  \item  updating all $x_t$-labeled transitions $\transtree {q} {x_t} {q_0, q_1}$
    to $\transtree {q} {x_t} {q_0, q'_1}$, i.e., for the right child, jump to
    the primed version (Line~4).
\end{inparaenum}
In the algorithms, we define $Q'=\{q'\mid q\in Q\}$ for any set of state $Q$ and $\Delta'=
 \{\transtree{q'}{x}{q'_l,q'_r}\mid \transtree{q}{x}{q_l,q_r} \in \Delta\}$ for any set of transitions $\Delta$.  
The case of Y$_t$ is similar, but we need both \emph{constant scaling}
(Lines~1-4) and \emph{swapping} (Lines~7-9) (the left-hand side and
right-hand side scalars being~$\omega^2$ and~$-\omega^2$, respectively).

 \begin{theorem}
 	$\lang( \mathrm{U}(\aut) )  = \{\mathrm{U}(T) \mid  T\in \lang(\aut) \}$, for $\mathrm{U}\in \{\mathrm{Y}_t,\mathrm{Z}_t,\mathrm{S}_t,\mathrm{T}_t\}$.
 \end{theorem}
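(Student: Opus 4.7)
The plan is to establish the two inclusions $\lang(\mathrm{U}(\aut)) \subseteq \{\mathrm{U}(T) \mid T \in \lang(\aut)\}$ and the reverse, by building an explicit bijection between accepting runs of $\aut$ and accepting runs of $\aut^R$ (respectively the swap-adjusted version for $\mathrm{Y}_t$). I would first handle the purely scaling gates $\mathrm{Z}_t, \mathrm{S}_t, \mathrm{T}_t$ uniformly, and then derive the $\mathrm{Y}_t$ case by composing with the swap argument already used for $\mathrm{X}_t$.

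First I would unpack what the update formula $\mathrm{U}(T) = a_1 \cdot B_{x_t}\cdot T + a_0 \cdot B_{\overline{x_t}}\cdot T$ means pointwise on a computational basis $b_n \ldots b_1$: the value is $a_1 \cdot T(b_n \ldots b_1)$ when $b_t = 1$ and $a_0 \cdot T(b_n \ldots b_1)$ when $b_t = 0$, because $B_{x_t}$ and $B_{\overline{x_t}}$ are complementary $\{0,1\}$-indicators at position $t$. Thus the tree $\mathrm{U}(T)$ is obtained from $T$ by multiplying every leaf reached via the $1$-branch at an $x_t$-node by $a_1$, and every leaf reached via the $0$-branch at an $x_t$-node by $a_0$.

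Next I would show the forward inclusion. Given $T \in \lang(\aut)$ with an accepting run $\run$, I construct a run $\run^R$ of $\aut^R$ on the tree $\mathrm{U}(T)$ by the following rule: at every node strictly below (and including) the right child of some $x_t$-labeled ancestor, relabel each state $q$ in $\run$ by its primed copy $q'$, leaving all other labels unchanged. By construction of $\Delta^R$ and $\Delta_1$, each internal transition used in $\run$ has a matching transition in $\aut^R$: above and outside the right-of-$x_t$ region the unprimed transitions of $\Delta$ are retained; at an $x_t$-node the right child's state is primed via $\transtree{q}{x_t}{q_0, q_1'}$; and below that the fully primed copy $\Delta'_i$ takes over. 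For leaf transitions, the primed leaves $\transtree{q'}{a_1\cdot c}{}$ provide exactly the scaling $a_1$ on the right-of-$x_t$ leaves, while $\transtree{q}{a_0\cdot c}{}$ scales the others by $a_0$—matching the pointwise description of $\mathrm{U}(T)$ above. The root label of $\run^R$ coincides with that of $\run$ (the relabeling does not touch above $x_t$-nodes, and the root is not primed), so $\run^R$ is accepting.

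For the reverse inclusion I would argue that any accepting run $\sigma$ of $\aut^R$ must use a primed state only in a subtree rooted at the right child of an $x_t$-transition, because the only way to enter a primed state from an unprimed one is through a transition of the form $\transtree{q}{x_t}{q_0,q_1'}$, and the root is unprimed. Un-priming $\sigma$ and dividing leaf values by $a_1$ (respectively $a_0$) on the primed (respectively unprimed) side yields an accepting run of $\aut$ on a unique $T$ with $\mathrm{U}(T)$ equal to the tree accepted by $\sigma$. Here I would appeal to the assumption that each leaf transition has a unique parent state to guarantee that the leaf values in $T$ are well-defined from those in $\mathrm{U}(T)$. For $\mathrm{Y}_t$, the swap step of the algorithm is then handled by reusing the argument already given for $\mathrm{X}_t$: swapping left and right children of every $x_t$-labeled transition in $\Delta^R$ realises the $B_{x_t}\cdot T_{\overline{x_t}} + B_{\overline{x_t}}\cdot T_{x_t}$ rearrangement on top of the scaling.

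The main obstacle I expect is bookkeeping: one must verify that the primed/unprimed split exactly matches the $B_{x_t}/B_{\overline{x_t}}$ split along every computational basis, and in particular that no accepting run of $\aut^R$ can ``cheat'' by switching to a primed copy outside the right-of-$x_t$ region or by staying unprimed inside it. This is where the careful definition of $\Delta^R$ (only the $x_t$-transitions cross from unprimed to primed, and they do so only on the right branch) plus the uniqueness of leaf parents pays off, and it is the one place in the proof that warrants an explicit case split rather than a hand-wave.
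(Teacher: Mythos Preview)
Your proposal is correct and follows essentially the same approach as the paper's proof: the paper establishes a bijection $\mathcal{U}_t\colon \lang(\aut)\to\lang(\aut^R)$ by priming the run labels in the right-of-$x_t$ region (Lemma~B.1, ``Constant Scaling''), proves surjectivity by un-priming and multiplying leaves by $a_0^{-1}$ or $a_1^{-1}$, and then obtains $\mathrm{Y}_t$ by composing with the swap construction from the $\mathrm{X}_t$ case (Lemma~B.2). One small correction: in your reverse direction, the well-definedness of the recovered leaf values comes from the invertibility of $a_0,a_1$ (the diagonal matrix is unitary), not from the unique-parent-state assumption on leaf transitions.
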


The cases of multi-qubit gates CNOT$^c_t$, CZ$^c_t$, and Toffoli$^{c,c'}_t$ can
be handled when~$t$ is the lowest of the three qubits, i.e., $c<t \land c'<t$.
We can assume w.l.o.g.\ that $c<c'$.
Output of these gates can be constructed recursively following
\cref{algo:p_gate_multiple}.
Let us look at the corresponding update formulae:
\begin{align*}
  \mbox{CNOT}^c_t(T) &{}= B_{\overline{x_c} }\cdot T +
B_{x_c}\cdot   \shadedbox{$(B_{\overline{x_t}}\cdot T_{x_t} + B_{x_t}\cdot T_{\overline{x_t}})$} \\
  \mbox{CZ}^c_t(T) & {} =B_{\overline{x_c}}\cdot  T+
B_{x_c}\cdot  \shadedbox{$ (B_{\overline{x_t}}\cdot T - B_{x_t}\cdot T )$} \\
  \mbox{Toffoli}^{c,c'}_t(T) &{}= B_{\overline{x_c}}\cdot T + 
B_{x_{c}}\cdot \shadedbox{$(B_{\overline{x_{c'}}}\cdot T +
B_{x_{c'}} \cdot  (B_{\overline{x_t}}\cdot T_{x_t} + B_{x_t}\cdot T_{\overline{x_t}}))$}
\end{align*}

We first construct the TA of the inner term, the \shadedbox{shaded area}, which
are TAs for X$^t$, Z$^t$, or CNOT$^{c'}_t$. We call it the primed version here
(cf.\ $\aut'_1$ at Line~\ref{line:primed_version}).
We then update all $x_c$-labeled transitions $\transtree {q} {x_c} {q_0, q_1}$
to $\transtree {q} {x_c} {q_0, q'_1}$, i.e., jump to the primed version in the
right subtree.
 
\begin{algorithm}[t]
\KwIn{A TA $\aut=\tuple{Q, \Sigma,
	\Delta, \rootstates}$ and a gate U}
\KwOut{The TA U$(\aut)$}
\lIf{$\mathrm{U} = \mathrm{CNOT}^c_t$}{ $\aut_1:=\mathrm{X}_t(\aut)$}
\lIf{$\mathrm{U} = \mathrm{CZ}^c_t$}{ $\aut_1:=\mathrm{Z}_t(\aut)$}
\lIf{$\mathrm{U} = \mathrm{Toffoli}^{c,c'}_t$}{ $\aut_1:=\mathrm{CNOT}^{c'}_t(\aut)$}

Let $\aut_1'= \tuple{Q'_1, \Sigma,
	\Delta'_1, \rootstates'}$ be obtained from~$\aut_1$ by priming all occurrences of states\;\label{line:primed_version}

$\aut^R := \tuple{Q\cup Q'_1, \Sigma,
	\Delta^R \cup \Delta'_1, \rootstates}$, where 
	\begin{align*} 
        \Delta^R= & \{\transtree {q} {x_k}{q_0,q_1} \mid \transtree {q} {x_k}{q_0,q_1}\in \Delta_i \wedge k\neq c\}\cup {}\\
                  & \{\transtree {q} {x_k}{q_0,q'_1} \mid \transtree {q}
                  {x_k}{q_0,q_1}\in \Delta_i \wedge k = c\} \cup \{t \mid t\in \Delta_l\}
    \end{align*}
\Return {$\aut^R$}\;
\caption{Algorithm for constructing U$(\aut)$, for $\mathrm{U}\in \{\mathrm{CNOT}^c_t,\mathrm{CZ}^c_t,\mathrm{Toffoli}^{c,c'}_t\}$}
\label{algo:p_gate_multiple}
\end{algorithm}

\begin{theorem}
$\lang( \mathrm{U}(\aut) )  = \{\mathrm{U}(T) \mid  T\in \lang(\aut) \}$, for\ \  $\mathrm{U}\in \{\mathrm{CNOT}^c_t, \mathrm{CZ}^c_t , \mathrm{Toffoli}^{c,c'}_t\}$.
\end{theorem}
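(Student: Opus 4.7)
The plan is to prove both inclusions for each of the three gates, $\mathrm{CNOT}^c_t$, $\mathrm{CZ}^c_t$, and $\mathrm{Toffoli}^{c,c'}_t$, by exploiting the common template of \cref{algo:p_gate_multiple}: each construction first builds an ``inner'' automaton $\aut_1$ using a previously established gate ($\mathrm{X}_t$, $\mathrm{Z}_t$, or $\mathrm{CNOT}^{c'}_t$, whose correctness theorems are already at hand), copies it with primed state names into $\aut'_1$, and redirects the right child of every $x_c$-labeled transition of $\aut$ to the primed copy while keeping every other transition intact. I would treat $\mathrm{CNOT}^c_t$ in detail as the representative case and then indicate how the other two follow by the same template.

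A preparatory step I would carry out is to strengthen the previously proven theorems for the inner gates into a \emph{per-state} form: for every state $q$ of $\aut$, one has $\lang_q(\mathrm{X}_t(\aut)) = \{\mathrm{X}_t(T) \mid T \in \lang_q(\aut)\}$ (and analogously for $\mathrm{Z}_t$ and $\mathrm{CNOT}^{c'}_t$), where $\lang_q$ denotes the language of trees accepted when the root is taken to be $q$. This is immediate because the permutation-based constructions modify transitions only locally (child swap and/or scalar multiplication of leaf labels) and preserve the state set, so the induced transformation on each per-state language is exactly the same as on the top-level language. With this in hand I can reason subtree-by-subtree.

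For the $\supseteq$ direction of $\mathrm{CNOT}^c_t$, given $T \in \lang(\aut)$ with accepting run $\rho$, I would build $T' = \mathrm{CNOT}^c_t(T)$ and a matching run $\rho'$ in $\aut^R$ as follows: $\rho'$ agrees with $\rho$ above every $x_c$-node and on the left subtree of every $x_c$-node; at each $x_c$-node it uses the redirected transition $\transtree{q}{x_c}{q_0, q'_1} \in \Delta^R$; and on the right subtree of an $x_c$-node it uses a primed subrun of $\aut'_1$ witnessing that $\mathrm{X}_t(T_1) \in \lang_{q'_1}(\aut'_1) = \lang_{q_1}(\mathrm{X}_t(\aut))$, which in turn equals $\{\mathrm{X}_t(T_1') \mid T_1' \in \lang_{q_1}(\aut)\}$ by the per-state lemma. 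Soundness of $\rho'$ then follows directly from the shape of $\Delta^R \cup \Delta'_1$.

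For the $\subseteq$ direction, I would take any accepting run $\rho'$ of $\aut^R$ on a tree $T'$ and use the following closure observation: since $\rootstates$ contains no primed state and $\Delta'_1$ keeps runs confined to primed states, a run can only enter the primed copy through a redirected $x_c$-transition, and once there it stays. Consequently $\rho'$ has a canonical shape---unprimed on and above every $x_c$-node and on the left subtree of each $x_c$-node, and entirely primed on the right subtree of each $x_c$-node---and invertibility (using $\mathrm{X}_t \circ \mathrm{X}_t = \mathrm{id}$ together with the per-state lemma) yields a tree $T \in \lang(\aut)$ with $\mathrm{CNOT}^c_t(T) = T'$. The cases $\mathrm{CZ}^c_t$ and $\mathrm{Toffoli}^{c,c'}_t$ follow by the same template with $\mathrm{Z}_t$ and $\mathrm{CNOT}^{c'}_t$ in place of $\mathrm{X}_t$; for Toffoli, the inner theorem invoked is the $\mathrm{CNOT}$ case of this very statement, already established above. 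Invertibility for $\mathrm{Z}_t$ and $\mathrm{CNOT}^{c'}_t$ is likewise immediate since both are involutions (up to signs that cancel in the leaf multiplications already performed). The main obstacle I expect is the bookkeeping around the primed copy: one must rule out ``stray'' accepting runs that use primed states in unintended positions, e.g.\ above $x_c$ or in the left subtree of an $x_c$-node, and the closure observation above is precisely what eliminates such configurations.
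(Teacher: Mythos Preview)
Your proposal is correct and follows essentially the same route as the paper's proof (Lemma~B.3 in the appendix): both arguments build, for each $T\in\lang(\aut)$ with run $\rho$, an accepting run in $\aut^R$ that coincides with $\rho$ above and to the left of each $x_c$-node and switches to the primed copy $\aut'_1$ on the right subtree, invoking the correctness of the inner gate there; surjectivity/$\subseteq$ is then obtained by the same ``closure'' observation that primed states are reachable only through the redirected $x_c$-transitions and are closed under further transitions. The paper packages this as an explicit node-by-node bijection $\mathcal{U}\colon\lang(\aut)\to\lang(\aut^R)$ using the run-level bijection $\mathcal{V}$ from the inner-gate theorem, whereas you abstract the same content into a per-state language lemma; the two are interchangeable.

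One small inaccuracy worth fixing: your justification that the inner constructions ``preserve the state set'' is false for $\mathrm{Z}_t$ (and for the inner $\mathrm{CNOT}^{c'}_t$ used in Toffoli), since \cref{algo:p_gate_single} and \cref{algo:p_gate_multiple} add primed states. What you actually need---and what does hold---is that every original state $q\in Q$ survives in the inner automaton and that $\lang_q(\mathrm{G}(\aut)) = \{\mathrm{G}(T)\mid T\in\lang_q(\aut)\}$ for those $q$; this suffices because the right child $q_1$ of an $x_c$-transition is always an original state. With that correction your argument goes through unchanged.
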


\newcommand{\algMultiplication}[0]{
\begin{algorithm}[b]
	\KwIn{A tagged TA $\aut=\{Q, \Sigma, \Delta, \rootstates\}$ and a constant value $v$ (either $\omega$ or $\frac{1}{\sqrt{2}}$) }
	\KwOut{A tagged TA $\aut'$ such that $\lang(\aut')=\{v\cdot T \mid T \in \lang(\aut)\}$}
	
	$\Delta_{\mathsf{add}} := \Delta_{\mathsf{rm}} := \emptyset$\;
	\ForEach{$\transtree {q} {(a,b,c,d,k)} {} \in \Delta$}{
		\eIf{$v=\omega$}{
			$\Delta_{\mathsf{add}}:=\Delta_{\mathsf{add}}\cup\{\transtree {q} {(-d,a,b,c,k)} {}\}$\;
		}(\tcp*[h]{$v = \frac{1}{\sqrt{2}}$}){
			$\Delta_{\mathsf{add}}:=\Delta_{\mathsf{add}}\cup\{\transtree {q} {(a,b,c,d,k+1)} {}\}$\;
		}
		$\Delta_{\mathsf{rm}}:=\Delta_{\mathsf{rm}}\cup\{\transtree {q} {(a,b,c,d,k)} {}\}$\;
		
	}
	
	\Return {$\{Q, \Sigma, (\Delta\setminus\Delta_{\mathsf{rm}})\cup\Delta_{\mathsf{add}}, \rootstates\}$}\;
	\caption{Multiplication operation, $\mathsf{Mult}(\aut, v)$}
	\label{algo:multiplication}
\end{algorithm}
}

\vspace{-0.0mm}
\section{Composition-based Encoding of Quantum Gates}\label{sec:TA_op}
\vspace{-0.0mm}

We introduce the composition-based approach in this section.
The task is to develop TA operations that handle the update formulae
in~\cref{tab:quantum_gates} compositionally.
The idea is to lift the basic tree operations, such as projection~$T_{x_k}$,
restriction $B\cdot T$, and binary operation~$\pm$ to operations over TAs and
then compose them to have the desired gate semantics.
The update formulae in~\cref{tab:quantum_gates} are always in the form of $\mathsf{term_1}\pm\mathsf{term_2}$.
For example, for the $X_t$ gate, $\mathsf{term_1} = B_{x_t}\cdot T_{\overline{x_t}}$ and $\mathsf{term_2} = B_{\overline{x_t}}\cdot T_{x_t}$.
Our idea is to first construct TAs $\aut_{\mathsf{term_1}}$ and $\aut_{\mathsf{term_2}}$, recognizing quantum states of $\mathsf{term_1}$ and $\mathsf{term_2}$, and then combine them using binary operation $\pm$ to produce a TA recognizing the quantum states of $\mathsf{term_1\pm term_2}$.
The TAs $\aut_{\mathsf{term_1}}$, $\aut_{\mathsf{term_2}}$ would be constructed using TA versions of basic operations introduced later in this section.

For a~TA accepting the trees $\{T_1,T_2\}$, a~correct construction would produce
a~TA with the language $\{T'_1 \pm T''_1,T'_2\pm T''_2\},$ for $T'_i =
\mathsf{term_1}[T\mapsto T_i]$ and $T''_i = \mathsf{term_2}[T\mapsto T_i]$,
where $[T\mapsto T_i]$ is a~substitution defined in the standard way.
Obtaining this result is, however, not straightforward.
If we just performed the $\pm$ operation pairwise between all elements of~$T'_i$ and~$T''_i$, we would obtain the language $\{T'_1 \pm T''_1,T'_2\pm T''_2, T'_1 \pm T''_2,T'_2\pm T''_1\}$, which is wrong, since we are losing the information that $T'_1$ and $T''_1$ are related (and so are $T'_2$ and $T''_2$). 

In the rest of the section, we will describe implementation of the necessary
operations for the composition-based approach.

\hide{
For computing $Y_1(\aut)$, we will construct two TAs $\aut_{\omega^2\cdot B_{x_1}\cdot T_{\overline{x_1}}}$ and $\aut_{\omega^2\cdot B_{\overline{x_1}}\cdot T_{x_1}}$ and then combine them using a binary $-$ operation. 
Observe that
\begin{equation*}
\lang(\aut_{\omega^2\cdot B_{x_1}\cdot T_{\overline{x_1}}}) = \{T'_1,T'_2\} \mbox{\ \  and\ \ } 
\lang(\aut_{\omega^2\cdot B_{\overline{x_1}}\cdot T_{x_1}}) = \{T''_1,T''_2\},
\end{equation*}
where $T'_i = \omega^2\cdot B_{x_1}\cdot ({T_i})_{\overline{x_1}}$ and $T''_i = \omega^2\cdot B_{\overline{x_1}}\cdot ({T_i})_{x_1}$, for $i\in \{1,2\}$. For a correct construction, the language of the resulting TA should be $\{T'_1-T''_1, T'_2-T''_2\}.$

\qed
\end{example}}

\newcommand{\figCompletePicture}[0]{
\begin{figure}[b]
\begin{center}
\scalebox{0.7}{
    \begin{tikzpicture}[node distance = 13em, auto]
    \tikzstyle{block} = [rectangle, draw, fill=blue!20, 
     text width=10em, text centered, rounded corners, minimum height=3em]
    \tikzstyle{line} = [draw, -latex']

    \node [block,text width=8em] (tag) {Tagging:\\ $\aut_{\mathrm{T}}$};
    \node [left of=tag,node distance = 6em] (init) {};
    \node [block, above right of=tag, node distance = 7.07em ] (up) {Projection:\\ $\aut_{T_{\overline{x_1}}}$};
    \node [block, below right of=tag, node distance = 7.07em] (down) {Projection:\\ $\aut_{T_{x_1}}$};
    \node [block, right of=up] (up_b) {Restriction:\\ $\aut_{B_{x_1}\cdot T_{\overline{x_1}}}$};
    \node [block, right of=down] (down_b) {Restriction:\\ $\aut_{B_{\overline{x_1}}\cdot T_{x_1}}$};
    \node [block, right of=up_b] (up_c) {Multiplication: $\aut_{\omega^2\cdot B_{x_1}\cdot T_{\overline{x_1}}}$};
    \node [block, right of=down_b] (down_c) {Multiplication: $\aut_{\omega^2\cdot B_{\overline{x_1}}\cdot T_{x_1}}$};
    \node [block, below of=up_c, node distance = 5em] (bin) {Binary Operation: $\aut_{\omega^2\cdot B_{x_1}\cdot T_{\overline{x_1}}-\omega^2\cdot B_{\overline{x_1}}\cdot T_{x_1}}$};
    \node [block, right of=bin] (untag) {Untagging:\\ $\aut_{\omega^2\cdot B_{x_1}\cdot T_{\overline{x_1}}-\omega^2\cdot B_{\overline{x_1}}\cdot T_{x_1}}$};
        
    \draw[->] (init) --(tag);
    \draw[->] (tag.north) -- node[near start, right=3mm]  {\ \ duplicate}(up);
    \draw[->] (tag.south) -- node[near end]  {duplicate}(down);
    \draw[->] (up) --(up_b);
    \draw[->] (up_b) --(up_c);
    \draw[->] (down) --(down_b);
    \draw[->] (down_b) --(down_c);  
    \draw[->] (up_c) --(bin);
    \draw[->] (down_c) --(bin);
    \draw[->] (bin) --(untag);
    \end{tikzpicture}
}
\end{center}
\vspace*{-3mm}
\caption{Constructions performed when applying the gate $Y_1$ to $\aut_{\mathrm{Tag}}$}
\label{fig:completePicture}
\vspace*{-3mm}
\end{figure}
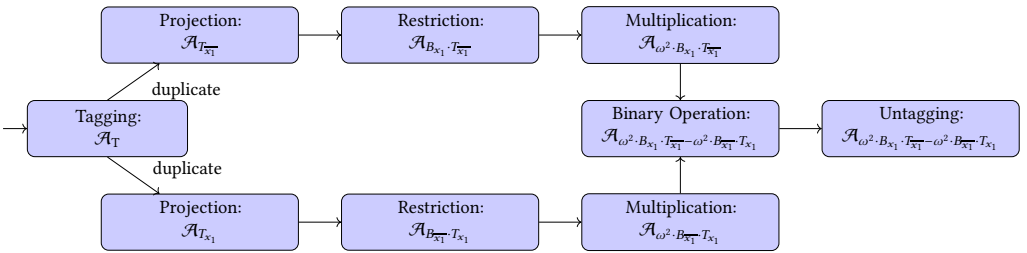
}

\newcommand{\algRestriction}[0]{
\begin{algorithm}[h]
	\KwIn{A tagged TA $\aut=\{Q, \Sigma, \Delta, \rootstates\}$, }
  \KwOut{A tagged TA $\aut'$ such that $\lang(\aut')=\{b\ ?\  B_{x_t}\cdot T\ :\ B_{\overline{x_t}}\cdot T \mid T \in \lang(\aut)\}$}
	$\Delta'_i := \{\transtree {q'_0} {x^i_j} {q'_1, q'_2} \mid \transtree {q_0} {x^i_j} {q_1, q_2} \in \Delta   \}$\;
	$\Delta'_l :=  \{\transtree {q'_0} {\algzero} {} \mid \transtree {q_0} {(a,b,c,d,k)} {} \in \Delta \}$\;
	$\Delta' := \Delta'_i \cup \Delta'_l$\;
	$\Delta_{\mathsf{add}} := \Delta_{\mathsf{rm}} := \emptyset$\;
	\ForEach{$\transtree {q} {x^i_t} {q_l,q_r} \in \Delta$}{
		\lIfElse{$b$}{$\Delta_{\mathsf{add}}:=\Delta_{\mathsf{add}}\cup \{\transtree {q} {x^i_t} {q'_l,q_r}\}$}{$\Delta_{\mathsf{add}}:=\Delta_{\mathsf{add}}\cup \{\transtree {q} {x^i_t} {q_l,q'_r}\}$}
		$\Delta_{\mathsf{rm}}:=\Delta_{\mathsf{rm}}\cup \{\transtree {q} {x^i_t} {q_l,q_r}\}$\;}
	
	\Return {$\{Q \cup Q', \Sigma \cup \{\algzero\}, ((\Delta\cup\Delta') \setminus \Delta_{\mathsf{rm}}) \cup \Delta_{\mathsf{add}}, \rootstates\}$   }\;
	\caption{Restriction operation on $x_{t}$, $\mathsf{Res}(\aut, x_t, b)$}
	\label{algo:restriction}
\end{algorithm}
}

\vspace{-0.0mm}
\subsection{Tree Tag}
\vspace{-0.0mm}
We introduce the concept of~\emph{tree tags} to keep track of the origins of trees.
For any tree~$T$, its tag~$\tagg(T)$ is the tree obtained from~$T$
by replacing all leaf symbols with a~special symbol~$\square$.
E.g., for the tree~$T_1=x_1(x_2(\algone,\algzero),x_2(\algzero,\algzero))$, its tag is
$\tagg(T_1)=x_1(x_2(\square,\square),x_2(\square,\square))$.
Our construction needs to maintain the following invariants:
\begin{inparaenum}[(1)]
  \item  each tree in a~TA has a~unique tag,
  \item  all derived trees should have the same tag, and
  \item  binary operations over two sets of trees represented by TAs only
    combine trees with the same tag.
\end{inparaenum}
When we say $T'$ is \emph{derived from~$T$}, it means~$T'$ is obtained by
applying basic tree operations on~$T$. E.g., the tree $B_{\overline{x_1}}\cdot
T_{x_1}$ is derived~from~$T$.


\begin{example}\label{ex:aut}
Let $\aut$ be a TA with root states $\rootstates=\{q\}$ and transitions
\begin{align*}
  \transtree {q} {x_1} {q_l, q_r} &&
  \transtree {q_l} {x_2} {q_{1}, q_{0}} &&
\transtree {q_{0}} { \algzero} {}
\\
&&
  \transtree {q_l} {x_2} {q_{0}, q_{1}} &&
\transtree {q_{1}} { \algone} {}
\\
&&
\transtree {q_r} {x_2} {q_{0}, q_{0}}
\end{align*}
Observe that $\lang(\aut) = \{x_1(x_2(\algone,\algzero),x_2(\algzero,\algzero)), x_1(x_2(\algzero,\algone),x_2(\algzero,\algzero))\}$.
In Dirac notation, this is the set $\{\ket{00}, \ket{01}\}$.  The tag of both
trees is $x_1(x_2(\square,\square),x_2(\square,\square))$, which violates
invariant~(1) above.
\qed
\end{example}

In general, invariant~(1) does not hold, as we can see from~\cref{ex:aut}.
Our solution to this is introducing the \emph{tagging} procedure (cf.\ \cref{algo:tagging}).
The idea of tagging is simple: for each transition, we assign to its function symbol
a~unique number. After tagging a~TA, every transition has a~different symbol. Let $\mathsf{Untag}(T)$ be a function that removes the number $j$ (added by the tagging procedure) from each symbol $x^j_k$ in $T$'s labels.

\begin{algorithm}[t]
\KwIn{A TA $\aut=\tuple{Q, \Sigma,
	\Delta, \rootstates}$}
\KwOut{A tagged TA $\tuple{Q, \Sigma',
	\Delta', \rootstates}$}
$\Delta_1 := \{\transtree q c {} \mid \transtree q c {} \in \Delta\}$\;
$\Delta_2 := \{\transtree q {x^j_k} {q_1, q_2} \mid \delta = (\transtree q
    {x_k} {q_1, q_2}) \in \Delta, \mathit{ord}(\delta) = j\}$, where
    $\mathit{ord}\colon \Delta \to \nat$ is an arbitrary injection (e.g., an
    ordering of the transitions)\;
$\Delta' := \Delta_1 \cup \Delta_2$\;
$\Sigma'$ is the set of all symbols appearing in~$\Delta'$\;
\Return {$\tuple{Q, \Sigma',
	\Delta', \rootstates}$}\;
\caption{The tagging procedure $\tagg(\aut)$.}
\label{algo:tagging}
\end{algorithm}

%

\begin{example}\label{ex:tagged_ta}
After tagging $\aut$ from \cref{ex:aut}, we obtain the TA $\aut_{\mathrm{Tag}}$ with the root
  state~$q$ and the following transitions:
\begin{align*}
  \transtree {q} {x^1_1} {q_l, q_r} &&
  \transtree {q_l} {x^2_2} {q_{1}, q_{0}} &&
\transtree {q_{0}} { \algzero} {}
\\
  &&
  \transtree {q_l} {x^3_2} {q_{0}, q_{1}} &&
\transtree {q_{1}} { \algone} {}
\\
  &&
\transtree {q_r} {x^4_2} {q_{0}, q_{0}}
\end{align*}
Here $\lang(\aut_{\mathrm{Tag}}) = \{T_1,T_2\}$, where $T_1=x^1_1(x^2_2(\algone,\algzero),x^4_2(\algzero,\algzero))$ and $T_2=x^1_1(x^3_2(\algzero,\algone),x^4_2(\algzero,\algzero))$.
The two trees $T_1$ and $T_2$ have different tags now.
\qed
\end{example}

\begin{lemma}\label{lem:tag}
All non-single-valued trees in a tagged TA have different tags.
\end{lemma}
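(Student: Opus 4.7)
The plan is to establish a slightly stronger statement from which the lemma follows immediately: in any tagged TA, each tree in the language is uniquely determined by its tag. Thus two distinct trees in the language must carry distinct tags, which in particular applies to non-single-valued trees.

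The first step is to observe a key consequence of \cref{algo:tagging}: since the function $\mathit{ord}\colon \Delta \to \nat$ is an injection, every tagged internal symbol $x^j_k$ appears on the left-hand side of exactly one transition in $\Delta'$. Hence, reading a tagged symbol at a node of a tree uniquely determines the source state of the transition that must fire there, together with the pair of children states. In other words, tagging renders the internal part of the TA \emph{top-down deterministic with forced state labels}.

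Next, I would proceed by top-down induction along the shared tree structure. Suppose $T_1, T_2 \in \langof{\aut}$ satisfy $\tagg(T_1)=\tagg(T_2)$, and let $\rho_1,\rho_2$ be accepting runs on $T_1, T_2$ respectively. At the root $\epsilon$, the common tagged symbol forces $L_{\rho_1}(\epsilon)=L_{\rho_2}(\epsilon)$ to be the unique source state of the corresponding transition (which must therefore lie in $\rootstates$). Assuming the state labels coincide at an internal node $u$, the unique transition for the tagged symbol at $u$ fixes the states at $0\cdot u$ and $1\cdot u$ identically in both runs. Propagating down, $\rho_1$ and $\rho_2$ assign the same state $q_u$ to every leaf $u$.

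Finally, at each leaf $u$, both trees must use a leaf transition whose source is $q_u$. By the standing assumption of the paper (two leaf transitions from the same state must carry the same constant), the leaf value $L_{T_i}(u)$ is determined by $q_u$, so $L_{T_1}(u)=L_{T_2}(u)$. Therefore $T_1=T_2$, contradicting distinctness. The main obstacle, such as it is, is correctly invoking the leaf-uniqueness assumption at the end: without it, two distinct single-valued trees with matching tag structure but differing leaf constants could theoretically coexist. With that assumption in hand, the argument in fact proves uniqueness for all trees, so the ``non-single-valued'' restriction in the statement is simply a weaker form sufficient for later use.
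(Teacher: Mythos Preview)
Your proposal is correct and follows essentially the same approach as the paper: use the fact that each tagged internal symbol determines a unique transition to force the runs to coincide at all internal nodes, then invoke the standing assumption that a leaf state determines its constant to conclude the trees agree at the leaves. Your presentation is more explicit (spelling out the top-down induction and correctly noting that the argument does not actually need the non-single-valued hypothesis), but the underlying proof is the same.
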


\begin{definition}[Tag preservation]
Given a~tagged TA $\aut_{\mathrm{Tag}}$ and an operation $U$ over binary trees, a~TA
construction procedure $O$ transforming $\aut_{\mathrm{Tag}}$ to $O(\aut_{\mathrm{Tag}})$ is called \emph{tag-preserving} if there is a~bijection $S\colon \lang(\aut_{\mathrm{Tag}}) \to
\lang(O(\aut_{\mathrm{Tag}}))$ such that $\tagg(T) = \tagg(S(T))$ for all $T \in \lang(\aut_{\mathrm{Tag}})$.
In such a~case, we write $\aut_{\mathrm{Tag}} \simeq_{\tagg} O(\aut_{\mathrm{Tag}})$.
Further, if the above correspondence satisfies
$U(\mathsf{Untag}(T)) = \mathsf{Untag}(S(T))$ for each~$T$, we say that the TA
construction procedure~$O$ is \emph{tag-preserving over}~$U$. 
\end{definition}

\vspace{-0.0mm}
\subsection{The Complete Picture of the Quantum Gate Application Procedure}
\vspace{-0.0mm}


Tagging a TA is the first step in applying a quantum gate. In the second step, for each term in the update formulae (cf.~\cref{tab:quantum_gates}), we make a copy of the tagged TA and apply the operations that we are going to introduce (projection, restriction, and multiplication) to construct the corresponding TA. Notice that the operations are \emph{tag-preserving}, i.e., they will keep the tag of all accepted trees. Then we use the binary operation to merge trees with the same tag and complete the update formula compositionally. In the end, we remove the TA's tag to finish the quantum gate application\footnote{This is a design choice. Another possibility is to keep the tag until finishing all gate operations. Untagging after finishing a~gate has the advantage that it allows a~more aggressive state space reduction.}.

\begin{example}
From \cref{tab:quantum_gates}, we have
\begin{equation*}
Y_1(T) = \omega^2\cdot B_{x_1}\cdot T_{\overline{x_1}} - \omega^2\cdot
  B_{\overline{x_1}}\cdot T_{x_1}.
\end{equation*}

For applying the gate $Y_1$ to a tagged TA $\aut_{\mathrm{T}}$, we perform the
constructions shown in \cref{fig:completePicture}.
\qed
\end{example}
\figCompletePicture
\vspace{-0.0mm}
\subsubsection{Restriction Operation: Constructing $\aut_{B_{x_t} \cdot T}$ and
  $\aut_{B_{\overline{x_t}} \cdot T}$ from $\aut_T$}
\label{sec:branch_restriction}
\vspace{-0.0mm}

Observe that the tree $B_{x_t}\cdot T$ can be obtained by changing all leaf labels of the $\overline{x_t}$-subtrees in $T$ to $(0,0,0,0,0)$. 
In~\cref{algo:restriction} we show the procedure for constructing the restriction operation based on this observation. Here $b\ ?\ s_1\ :\ s_2$ is a shorthand for ``if $b$ is true then $s_1$ else $s_2$.''
\algRestriction
Intuitively, when encountering a transition with variants of $x_t$ as its label, in case $b=\TT$, we reconnect its zero (left) child to the primed version (Line 6 of~\cref{algo:restriction}), so the leaves of this subtree would be all zero. The case when $b=\FF$ is symmetric. Note that the structure of the original and the primed versions are identical, so this modification will not change the tags of accepted trees. 

\begin{restatable}{theorem}{thmRestrTagPreserv}\label{thm:thmRestrTagPreserv}
Let $\aut$ be a~tagged TA. Then it holds that $\mathsf{Res}(\aut, x_t, b) \simeq_{\tagg} \aut$ and, moreover, $\lang(\mathsf{Res}(\aut, x_t, b))=\{b\ ?\  B_{x_t}\cdot T\ :\ B_{\overline{x_t}}\cdot T \mid T \in \lang(\aut)  \}$. 
\end{restatable}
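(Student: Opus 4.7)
The plan is to prove both claims simultaneously by exhibiting an explicit bijection $S \colon \lang(\aut) \to \lang(\mathsf{Res}(\aut, x_t, b))$ built at the level of accepting runs.

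First I would define $S$ as follows. Given $T \in \lang(\aut)$ with accepting run $\rho$, construct a run $\rho^S$ on $\mathsf{Res}(\aut, x_t, b)$ by keeping every non-$x^i_t$-transition of $\rho$ unchanged; substituting each $x^i_t$-transition $\transtree{q}{x^i_t}{q_l,q_r}$ used by $\rho$ with its rewired counterpart from $\Delta_{\mathsf{add}}$ (which replaces one of $q_l, q_r$ by its primed copy); and, below every primed state, relabeling the states of $\rho$ with their primed versions while replacing the leaf transitions by the primed ones from $\Delta'_l$ (both exist by Lines~1--2 of \cref{algo:restriction}). Let $S(T)$ be the tree on which $\rho^S$ runs: it agrees with $T$ on all internal labels and on leaves not under a zeroed branch, and carries $\algzero$ on every leaf under a zeroed branch.

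Next I would verify the three properties needed for $\simeq_\tagg$. For tag preservation, every primed internal transition re-uses the tagged symbol $x^i_j$ of its original, so $\tagg(S(T)) = \tagg(T)$. For injectivity, suppose $S(T_1) = S(T_2)$; then $\tagg(T_1) = \tagg(T_2)$ and \cref{lem:tag} forces $T_1 = T_2$, with the single-valued corner case handled directly because the leaves on the kept side already fix the constant. For surjectivity, take any $T' \in \lang(\mathsf{Res}(\aut, x_t, b))$ with run $\rho'$; every transition of $\rho'$ is either from $\Delta \setminus \Delta_{\mathsf{rm}}$, from $\Delta_{\mathsf{add}}$, or a primed copy in $\Delta'$. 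Unprime all states below each $x^i_t$-transition and restore every $\algzero$-leaf to the unique original leaf constant at the matching unprimed state (unique by the paper's standing assumption that distinct leaf constants have distinct parent states). This yields an accepting run of $\aut$ on some $T$ with $S(T) = T'$.

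Finally I would read the language equation off the construction: for any computational basis $b_n \ldots b_1$, if the $t$-th bit lies on the kept side (i.e., $b_t = 1$ when $b$ holds, $b_t = 0$ otherwise) then $S(T)(b_n \ldots b_1) = T(b_n \ldots b_1)$, and otherwise $S(T)(b_n \ldots b_1) = \algzero$. This matches $B_{x_t}\cdot T$ when $b$ and $B_{\overline{x_t}}\cdot T$ when $\neg b$, by the definitions of the restriction operators in \cref{sec:symbolic_gates}. I expect the main obstacle to be the bijection argument, and in particular surjectivity: it relies on \cref{lem:tag} to prevent two distinct inputs from collapsing onto the same output, and on the uniqueness-of-leaf-transition assumption to unambiguously recover an original leaf constant from each primed $\algzero$-leaf. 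Once those structural invariants are harnessed, tag preservation and the pointwise language equation follow transparently from the local rewiring performed by \cref{algo:restriction}.
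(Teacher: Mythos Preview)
Your proposal is correct and follows essentially the same approach as the paper's own proof: both build an explicit run-level bijection $S$ by priming the states in the zeroed subtree, appeal to \cref{lem:tag} for injectivity, construct the inverse by un-priming states (using the unique-leaf-parent assumption to recover constants) for surjectivity, and read off the language equality from the local rewiring of \cref{algo:restriction}. Your treatment of the single-valued corner case is in fact slightly more explicit than the paper's.
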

%
%

\vspace{-0.0mm}
\subsubsection{Multiplication Operation: Constructing $\aut_{v\cdot T}$ from $\aut_T$}
\label{sec:multiplication}
\vspace{-0.0mm}
\cref{algo:multiplication} gives the multiplication operation that works on both tagged and non-tagged version.

\algMultiplication

\begin{theorem}\label{thm:multiTagPreserve}
Let~$\aut$ be a~tagged TA. Then it holds that $\mathsf{Mult}(\aut, v) \simeq_{\tagg} \aut$ and, moreover, $\lang(\mathsf{Mult}(\aut, v)) = \{ v \cdot T \mid T \in \lang(\aut) \}$. 
\end{theorem}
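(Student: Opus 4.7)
The plan is to prove the two conjuncts separately, with the tag-preservation bijection $S$ serving as the common vehicle, and then verify that $S$ realizes the scaling by $v$.

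First I would observe the crucial structural fact about $\mathsf{Mult}$: the algorithm only rewrites leaf transitions, replacing each $\transtree{q}{(a,b,c,d,k)}{}$ by a single transition $\transtree{q}{v \cdot (a,b,c,d,k)}{}$, where $v \cdot (a,b,c,d,k)$ abbreviates $(-d,a,b,c,k)$ when $v = \omega$ and $(a,b,c,d,k+1)$ when $v = \tfrac{1}{\sqrt 2}$. In particular, the set of internal transitions $\Delta_i$, the set of root states $\rootstates$, and the set of states $Q$ are untouched. By the standing assumption that every leaf transition has a unique parent state, the map $q \mapsto v \cdot c_q$ on leaf-parent states is well defined, so no two original leaf transitions are collapsed and no new nondeterminism is introduced at the leaves.

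Next I would define the candidate bijection $S \colon \lang(\aut) \to \lang(\mathsf{Mult}(\aut,v))$ by $S(T)(u) = v \cdot T(u)$ at every leaf node $u$ of~$T$, and $S(T)(u) = T(u)$ at every internal node. Given an accepting run $\run$ of $\aut$ on $T$, the tree $\run$ itself is an accepting run of $\mathsf{Mult}(\aut,v)$ on $S(T)$: all internal transitions used by $\run$ are preserved verbatim, and the leaf transition $\transtree{L_\run(u)}{T(u)}{}$ used at each leaf~$u$ has been replaced in $\Delta$ by precisely $\transtree{L_\run(u)}{v \cdot T(u)}{}$, which matches $S(T)(u)$. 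Conversely, every accepting run of $\mathsf{Mult}(\aut,v)$ arises this way, and since $v \ne 0$ the leaf rewrite is invertible (multiplication by $\omega^{-1}$ or by $\sqrt 2$ on the algebraic tuple), yielding the inverse map. This gives a bijection of languages, and because $S$ only alters leaf \emph{values} and not the tree skeleton, $\tagg(T) = \tagg(S(T))$, establishing $\mathsf{Mult}(\aut, v) \simeq_{\tagg} \aut$.

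Finally I would verify the semantic claim $S(T) = v \cdot T$, i.e.\ that the algorithm's tuple rewriting really implements multiplication by~$v$ in the algebraic encoding from Equation~\eqref{eq:algebraic_representation}. For $v = \omega$, using $\omega^4 = -1$, a direct computation gives
\begin{equation*}
\omega \cdot \oneoversqrttwopar^{\!k}(a + b\omega + c\omega^2 + d\omega^3)
= \oneoversqrttwopar^{\!k}(-d + a\omega + b\omega^2 + c\omega^3),
\end{equation*}
which matches the substitution $(a,b,c,d,k) \mapsto (-d,a,b,c,k)$. For $v = \tfrac{1}{\sqrt 2}$, the substitution $k \mapsto k+1$ trivially implements the scaling. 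Applying this leafwise yields $\lang(\mathsf{Mult}(\aut,v)) = \{v \cdot T \mid T \in \lang(\aut)\}$.

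The only delicate point will be the well-definedness of the leaf rewriting: if two distinct leaf transitions shared the same parent state $q$ but had different constants, then the algorithm would produce two new leaf transitions out of $q$ (potentially creating nondeterminism) and the correspondence between runs of $\aut$ and runs of $\mathsf{Mult}(\aut,v)$ could fail to be a bijection of languages. This is precisely ruled out by the global assumption made at the beginning of \cref{sec:TA} that leaf transitions have unique parent states, so invoking that hypothesis will be the single nontrivial ingredient beyond the routine algebra.
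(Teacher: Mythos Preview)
Your proposal is correct and takes essentially the same approach as the paper: the paper's proof is the single sentence ``the theorem trivially holds because the multiplication procedure only changes the leaf symbols,'' and you have carefully unpacked exactly what that sentence means. One minor remark: your final paragraph overstates the role of the unique-parent-state assumption here---since the tuple maps $(a,b,c,d,k)\mapsto(-d,a,b,c,k)$ and $(a,b,c,d,k)\mapsto(a,b,c,d,k+1)$ are both injective on~$\integers^5$, distinct leaf constants stay distinct after multiplication, so the run-level bijection would survive even without that hypothesis; it is not actually the ``single nontrivial ingredient'' for this particular theorem.
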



\newcommand{\algForwardSwapping}[0]{
\begin{algorithm}[t]
\KwIn{A tagged TA $\aut=\tuple{Q, \Sigma,
	\Delta, \rootstates}$}
\KwOut{The tagged TA $\tuple{Q', \Sigma',
	\Delta', \rootstates}$}
	$\Delta_{\mathsf{rm}} := \Delta_{\mathsf{add}} := \emptyset, Q' := Q, \Sigma' := \Sigma$\;
\ForEach{$\transtree {q} {x^h_t} {q_0,q_1}, \transtree {q_0} {x^i_{l}} {q_{00},q_{01}}, \transtree {q_1} {x^j_{l}} {q_{10},q_{11}} \in \Delta$}{
    $\Delta_{\mathsf{add}} := \Delta_{\mathsf{add}} \cup \{\transtree {q} {x^{i,j}_{l}} {q'_0,q'_1}, \transtree {q'_0} {x^h_{t}} {q_{00},\cemph{q_{10}}}, \transtree {q'_1} {x^h_{t}} {\cemph{q_{01}},q_{11}} \}$\;
    $\Delta_{\mathsf{rm}} := \Delta_{\mathsf{rm}} \cup \{\transtree {q} {x^h_t} {q_0,q_1}, \transtree {q_0} {x^i_{l}} {q_{00},\cemph{q_{01}}}, \transtree {q_1} {x^j_{l}} {\cemph{q_{10}},q_{11}} \}$\;
    $Q':=Q'\cup\{q'_0,q'_1\}$\;
    $\Sigma':=\Sigma'\cup\{x^{i,j}_{l}\}$\;
}	
\Return {$\tuple{Q', \Sigma',
	(\Delta\setminus\Delta_{\mathsf{rm}})\cup \Delta_{\mathsf{add}}, \rootstates}$}\;
\caption{Forward variable order swapping procedure on $x_t$, $\mathsf{f.swap}_t(\aut)$}
\label{algo:forward_swapping}
\end{algorithm}
}

\vspace{-0.0mm}
\subsubsection{Projection Operation: Constructing  $\aut_{T_{x_t}}$ and $\aut_{T_{\overline{x_t}}}$ from $\aut_T$}\label{sec:value_restriction}
\vspace{-0.0mm}
Recall that $T_{x_t}$ is obtained from $T$ by fixing the $t$-th input
bit to be $1$, i.e., 
$T_{x_t}(b_1\ldots b_t\ldots b_n) = T(b_1\ldots 1 \ldots b_n).$
Intuitively, the construction of $\aut_{T_{x_t}}$ from $\aut_T$ can be done by
copying all right subtrees of $x_t^i$ (i.e., corresponding to~$x_t^i = 1$) to
replace its left ($x_t^i = 0$) subtrees.
A~seemingly correct construction can be found in~\cref{algo:subtree_copying}.
For short, we use $\mathsf{s.copy_t}(\aut)$ to denote $\mathsf{s.copy}(\aut, x_t, \TT)$ and $\mathsf{s.copy_{\overline{t}}}(\aut)$ to denote $\mathsf{s.copy}(\aut, x_t, \FF)$.

\begin{algorithm}[t]
\KwIn{A tagged TA $\aut=\tuple{Q, \Sigma,
	\Delta, \rootstates}$, variable $x_t$ to copy, and a Boolean value $b$ to
  indicate which branch to copy}
\KwOut{The tagged TA $\tuple{Q, \Sigma,
	\Delta', \rootstates}$}
	$\Delta_{\mathsf{rm}} := \Delta_{\mathsf{add}} := \emptyset$\;
\ForEach{$\transtree {q} {x^i_t} {q_l,q_r} \in \Delta$}{
    \lIfElse{$b$}{$q_c:=q_r$}{$q_c:=q_l$}
    $\Delta_{\mathsf{add}} := \Delta_{\mathsf{add}} \cup \{\transtree {q} {x^i_t} {q_c,q_c} \}$\;
    $\Delta_{\mathsf{rm}} := \Delta_{\mathsf{rm}} \cup \{\transtree {q} {x^i_t} {q_l,q_r} \}$\;
}	
\Return {$\tuple{Q, \Sigma,
	(\Delta\setminus\Delta_{\mathsf{rm}})\cup \Delta_{\mathsf{add}}, \rootstates}$}\;
\caption{Subtree copying procedure on $x_t$, $\mathsf{s.copy}(\aut, x_t, b)$.}
\label{algo:subtree_copying}
\end{algorithm}

However, this construction has two issues (1)~it would change the tag
of accepting trees and (2)~when there are more than one possible subtrees below $q_r$ (or $q_l$), say, for example, $T_1$ and $T_2$, it might happen that the resulting TA accepts a tree such that one subtree below the symbol $x^i_t$ is $T_1$ while another subtree is $T_2$, i.e., they are still not equal and hence not the result after copying. 

Although the procedure is incorrect in general, it is correct when $t=n$, i.e., the layer just above the leaf. Notice that constant symbols are irrelevant to a tree's tag (all constant symbols will be replaced with $\square$ in a tag). So copying one subtree to the other will not affect the tag at the leaf transition. Moreover, recall that from TA's definition, all leaf transitions have unique starting states. So it will not encounter the issue (2) mentioned above.

\begin{lemma}\label{lem:copy.tag.inv}
Subtree copying $\mathsf{s.copy_t}$ is tag-preserving over the tree projection operation $T\rightarrow T_{x_t}$ and $\mathsf{s.copy_{\overline{t}}}$ is tag-preserving over $T\rightarrow T_{\overline{x_t}}$ when $t=n$.
\end{lemma}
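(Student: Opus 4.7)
The plan is to construct an explicit bijection $S\colon \lang(\aut)\to \lang(\mathsf{s.copy_t}(\aut))$ at the level of accepting runs, and then verify tag preservation and the projection equality by direct computation on the leaves.

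I would first observe that $\mathsf{s.copy}(\aut,x_n,\TT)$ modifies only transitions of the form $\transtree{q}{x^i_n}{q_l,q_r}$, replacing each by $\transtree{q}{x^i_n}{q_r,q_r}$; every internal transition strictly above the $x_n$-layer and every leaf transition is left untouched. Since $\aut$ is tagged, each internal symbol $x^i_k$ occurs on exactly one transition, so every accepted tree has a unique accepting run. Given $T\in\lang(\aut)$ with accepting run $\rho$, I would define $S(T)$ to be the unique tree accepted by $\mathsf{s.copy_t}(\aut)$ via the parallel run $\rho'$ that coincides with $\rho$ at every position above the $x_n$-layer and uses the modified $x_n$-transitions below. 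Well-definedness of $\rho'$ and of the resulting leaf labels relies on the global assumption that every leaf transition has a unique parent state: given $q_r$, there is exactly one constant $c_r$ with $\transtree{q_r}{c_r}{}\in\Delta$, so both leaf positions under every $x^i_n$-transition in $\rho'$ are forced to carry the same constant $c_r$.

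Tag preservation $\tagg(T)=\tagg(S(T))$ is then immediate: tagging replaces constants with $\square$ and preserves every internal symbol $x^i_k$, while $\rho$ and $\rho'$ use identical internal transitions (in particular, the tags $x^i_n$ are untouched). Bijectivity follows by inverting the construction: the state $q$ at an $x^i_n$-position together with the unique tag $i$ determines the original transition $\transtree{q}{x^i_n}{q_l,q_r}\in\Delta$ and hence the original left constant $c_l$, yielding a well-defined preimage. For tag-preservation over the projection $T\mapsto T_{x_n}$, I would check $\mathsf{Untag}(S(T))=(\mathsf{Untag}(T))_{x_n}$ pointwise on leaves: at every leaf position $b_1\ldots b_{n-1}b_n$, the corresponding $x^i_n$-transition in $\rho'$ yields the constant $c_r$ regardless of $b_n$, matching $\mathsf{Untag}(T)(b_1\ldots b_{n-1}1)=(\mathsf{Untag}(T))_{x_n}(b_1\ldots b_{n-1}b_n)$. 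The argument for $\mathsf{s.copy_{\overline{t}}}$ and projection onto $\overline{x_n}$ is entirely symmetric.

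The main obstacle is the well-definedness of $S$; the hypothesis $t=n$ is essential in two ways: only constants sit below $x_n$-transitions, so copying never perturbs any internal tag, and the unique-parent-state assumption on leaf transitions guarantees that $q_r$ determines a single constant, preventing the spurious mixing highlighted as Issue~(2) in the text. Both safeguards fail for $t<n$, where $q_r$ may accept multiple distinct subtrees, which is precisely why the algorithm has to be composed with the forward variable-order swapping procedure before being applied at higher layers.
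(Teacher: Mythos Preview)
Your proposal is correct and follows essentially the same approach as the paper's proof: both construct an explicit bijection between $\lang(\aut)$ and $\lang(\mathsf{s.copy_n}(\aut))$ at the level of accepting runs, keeping the run unchanged above the $x_n$-layer, replacing the $x_n$-transitions by their copied versions, and letting the leaf labels be forced by the unique-parent-state assumption; tag preservation and the projection identity are then read off directly. The only cosmetic difference is that the paper argues injectivity via the uniqueness-of-tags lemma and surjectivity via a separate preimage construction, whereas you fold both into a single invertibility argument using that the tag $i$ on an $x^i_n$-symbol recovers the original transition.
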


From the lemma above, we get the hint that the copy subtree procedure works only
at the layer directly above leaf transitions, i.e., when applied to $x_n$. However, if we
can reorder the variable without changing the set of quantum states encoded in a
TA, then the projection procedure can be applied to any qubit. Below we will
demonstrate a procedure for variable reordering (it is similar to a~BDD
variable reordering procedure~\cite{FeltYBS93}), but with an additional effort to preserve tree tags.

\begin{example}
Consider the following tree with the variable order $x_1>x_2$ 
$$x_1(x_2(c_{00},\cemph{c_{01}}),x_2(\cemph{c_{10}},c_{11})),$$
here $c_{ij}$ is the amplitude of $\ket{ij}$, which intuitively means $x_1$ takes value $i$ and $x_2$ takes $j$.
If we swap the variable order of the two variables, one can construct the tree below to capture the same quantum state
$$x_2(x_1(c_{00},\cemph{c_{10}}),x_1(\cemph{c_{01}},c_{11})).$$
Notice the main difference of the two trees is that the two leaf labels $c_{10}$ and $c_{01}$ are swapped. This is because the second tree first picks the value of $x_2$ and then $x_1$, so the $01$ node should be labeled~$c_{10}$, which denotes $x_1$ takes value $1$ and $x_2$ takes value $0$. \qed
\end{example}

\algForwardSwapping

Inspired by the example, we can swap the order of two consecutive variables by modifying the transitions of a TA. One difficulty is that we want to keep trees' tags, so we introduce two procedures \emph{forward variable order swapping}~(\cref{algo:forward_swapping}) and \emph{backward variable order swapping}~(\cref{algo:backward_swapping}) to modify a variable's order while maintaining the trees' tag.

\cref{algo:forward_swapping} swaps the variable order of $x_t$ and its succeeding symbol ${x_l}$, assuming the variable order is $\ldots>x_t>x_l>\ldots$.
We assume that before running forward variable swapping, all symbols corresponding to qubits $x_t$ and $x_l$ are assigned unique numbers by the tagging procedure.
After running the forward swapping procedure, we remember the unique numbers of both succeeding symbols ${x^i_l}$ and ${x^j_l}$ at the new upper layer's symbol $x^{i,j}_{l}$~(Line~3). So the trees' tag can be recovered in the backward variable order swapping procedure (Line~3 of~\cref{algo:backward_swapping}).

\begin{algorithm}[b]
\KwIn{A tagged TA $\aut=\tuple{Q, \Sigma,
	\Delta, \rootstates}$}
\KwOut{The tagged TA $\tuple{Q', \Sigma',
	\Delta', \rootstates}$}
	$\Delta_{\mathsf{rm}} := \Delta_{\mathsf{add}} := \emptyset, Q' := Q, \Sigma' := \Sigma$\;
\ForEach{$\transtree {q} {x^{i,j}_{l}} {q'_0,q'_1}, \transtree {q'_0} {x^h_{t}} {q_{00},q_{10}}, \transtree {q'_1} {x^h_{t}} {q_{01},q_{11}} \in \Delta$}{
    $\Delta_{\mathsf{add}} := \Delta_{\mathsf{add}} \cup \{\transtree {q} {x^h_t} {q''_0,q''_1}, \transtree {q''_0} {x^i_l} {q_{00},\cemph{q_{01}}}, \transtree {q''_1} {x^j_l} {\cemph{q_{10}},q_{11}} \}$\;
    $\Delta_{\mathsf{rm}} := \Delta_{\mathsf{rm}} \cup \{\transtree {q} {x^{i,j}_{l}} {q'_0,q'_1}, \transtree {q'_0} {x^h_{t}} {q_{00},\cemph{q_{10}}}, \transtree {q'_1} {x^h_{t}} {\cemph{q_{01}},q_{11}} \}$\;
    $Q':=Q'\cup\{q''_0,q''_1\}$\;
    $\Sigma':=\Sigma'\setminus\{x^{i,j}_{l}\}$\;
}	
\Return {$\tuple{Q', \Sigma',
	(\Delta\setminus\Delta_{\mathsf{rm}})\cup \Delta_{\mathsf{add}}, \rootstates}$}\;
\caption{Backward variable order swapping procedure on $x_{t}$, $\mathsf{b.swap}_t(\aut)$}
\label{algo:backward_swapping}
\end{algorithm}


Then, the projection is computed as follows:
\begin{equation}
\mathsf{Prj}(\aut, x_t, b) = \mathsf{b.swap}^{n-t}_t(\mathsf{s.copy}(\mathsf{f.swap}^{n-t}_t(\aut),x_t,b)),
\end{equation}
%
where a~superscript $i$ denotes repetition of the procedure $i$ times.
Each time when the forward swapping procedure is triggered, we move $x^h_t$ one layer lower in all trees accepted by~$\aut$.
We can move $x^h_t$ to the layer above the leaf by repeatedly applying the forward swapping procedure, which fulfills the requirement for executing the subtree copying procedure.
Then we use the backward swap procedure to return the variables to the original order. This procedure is potentially expensive, but TA minimization
algorithms~\cite{tata,AbdullaBHKV08,AbdullaHK07} can help to significantly reduce the cost.

\begin{example}\label{ex:f.swap}\label{ex:projection}
Here we demonstrate how the projection operations works with a concrete example.
We assume that $\aut$ is a tagged TA with the root state $q$ and the following transitions:

  \begin{align*}
    \transtree {q} {x^1_1} {q_l, q_r} &&
    \transtree {q_l} {x^2_2} {q_{1}, q_{0}} &&
    \transtree {q_{0}} { \algzero} {}
    \\
    &&
    \transtree {q_l} {x^3_2} {q_{0}, q_{1}} &&
    \transtree {q_{1}} { \algone} {}
    \\
    &&
    \transtree {q_r} {x^4_2} {q_{0}, q_{0}}
  \end{align*}
  Observe that
  $\lang(\aut) = \{T_1,T_2\}$, where
  \begin{equation*}
  T_1=x^1_1(x^2_2(\algone,\algzero),x^4_2(\algzero,\algzero))\mbox{\ \ \ and\ \ \ }T_2=x^1_1(x^3_2(\algzero,\algone),x^4_2(\algzero,\algzero)).
  \end{equation*}
Then $\mathsf{f.swap}_k(\aut)$ produces a TA with a~single root state $q$ and
the following transitions
\begin{align*}
  \transtree {q} {x^{2,4}_2} {q_2, q_3} &&
  \transtree {q_2} {x^1_1} {q_{1}, q_{0}} &&
  \transtree {q_4} {x^1_1} {q_{0}, q_{0}}&&
  \transtree {q_{0}} { \algzero} {}
\\
  \transtree {q} {x^{3,4}_2} {q_4, q_5} &&
  \transtree {q_3} {x^1_1} {q_{0}, q_{0}} &&
  \transtree {q_5} {x^1_1} {q_{1}, q_{0}} &&
  \transtree {q_{1}} { \algone} {}
\end{align*}
The language $\lang(\mathsf{f.swap}_k(\aut))$ is  $\{T'_1,T'_2\}$, where 
$$T'_1=x^{2,4}_2(x^1_1(\algone,\algzero),x^1_1(\algzero,\algzero)) \mbox{\ \ \ and\ \ \ }T'_2=x^{3,4}_2(x^1_1(\algzero,\algzero),x^1_1(\algone,\algzero)).$$
Note that $T'_1$ and $T'_2$ represent the same quantum states as $T_1$ and $T_2$ above.
Then $\mathsf{s.copy}_1( \mathsf{f.swap}_1(\aut))$ produces the following TA with the root
  state~$q$:
\begin{align*}
  \transtree {q} {x^{2,4}_2} {q_2, q_3} &&
  \transtree {q_2} {x^1_1} {q_{0}, q_{0}} &&
  \transtree {q_4} {x^1_1} {q_{0}, q_{0}} &&
  \transtree {q_{0}} { \algzero} {}
  \\
  \transtree {q} {x^{3,4}_2} {q_4, q_5} &&
  \transtree {q_3} {x^1_1} {q_{0}, q_{0}} &&
  \transtree {q_5} {x^1_1} {q_{0}, q_{0}} &&
 \transtree {q_{1}} { \algone} {}
\end{align*}

Next we apply the backward swapping procedure to obtain $\aut_{T_{x_1}}$, the
  final result of applying projection on $\aut$.
  More concretely, $\aut_{T_{x_1}}=\mathsf{b.swap}_1(\mathsf{s.copy}_1(
  \mathsf{f.swap}_1(\aut))$ produces a~TA with the root state~$q$ and the following
  transitions:
\begin{align*}
  \transtree {q} {x^1_1} {q'_2, q'_3} &&
  \transtree {q'_2} {x^2_2} {q_{0}, q_{0}} &&
  \transtree {q'_4} {x^3_2} {q_{0}, q_{0}} &&
\transtree {q_{0}} { \algzero} {}
\\
  \transtree {q} {x^1_1} {q'_4, q'_5} &&
  \transtree {q'_3} {x^4_2} {q_{0}, q_{0}} &&
  \transtree {q'_5} {x^4_2} {q_{0}, q_{0}} &&
\transtree {q_{1}} { \algone} {}
\end{align*}
Observe that the language after projection is $$\lang(\aut_{T_{x_1}}) = \{x_1(x^2_2(\algzero,\algzero),x^4_2(\algzero,\algzero)), x_1(x^3_2(\algzero,\algzero),x^4_2(\algzero,\algzero))\}, $$which is the expected result.
%
%
%
\qed
\end{example}

\hide{
\begin{lemma}[Forward and backward swapping preserves quantum states]\label{lem:swap.state.inv}
Given a TA $\aut$ that encodes a set of quantum states, $\lang(\mathsf{f.swap}_k(\aut))$, $\lang(\aut)$, and $\lang(\mathsf{b.swap}_k(\aut))$ represent the same set of quantum states.
\end{lemma}

By the above Lemma \ref{lem:swap.state.inv}, let us define the induced maps $\mathsf{f}_k$ and $\mathsf{b}_k$ on
  tags of trees, namely, $ \mathsf{f}_k(\tagg(T)):=\tagg(T')$ and
  $\mathsf{b}_k(\tagg(T')):= \tagg(T)$. It is easy to see that 
  \begin{equation}\label{eq:tagaction}
    \tagg(\mathsf{b.swap}_k(\mathsf{f.swap}_k(T))) = \mathsf{b}_k(\tagg(\mathsf{f.swap}_k(T))) = \mathsf{b}_k(\mathsf{f}_k(\tagg(T)))= \tagg(T).  
  \end{equation}
We also introduce the notion $\aut \simeq_{state} \aut'$ between
two TAs $\aut$ and $\aut'$ if there is a bijection $S\colon \lang(\aut) \to
\lang(\aut')$ and $\lang(\aut)$ and $\lang(\aut')$ represent the same set of
quantum states. 
Next, we prove an auxiliary lemma and use it to show that the projection procedure is tag-preserving. 

\begin{lemma}\label{lem:swap.tag.inv}
Given a TA $\aut$. For each $k\in [n]$ and for each TA $\aut'$ with $\aut'
  \simeq_{\tagg} \mathsf{f.swap}_k(\aut)$, we have $\mathsf{b.swap}_k (\aut')
  \simeq_{\tagg} \aut$.
\end{lemma}

}

\begin{theorem}\label{thm:projtagpreserving}
Let~$\aut$ be a tagged TA. Then it holds that $\mathsf{Prj}(\aut, x_t, b) \simeq_{\tagg} \aut$ and, moreover, $\lang(\mathsf{Prj}(\aut, x_t, b)) = \{ b \; ? \; T_{x_t} \ : \ T_{\overline{x_t}} \mid T \in \lang(\aut) \}$.
%
\end{theorem}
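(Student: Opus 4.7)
The plan is to decompose the statement along the three constituents of $\mathsf{Prj}$, namely $\mathsf{f.swap}^{n-t}_t$, $\mathsf{s.copy}(\cdot, x_t, b)$, and $\mathsf{b.swap}^{n-t}_t$, and argue (i)~semantic correctness (the resulting TA encodes the set $\{\, b \,?\, T_{x_t} : T_{\overline{x_t}} \mid T \in \lang(\aut)\,\}$) and (ii)~tag preservation separately. The key enabling fact from the preceding material is \cref{lem:copy.tag.inv}: $\mathsf{s.copy}(\cdot, x_t, b)$ is correct and tag-preserving \emph{only when $x_t$ sits directly above the leaves}. Our job is therefore to show that $\mathsf{f.swap}^{n-t}_t$ brings $x_t$ to that bottom layer without altering the encoded quantum states, and that $\mathsf{b.swap}^{n-t}_t$ restores the original order while also restoring the original tags.

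First I would prove a one-step swap lemma: for any tagged TA $\aut$, $\mathsf{f.swap}_k(\aut)$ represents the same set of quantum states as $\aut$, where the correspondence $T \mapsto T'$ is obtained by reordering the variables $x_k$ and $x_{k+1}$ in every accepted tree while preserving the function $\{0,1\}^n \to \integers^5$ it encodes. This is essentially the correctness of the classical BDD variable-swap, and follows directly from inspecting the transitions rewritten in \cref{algo:forward_swapping}: an $x_t$-node with two children labeled $x_l$ is replaced by an $x_l$-node with two children labeled $x_t$, with leaves permuted so that the path $b_t b_l$ in the new tree leads to the same leaf as the path $b_l b_t$ in the original. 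The same argument applies symmetrically to $\mathsf{b.swap}_k$. Crucially, the composed symbol $x^{i,j}_l$ constructed on Line~3 of \cref{algo:forward_swapping} records the two original tag numbers $i,j$ of the lower layer, so that \cref{algo:backward_swapping} can undo the swap deterministically and reconstruct the original tag. Hence $\mathsf{b.swap}_k(\mathsf{f.swap}_k(\aut)) \simeq_{\tagg} \aut$.

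I would then generalise this to a ``sandwich'' lemma: if $\aut' \simeq_{\tagg} \mathsf{f.swap}_k(\aut)$, then $\mathsf{b.swap}_k(\aut') \simeq_{\tagg} \aut$. This follows by composing three bijections on languages: the swap bijection $\mathsf{f.swap}_k \colon \lang(\aut) \to \lang(\mathsf{f.swap}_k(\aut))$ from the first step, the given bijection $S' \colon \lang(\aut') \to \lang(\mathsf{f.swap}_k(\aut))$ witnessing $\simeq_\tagg$, and the swap bijection $\mathsf{b.swap}_k \colon \lang(\aut') \to \lang(\mathsf{b.swap}_k(\aut'))$. One checks that tags are preserved along the composition, because the swap procedures act on tags via mutually inverse relabellings (move $x^{i,j}_l$ symbols across layers) and $S'$ preserves tags by hypothesis. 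Iterating this $n-t$ times yields: if $\aut''$ is any tagged TA with $\aut'' \simeq_\tagg \mathsf{f.swap}^{n-t}_t(\aut)$, then $\mathsf{b.swap}^{n-t}_t(\aut'') \simeq_\tagg \aut$.

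Finally I would put everything together. After $n-t$ forward swaps, $x_t$ occupies the layer immediately above the leaves in $\mathsf{f.swap}^{n-t}_t(\aut)$. By \cref{lem:copy.tag.inv}, applying $\mathsf{s.copy}(\cdot, x_t, b)$ at that layer is tag-preserving over the projection $T \mapsto (b\,?\,T_{x_t}:T_{\overline{x_t}})$, so
\[
\mathsf{s.copy}(\mathsf{f.swap}^{n-t}_t(\aut),x_t,b) \simeq_\tagg \mathsf{f.swap}^{n-t}_t(\aut),
\]
and its language consists exactly of the projections of the trees in $\lang(\mathsf{f.swap}^{n-t}_t(\aut))$. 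Combining with the iterated sandwich lemma above, the outer $\mathsf{b.swap}^{n-t}_t$ returns us to a TA with the original variable order that is tag-related to $\aut$, and whose accepted trees are precisely $\{\, b\,?\,T_{x_t}:T_{\overline{x_t}} \mid T \in \lang(\aut)\,\}$. The main obstacle I expect is the bookkeeping around the composite symbols $x^{i,j}_l$: one must verify that each forward swap injects fresh symbols that the matching backward swap can decode unambiguously, so that the bijections used above are genuinely tag-preserving at the symbol level. Once this is handled, the theorem follows by a short induction on the number of swaps.
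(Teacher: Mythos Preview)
Your proposal is correct and follows essentially the same approach as the paper: the paper also proves a one-step swap lemma showing forward/backward swapping preserves quantum states, then the exact ``sandwich'' lemma you describe (if $\aut' \simeq_\tagg \mathsf{f.swap}_t(\aut)$ then $\mathsf{b.swap}_t(\aut') \simeq_\tagg \aut$), and finally combines these inductively with \cref{lem:copy.tag.inv} via a diagram of $\mathsf{f.swap}$/$\mathsf{b.swap}$ steps with $\mathsf{s.copy}$ in the middle. Your identification of the composite-symbol bookkeeping as the main technical point is also accurate.
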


\subsubsection{Binary Operation: $\aut_{T_1 \pm T_2}$}

Binary operation can be done by a modified product construction (cf.\ \cref{algo:binary_op}).
Notice that since we apply binary operations only over TAs derived from the same source TA, i.e., initially they have the same $k$ at the leaf transitions, and the only possibility of changing the $k$ part of a leaf symbol is the multiplication with $\frac{1}{\sqrt{2}}$, which is done only after all binary operations in~\cref{tab:quantum_gates}, we can safely assume without loss of generality that $k_1= k_2$.

\begin{algorithm}[t]
\KwIn{Two tagged TAs $\aut_{1}=\tuple{Q_1, \Sigma, \Delta_1, \{q_1\}}$ and
  $\aut_{2}=\tuple{Q_2, \Sigma, \Delta_2, \{q_2\}}$. }
\KwOut{The tagged TA $\aut'$ such that $\lang(\aut') = \tuple{T_1\pm T_2 \mid T_1 \in \lang(\aut_1)\wedge T_2\in\lang(\aut_2})$}
$\Delta'_i := 
\{  \transtree {(q^1,q^2)} {x^i_j} {(q^1_l,q^2_l),(q^1_r,q^2_r)} \mid   \transtree {q^1} {x^i_j} {q^1_l, q^1_r}\in \Delta_1 \wedge \transtree {q^2} {x^i_j} {q^2_l, q^2_r}\in \Delta_2 \}$\;
$\Delta'_l := \{  \transtree {(q^1,q^2)} {(a_1\pm a_2, b_1\pm b_2, c_1\pm
  c_2, d_1\pm d_2, k_1)} {} \mid   \transtree {q^1} {(a_1,b_1,c_1,d_1,k_1)} {}\in \Delta_1 \wedge \transtree{q^2} {(a_2,b_2,c_2,d_2,k_2)} {}\in \Delta_2 \}$\;

\Return {$\tuple{Q_1 \times Q_2, \Sigma', \Delta', \{(q_1,q_2)\}}$}\;
\caption{Binary operation, $\mathsf{Bin}(\aut_1,\aut_2,\pm)$}
\label{algo:binary_op}
\end{algorithm}

\hide{
\begin{example}
	Now we apply the~$-$ operation on $\aut_{\omega^2\cdot B_{x_1}\cdot
  T_{\overline{x_1}}}$ and $\aut_{\omega^2\cdot B_{\overline{x_1}}\cdot
  T_{x_1}}$ to obtain the final result of running $Y_1$ gate:
  The TA
	$\aut_{\omega^2\cdot B_{x_1}\cdot T_{\overline{x_1}}-\omega^2\cdot
  B_{\overline{x_1}}\cdot T_{x_1} }$ with the root $(q,q)$ and
  transitions
  \begin{align*}
    \transtree {(q,q)} {x^1_1} {(q'_2,q_2), (q_3,q'_3)} &&
    \transtree {(q'_2,q_2)} {x^2_2} {(q_{0},q_0), (q_{0},q_0)} &&&
	\transtree {(q_{0},q_0)} { \algzero} {}
  \\
    \transtree {(q,q)} {x^1_1} {(q'_4,q_4), (q_5,q'_5)} &&
    \transtree {(q_3,q'_3)} {x^4_2} {(q_{1},q_0), (q_{0},q_0)} &&&
  \transtree {(q_1,q_{0})} {c_{\omega^2}} {}
  \\
  &&
  \transtree {(q'_4,q_4)} {x^3_2} {(q_{0},q_0), (q_{0},q_0)}
  \\
  &&
  \transtree {(q_5,q'_5)} {x^4_2} {(q_{0},q_0), (q_{1},q_0)}
  \end{align*}
	We ignore transitions that are not used by any accepting tree. For
  example, we omit the transitions
  \begin{equation*}
    \transtree {(q,q)} {x^1_1} {(q'_2,q_4), (q_3,q'_5)}
    \qquad\text{and}\qquad
    \transtree {(q,q)} {x^1_1} {(q'_4,q_2), (q_5,q'_3)}
  \end{equation*}
	because the states $(q'_2,q_4)$ and $(q'_4,q_2)$ do not have any
  outgoing transitions, and the transitions
  \begin{equation*}
	\transtree {(q_3,q'_5)} {x^4_2} {(q_{0},q_0), (q_{1},q_0)}
  \qquad\text{and}\qquad
	\transtree {(q_5,q'_3)} {x^4_2} {(q_{1},q_0), (q_{0},q_0)}
  \end{equation*}
  are also missing because $(q_3,q'_5)$ and $(q_5,q'_3)$ do not have any incoming
  transitions.
	We can see that $\lang(\aut_{\omega^2\cdot B_{x_1}\cdot T_{\overline{x_1}}-\omega^2\cdot B_{\overline{x_1}}\cdot T_{x_1}} )=\{T_1,T_2\}$, where
  \begin{equation*}
    T_1=x^1_1(x^2_2(\algzero,\algzero),x^4_2(c_{\omega^2},\algzero))
    \qquad\text{and}\qquad
    T_2=x^1_1(x^3_2(\algzero,\algzero),x^4_2(\algzero,c_{\omega^2})) \qed
  \end{equation*}
\end{example}
}

\begin{theorem}\label{thm:binaryOpCorrect}
  Let $\aut_{T_1}$ and $\aut_{T_2}$ be two tagged TAs.
  Then it holds that $\lang(\mathsf{Bin}(\aut_1,\aut_2,\pm))=\{T_1\pm T_2\mid
  T_1\in \lang(\aut_{T_1}) \land T_2\in \lang(\aut_{T_2}) \land \tagg(T_1)=\tagg(T_2)\}$.
\end{theorem}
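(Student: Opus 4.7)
The plan is to prove the two inclusions separately, using a standard product-automaton argument adapted to our tagged setting.

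For the $(\supseteq)$ direction, I would start with $T_1 \in \lang(\aut_{T_1})$, $T_2 \in \lang(\aut_{T_2})$ with $\tagg(T_1)=\tagg(T_2)$, and fix accepting runs $\rho_1$ of $\aut_1$ on $T_1$ and $\rho_2$ of $\aut_2$ on $T_2$. Since the tags coincide, $T_1$ and $T_2$ have identical sets of nodes and, at every internal node $u$, the same labelled symbol $x^i_j$; only the leaf symbols may differ. Define a tree $T$ over the same node set with $L_T(u) = L_{T_1}(u) = L_{T_2}(u)$ at internal nodes and, at a leaf $u$, $L_T(u) = L_{T_1}(u) \pm L_{T_2}(u)$ computed componentwise as in Algorithm~\ref{algo:binary_op} (this is where the assumption $k_1=k_2$ enters; it is justified by the remark preceding the algorithm). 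Let $\rho$ be the tree whose node at $u$ is the pair $(L_{\rho_1}(u), L_{\rho_2}(u))$. I would then verify, by a straightforward case split on internal vs.\ leaf nodes, that every transition used by $\rho$ is in $\Delta'$: internal transitions are in $\Delta'_i$ by definition because both components agree on the symbol $x^i_j$; leaf transitions are in $\Delta'_l$ because of the componentwise sum. Finally, since $\rho_1$ and $\rho_2$ are rooted at $q_1$ and $q_2$, $\rho$ is rooted at $(q_1, q_2)$, so $T \in \lang(\mathsf{Bin}(\aut_1,\aut_2,\pm))$.

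For the $(\subseteq)$ direction, I would take $T \in \lang(\mathsf{Bin}(\aut_1,\aut_2,\pm))$ with an accepting run $\rho$, and project each state $(q^a, q^b)$ in $\rho$ onto its two components to obtain two candidate label trees $\rho_1$ and $\rho_2$. Using these, I construct $T_1$ and $T_2$ by copying the internal symbols from $T$ at internal nodes, and at a leaf $u$ by using the unique leaf transitions of $\aut_1$ and $\aut_2$ out of $L_{\rho_1}(u)$ and $L_{\rho_2}(u)$ respectively — here the paper's assumption that each leaf transition has a unique parent state is crucial, as it makes $L_{T_1}(u)$ and $L_{T_2}(u)$ well-defined from the state alone. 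By construction of $\Delta'_i$, each internal step of $\rho$ came from matching internal transitions in $\Delta_1$ and $\Delta_2$, so $\rho_1$ and $\rho_2$ are runs of $\aut_1, \aut_2$ on $T_1, T_2$; rootedness gives acceptance. Matching internal symbols also immediately yields $\tagg(T_1) = \tagg(T_2)$. By the definition of $\Delta'_l$, the leaf values of $T$ are exactly the componentwise $\pm$ of the leaf values of $T_1$ and $T_2$, so $T = T_1 \pm T_2$.

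The main subtlety — and the step I would be most careful about — is the reconstruction of $T_1$ and $T_2$ at the leaves in the $(\subseteq)$ direction. In a general product construction one would just read off leaf labels from $\Delta_1$ and $\Delta_2$, but here they have been collapsed into a single $\pm$-value in $\Delta'_l$, so one could fear ambiguity. This is precisely neutralised by the earlier global assumption on TAs that a leaf state uniquely determines the leaf symbol, combined with the fact that a product leaf state $(q^1,q^2)$ in $\Delta'_l$ arose from a unique pair of leaf transitions in $\Delta_1 \times \Delta_2$. Apart from this point, everything else is a routine verification that product runs correspond bijectively to pairs of component runs with matching internal-symbol structure, i.e., with equal tags.
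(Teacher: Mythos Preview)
Your proposal is correct and follows essentially the same product-automaton approach as the paper; indeed, the paper's own proof is only a brief sketch that asserts the existence of the bijection you construct explicitly and defers the details to the analogous argument for \cref{thm:thmRestrTagPreserv}. Your two-inclusion decomposition, and in particular your careful handling of leaf reconstruction via the unique-parent-state assumption, supplies exactly the details the paper omits.
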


\begin{corollary}
The composition-based encoding of quantum gate operations is correct.
\end{corollary}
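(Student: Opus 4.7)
The plan is to establish the corollary by structural induction on the update formulae in \cref{tab:quantum_gates}, leveraging the tag-preservation theorems already proved for each constituent operation. First I would observe that every update formula for a supported gate has the shape $\mathsf{term}_1 \pm \mathsf{term}_2$ (or in the case of the multi-qubit gates, a nested version of the same), where each $\mathsf{term}_i$ is built from the input function $T$ by finitely many applications of projection $T \mapsto T_{x_t}$ or $T \mapsto T_{\overline{x_t}}$, restriction by $B_{x_t}$ or $B_{\overline{x_t}}$, and scalar multiplication by $\omega$ or $\frac{1}{\sqrt{2}}$ (and their powers). Thus it suffices to show that for any tagged TA $\aut_T$ and any such term~$\tau(T)$, the TA $\aut_{\tau(T)}$ produced by composing the corresponding operations satisfies
\begin{equation*}
\aut_{\tau(T)} \simeq_{\tagg} \aut_T
\quad\text{and}\quad
\lang(\aut_{\tau(T)}) = \{\tau(T) \mid T \in \lang(\aut_T)\}.
\end{equation*}

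The base case is trivial: $\aut_T$ itself, obtained via $\tagg(\aut)$, has $\lang(\aut_T) = \lang(\aut)$ with unique tags by \cref{lem:tag}. For the inductive step, each atomic step is handled by one of the previously proved results: \cref{thm:thmRestrTagPreserv} for restriction, \cref{thm:multiTagPreserve} for multiplication, and \cref{thm:projtagpreserving} for projection. Each guarantees both that the language is transformed as expected and that there is a tag-preserving bijection $S$ between the input and output languages. Composing two such bijections yields another bijection with the same property, so the invariant $\aut_{\tau(T)} \simeq_{\tagg} \aut_T$ propagates, and the language equation composes accordingly.

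The main obstacle—and the only step that genuinely requires the tagging machinery—is the binary operation. Given $\aut_{\mathsf{term}_1}$ and $\aut_{\mathsf{term}_2}$ constructed from the same tagged $\aut_T$, I must argue that the product construction of \cref{algo:binary_op} combines exactly the pairs $(\mathsf{term}_1(T), \mathsf{term}_2(T))$ for each $T \in \lang(\aut_T)$, and never crosses trees $\mathsf{term}_1(T)$ with $\mathsf{term}_2(T')$ for $T \neq T'$. This is precisely where \cref{thm:binaryOpCorrect} is invoked: the algorithm only produces leaf transitions for pairs of leaves reached by identical sequences of tagged internal symbols, so only trees with matching tags get combined. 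Since the two bijections from the inductive hypothesis agree on the tag component of their image (both are $\tagg$-preserving), the set $\{(\mathsf{term}_1(T), \mathsf{term}_2(T)) \mid T \in \lang(\aut_T)\}$ is exactly the set of tag-matched pairs, and hence $\lang(\aut_{\mathsf{term}_1 \pm \mathsf{term}_2}) = \{\mathsf{term}_1(T) \pm \mathsf{term}_2(T) \mid T \in \lang(\aut_T)\}$.

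Finally, I would close by noting that the correctness of the symbolic update formulae themselves (\cref{thm:}) ensures that $\mathrm{U}(T) = \mathsf{term}_1(T) \pm \mathsf{term}_2(T)$ coincides with the standard matrix-vector semantics of the gate $\mathrm{U}$, and that the untagging step at the end merely strips the auxiliary numbering from internal symbols without altering the accepted functions $\{0,1\}^n \to \integers^5$. Putting these pieces together yields $\lang(\mathrm{U}(\aut)) = \{\mathrm{U}(T) \mid T \in \lang(\aut)\}$ for every gate $\mathrm{U}$ in \cref{tab:quantum_gates}, which is the claim of the corollary.
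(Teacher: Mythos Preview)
Your proposal is correct and follows essentially the same approach as the paper: the paper's proof is simply ``Follows by \cref{thm:thmRestrTagPreserv,thm:multiTagPreserve,thm:projtagpreserving,thm:binaryOpCorrect},'' and your argument spells out in detail how those four results compose (with the additional, welcome, explicit mention of \cref{thm:} for the semantics of the update formulae and of the final untagging step). There is no genuine difference in route, only in level of explicitness.
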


\begin{proof}
Follows by \cref{thm:thmRestrTagPreserv,thm:multiTagPreserve,thm:projtagpreserving,thm:binaryOpCorrect}.
\end{proof}

\hide{
\subsection{Remove tags}
After finishing the process of computing a gate, we apply the following transformation to remove the tree tags of a tagged TA $\aut$.
\begin{itemize}
	\item For all $\transtree {q} {x_t} {q1, q2} \in \Delta$, adding $\transtree {q} {x_t} {q2', q1}$ to $\Delta_{\mathsf{add}}$ and $\transtree {q} {x_t} {q1, q2} $ to $\Delta_{\mathsf{rm}}$.
\end{itemize}
For $CNOT^c_t$ assume w.l.o.g. $c<t$ we create primed version using the TA after executing $X_t$,
\begin{itemize}
	\item For all $\transtree {q} {x_c} {q1, q2} \in \Delta$, adding $\transtree {q} {x_c} {q1, q2'}$ to $\Delta_{\mathsf{add}}$ and $\transtree {q} {x_c} {q1, q2} $ to $\Delta_{\mathsf{rm}}$.
	\item updating $\Delta := \Delta \setminus \Delta_{\mathsf{rm}} \cup \Delta_{\mathsf{add}}$.
\end{itemize}
}

\newcommand{
\begin{table}[t]
\caption{Verification of quantum algorithm. Here, $n$ denotes the parameter
  value for the circuit, \textbf{\#q} denotes the number of
  qubits, \textbf{\#G} denotes the number of gates in the circuit.
  For \autoq, the columns \textbf{before} and \textbf{after} have the format
  ``states (transitions)'' denoting the number of states and transitions in TA
  in the pre-condition and the output of our analysis respectively.
  The column \textbf{analysis} contains the time it took \autoq to derive the TA for
  the output states and $=$ denotes the time it took \vata to test
  equivalence.
  The timeout was 12\,min.
  We use colours to distinguish the \begin{tabular}{l}\bestresult{}\!\!best result\!\!\end{tabular}
    in each row and \begin{tabular}{l}\nacell{}\!\!timeouts\!\!\end{tabular}.
  }
  \vspace{-0.3cm}
\resizebox{\textwidth}{!}{

\begin{tabular}{crrrrrrrrrrrrrrrrcr}\hline
\toprule
  &&&& \multicolumn{6}{c}{\autoq-\permutation} & \multicolumn{6}{c}{\autoq-\composition}& \multicolumn{1}{c}{\sliqsim}& \multicolumn{2}{c}{\feynman} \\
  \cmidrule(lr){5-10} \cmidrule(lr){11-16} \cmidrule(lr){17-17} \cmidrule(lr){18-19}
  & \multicolumn{1}{c}{$n$} & \multicolumn{1}{c}{\textbf{\#q}} & \multicolumn{1}{c}{\textbf{\#G}} & \multicolumn{2}{c}{\textbf{before}} & \multicolumn{2}{c}{\textbf{after}} & \multicolumn{1}{c}{\textbf{analysis}} & \multicolumn{1}{c}{$=$} & \multicolumn{2}{c}{\textbf{before}} & \multicolumn{2}{c}{\textbf{after}} & \multicolumn{1}{c}{\textbf{analysis}} & \multicolumn{1}{c}{$=$} & \multicolumn{1}{c}{\textbf{time}} & \multicolumn{1}{c}{\textbf{verdict}} & \multicolumn{1}{c}{\textbf{time}}\\
\midrule
  \multirow{ 5}{*}{\rotatebox[origin=c]{90}{\bvbench}}
  & 95 & 96  & 241 & 193 & (193)& 193 &(193)& 6.0s & 0.0s & 193 &(193)& 193 &(193)& 7.1s & 0.0s &\bestresult 0.0s & equal & 0.5s\\
  & 96 & 97  & 243 & 195 & (195)& 195 &(195)& 5.9s & 0.0s & 195 &(195)& 195 &(195)& 7.1s & 0.0s &\bestresult 0.0s & equal & 0.5s\\
  & 97 & 98  & 246 & 197 & (197)& 197 &(197)& 6.3s & 0.0s & 197 &(197)& 197 &(197)& 7.4s & 0.0s &\bestresult 0.0s & equal & 0.6s\\
  & 98 & 99  & 248 & 199 & (199)& 199 &(199)& 6.5s & 0.0s & 199 &(199)& 199 &(199)& 7.7s & 0.0s &\bestresult 0.0s & equal & 0.6s\\
  & 99 & 100 & 251 & 201 & (201)& 201 &(201)& 6.7s & 0.0s & 201 &(201)& 201 &(201)& 7.8s & 0.0s &\bestresult 0.0s & equal & 0.6s\\

\midrule

  \multirow{ 5}{*}{\rotatebox[origin=c]{90}{\groversingbench}}
  & 12 & 24 & 5,215   & 49 & (49) & 71  & (71)  & 11s    & 0.0s & 49                                       & (49)          & 71                             &(71)& 49s   & 0.0s & \bestresult 2.8s  & \multicolumn{2}{c}{\timeout}\\
  & 14 & 28 & 12,217  & 57 & (57) & 83  & (83)  & 31s    & 0.0s & 57                                       & (57)          & 83                             &(83)& 2m26s & 0.0s &\bestresult  18s   & \multicolumn{2}{c}{\timeout}\\
  & 16 & 32 & 28,159  & 65 & (65) & 95  & (95)  &\bestresult  1m29s  &\bestresult  0.0s & 65                                       & (65)          & 95                             &(95)& 6m59s & 0.0s & 1m41s & \multicolumn{2}{c}{\timeout}\\
  & 18 & 36 & 63,537  & 73 & (73) & 107 & (107) &\bestresult  4m1s   &\bestresult  0.0s & \multicolumn{6}{c}{\timeout} & 9m27s   & \multicolumn{2}{c}{\timeout}\\
  & 20 & 40 & 141,527 & 81 & (81) & 119 & (119) &\bestresult  10m56s &\bestresult  0.0s & \multicolumn{6}{c}{\timeout} & \multicolumn{1}{c}{\timeout} & \multicolumn{2}{c}{\timeout}\\

\midrule

\multirow{ 5}{*}{\rotatebox[origin=c]{90}{\mctoffolibench}}
  & 8  & 16 & 15 & 33 & (42) & 104     & (149)   &\bestresult 0.0s  &\bestresult 0.0s & 33        &(42)                             & 404                & (915)   & 2.8s  & 0.0s & 1.6s &  equal &  0.0s\\
  & 10 & 20 & 19 & 41 & (52) & 150   & (216)   &\bestresult 0.0s  &\bestresult 0.0s & 41        &(52)                             & 1,560              & (3,607) & 27s   & 0.0s & 6.1s &  equal &  0.1s\\
  & 12 & 24 & 23 & 49 & (62) & 204  & (295)  &\bestresult 0.0s  &\bestresult 0.0s & 49        &(62)                             & 6,172              & (14,363) & 6m48s & 0.1s & 25s  &  equal &  0.1s\\
  & 14 & 28 & 27 & 57 & (72) & 266  & (386)  &\bestresult 0.1s   &\bestresult 0.0s & \multicolumn{6}{c}{\timeout} & 1m40s   &  equal &  0.1s\\
  & 16 & 32 & 31 & 65 & (82) & 336 & (489) &\bestresult 0.2s &\bestresult 0.0s  & \multicolumn{6}{c}{\timeout} & \timeout &  equal &  0.2s\\

\midrule

  \multirow{ 5}{*}{\rotatebox[origin=c]{90}{\grovermultbench}}
  & 6  & 18 & 357   & 37 & (43) & 252   & (315)   & 3.3s  & 0.0s & 37        &(43)                             & 510                                       & (573)  & 12s    & 0.0s &\bestresult 1.7s & \multicolumn{2}{c}{\timeout}\\
  & 7  & 21 & 552   & 43 & (50) & 481   & (608)   & 10s   & 0.0s & 43        &(50)                             & 1,123                                     & (1,250)& 42s    & 0.0s &\bestresult 5.4s & \multicolumn{2}{c}{\timeout}\\
  & 8  & 24 & 939   & 49 & (57) & 934   & (1,189) & 39s   & 0.1s & 49        &(57)                             & 2,472                                     & (2,727)& 2m40s  & 0.0s &\bestresult 26s  & \multicolumn{2}{c}{\timeout}\\
  & 9  & 27 & 1,492 & 55 & (64) & 1,835 & (2,346) & 2m17s & 0.4s & 55        &(64)                             & 5,421                                     & (5,932)& 10m13s & 0.1s &\bestresult 2m5s & \multicolumn{2}{c}{\timeout}\\
  & 10 & 30 & 2,433 & 61 & (71) & 3,632 & (4,655) &\bestresult 9m48s &\bestresult 2.1s & \multicolumn{6}{c}{\timeout} & 11m31s & \multicolumn{2}{c}{\timeout}\\
\bottomrule
\end{tabular}

}\label{table:exp_ver}
\vspace*{-3mm}
\end{table}
}[0]{
\begin{table}[t]
\caption{Verification of quantum algorithm. Here, $n$ denotes the parameter
  value for the circuit, \textbf{\#q} denotes the number of
  qubits, \textbf{\#G} denotes the number of gates in the circuit.
  For \autoq, the columns \textbf{before} and \textbf{after} have the format
  ``states (transitions)'' denoting the number of states and transitions in TA
  in the pre-condition and the output of our analysis respectively.
  The column \textbf{analysis} contains the time it took \autoq to derive the TA for
  the output states and $=$ denotes the time it took \vata to test
  equivalence.
  The timeout was 12\,min.
  We use colours to distinguish the \begin{tabular}{l}\bestresult{}\!\!best result\!\!\end{tabular}
    in each row and \begin{tabular}{l}\nacell{}\!\!timeouts\!\!\end{tabular}.
  }
  \vspace{-0.3cm}
\resizebox{\textwidth}{!}{

\begin{tabular}{crrrrrrrrrrrrrrrrcr}\hline
\toprule
  &&&& \multicolumn{6}{c}{\autoq-\permutation} & \multicolumn{6}{c}{\autoq-\composition}& \multicolumn{1}{c}{\sliqsim}& \multicolumn{2}{c}{\feynman} \\
  \cmidrule(lr){5-10} \cmidrule(lr){11-16} \cmidrule(lr){17-17} \cmidrule(lr){18-19}
  & \multicolumn{1}{c}{$n$} & \multicolumn{1}{c}{\textbf{\#q}} & \multicolumn{1}{c}{\textbf{\#G}} & \multicolumn{2}{c}{\textbf{before}} & \multicolumn{2}{c}{\textbf{after}} & \multicolumn{1}{c}{\textbf{analysis}} & \multicolumn{1}{c}{$=$} & \multicolumn{2}{c}{\textbf{before}} & \multicolumn{2}{c}{\textbf{after}} & \multicolumn{1}{c}{\textbf{analysis}} & \multicolumn{1}{c}{$=$} & \multicolumn{1}{c}{\textbf{time}} & \multicolumn{1}{c}{\textbf{verdict}} & \multicolumn{1}{c}{\textbf{time}}\\
\midrule
  \multirow{ 5}{*}{\rotatebox[origin=c]{90}{\bvbench}}
  & 95 & 96  & 241 & 193 & (193)& 193 &(193)& 6.0s & 0.0s & 193 &(193)& 193 &(193)& 7.1s & 0.0s &\bestresult 0.0s & equal & 0.5s\\
  & 96 & 97  & 243 & 195 & (195)& 195 &(195)& 5.9s & 0.0s & 195 &(195)& 195 &(195)& 7.1s & 0.0s &\bestresult 0.0s & equal & 0.5s\\
  & 97 & 98  & 246 & 197 & (197)& 197 &(197)& 6.3s & 0.0s & 197 &(197)& 197 &(197)& 7.4s & 0.0s &\bestresult 0.0s & equal & 0.6s\\
  & 98 & 99  & 248 & 199 & (199)& 199 &(199)& 6.5s & 0.0s & 199 &(199)& 199 &(199)& 7.7s & 0.0s &\bestresult 0.0s & equal & 0.6s\\
  & 99 & 100 & 251 & 201 & (201)& 201 &(201)& 6.7s & 0.0s & 201 &(201)& 201 &(201)& 7.8s & 0.0s &\bestresult 0.0s & equal & 0.6s\\

\midrule

  \multirow{ 5}{*}{\rotatebox[origin=c]{90}{\groversingbench}}
  & 12 & 24 & 5,215   & 49 & (49) & 71  & (71)  & 11s    & 0.0s & 49                                       & (49)          & 71                             &(71)& 49s   & 0.0s & \bestresult 2.8s  & \multicolumn{2}{c}{\timeout}\\
  & 14 & 28 & 12,217  & 57 & (57) & 83  & (83)  & 31s    & 0.0s & 57                                       & (57)          & 83                             &(83)& 2m26s & 0.0s &\bestresult  18s   & \multicolumn{2}{c}{\timeout}\\
  & 16 & 32 & 28,159  & 65 & (65) & 95  & (95)  &\bestresult  1m29s  &\bestresult  0.0s & 65                                       & (65)          & 95                             &(95)& 6m59s & 0.0s & 1m41s & \multicolumn{2}{c}{\timeout}\\
  & 18 & 36 & 63,537  & 73 & (73) & 107 & (107) &\bestresult  4m1s   &\bestresult  0.0s & \multicolumn{6}{c}{\timeout} & 9m27s   & \multicolumn{2}{c}{\timeout}\\
  & 20 & 40 & 141,527 & 81 & (81) & 119 & (119) &\bestresult  10m56s &\bestresult  0.0s & \multicolumn{6}{c}{\timeout} & \multicolumn{1}{c}{\timeout} & \multicolumn{2}{c}{\timeout}\\

\midrule

\multirow{ 5}{*}{\rotatebox[origin=c]{90}{\mctoffolibench}}
  & 8  & 16 & 15 & 33 & (42) & 104     & (149)   &\bestresult 0.0s  &\bestresult 0.0s & 33        &(42)                             & 404                & (915)   & 2.8s  & 0.0s & 1.6s &  equal &  0.0s\\
  & 10 & 20 & 19 & 41 & (52) & 150   & (216)   &\bestresult 0.0s  &\bestresult 0.0s & 41        &(52)                             & 1,560              & (3,607) & 27s   & 0.0s & 6.1s &  equal &  0.1s\\
  & 12 & 24 & 23 & 49 & (62) & 204  & (295)  &\bestresult 0.0s  &\bestresult 0.0s & 49        &(62)                             & 6,172              & (14,363) & 6m48s & 0.1s & 25s  &  equal &  0.1s\\
  & 14 & 28 & 27 & 57 & (72) & 266  & (386)  &\bestresult 0.1s   &\bestresult 0.0s & \multicolumn{6}{c}{\timeout} & 1m40s   &  equal &  0.1s\\
  & 16 & 32 & 31 & 65 & (82) & 336 & (489) &\bestresult 0.2s &\bestresult 0.0s  & \multicolumn{6}{c}{\timeout} & \timeout &  equal &  0.2s\\

\midrule

  \multirow{ 5}{*}{\rotatebox[origin=c]{90}{\grovermultbench}}
  & 6  & 18 & 357   & 37 & (43) & 252   & (315)   & 3.3s  & 0.0s & 37        &(43)                             & 510                                       & (573)  & 12s    & 0.0s &\bestresult 1.7s & \multicolumn{2}{c}{\timeout}\\
  & 7  & 21 & 552   & 43 & (50) & 481   & (608)   & 10s   & 0.0s & 43        &(50)                             & 1,123                                     & (1,250)& 42s    & 0.0s &\bestresult 5.4s & \multicolumn{2}{c}{\timeout}\\
  & 8  & 24 & 939   & 49 & (57) & 934   & (1,189) & 39s   & 0.1s & 49        &(57)                             & 2,472                                     & (2,727)& 2m40s  & 0.0s &\bestresult 26s  & \multicolumn{2}{c}{\timeout}\\
  & 9  & 27 & 1,492 & 55 & (64) & 1,835 & (2,346) & 2m17s & 0.4s & 55        &(64)                             & 5,421                                     & (5,932)& 10m13s & 0.1s &\bestresult 2m5s & \multicolumn{2}{c}{\timeout}\\
  & 10 & 30 & 2,433 & 61 & (71) & 3,632 & (4,655) &\bestresult 9m48s &\bestresult 2.1s & \multicolumn{6}{c}{\timeout} & 11m31s & \multicolumn{2}{c}{\timeout}\\
\bottomrule
\end{tabular}

}\label{table:exp_ver}
\vspace*{-3mm}
\end{table}
}

\newcommand{
\begin{table}[t]
\caption{Results for bug finding.  The notation is the same as in
  \cref{table:exp_ver}.
  In addition, the column \textbf{bug?} indicates if the tool caught the
  injected bug: \correct denotes that the bug was found,
  \begin{tabular}{l}\wrongcell{}\!\!\wrong{}\!\!\end{tabular} denotes that the
  tool gave an incorrect result, and
  \begin{tabular}{l}\nacell{}\!\!\unknown{}\!\!\end{tabular} means unknown
  result (includes the tool reporting \emph{unknown}, crash, or not enough
  resources).
  \autoq finds all bugs within the time limit, and we provide the number of iterations needed to catch the bug
  (column \textbf{iter}).
  The timeout was 30\,min.
}\vspace{-0.3cm}
\resizebox{\textwidth}{!}{
\begin{tabular}{clrrrcrcrclrrrcrcrc}
\cmidrule[\heavyrulewidth](lr){1-10}
\cmidrule[\heavyrulewidth](lr){11-19}
  &&&& \multicolumn{2}{c}{\autoq} & \multicolumn{2}{c}{\feynman}& \multicolumn{2}{c}{\qcec}&   & &&\multicolumn{2}{c}{\autoq} & \multicolumn{2}{c}{\feynman}& \multicolumn{2}{c}{\qcec} \\
  \cmidrule(lr){5-6} \cmidrule(lr){7-8} \cmidrule(lr){9-10}
  \cmidrule(lr){14-15} \cmidrule(lr){16-17} \cmidrule(lr){18-19}
  & \multicolumn{1}{c}{\textbf{circuit}} & \multicolumn{1}{c}{\textbf{\#q}} & \multicolumn{1}{c}{\textbf{\#G}} & \multicolumn{1}{c}{\textbf{time}} & \textbf{iter} & \multicolumn{1}{c}{\textbf{time}} & \textbf{bug?} & \multicolumn{1}{c}{\textbf{time}} & \textbf{bug?}
  &\multicolumn{1}{c}{\textbf{circuit}} & \multicolumn{1}{c}{\textbf{\#q}} & \multicolumn{1}{c}{\textbf{\#G}} & \multicolumn{1}{c}{\textbf{time}} & \textbf{iter} & \multicolumn{1}{c}{\textbf{time}} & \textbf{bug?} & \multicolumn{1}{c}{\textbf{time}} & \textbf{bug?} \\

\cmidrule(lr){1-10}\cmidrule(lr){11-19}

  \multirow{ 6}{*}{\rotatebox[origin=c]{90}{\feynmanbench}}
& csum\_mux\_9 & 30 & 141 & \bestresult 0.8s & \bestresult 1 & \nacell 6.5s & \nacell \unknown & \wrongcell 44.0s & \wrongcell \wrong & hwb10 & 16 & 31,765 & 1m42s & 1 & \multicolumn{2}{c}{\timeout} & \bestresult 30.2s & \bestresult \correct\\
& gf2\textasciicircum10\_mult & 30 & 348 & \bestresult 2.0s & \bestresult 1 & \nacell 0.6s & \nacell \unknown & \wrongcell 42.7s & \wrongcell \wrong & hwb11 & 15 & 87,790 & 4m23s & 1 & \multicolumn{2}{c}{\timeout} & \bestresult 35.9s & \bestresult \correct\\
& gf2\textasciicircum16\_mult & 48 & 876 & \bestresult 11s & \bestresult 1 & \nacell 4.8s & \nacell \unknown & 58.5s & \correct & hwb12 & 20 & 171,483 & 13m43s & 1 & \multicolumn{2}{c}{\timeout} & \bestresult 1m3s & \bestresult \correct\\
& gf2\textasciicircum32\_mult & 96 & 3,323 & 2m4s & 1 & \nacell 48.1s & \nacell \unknown & \bestresult 1m58s & \bestresult \correct & hwb8 & 12 & 6,447 & \bestresult 15s & \bestresult 1 & \multicolumn{2}{c}{\timeout} & 23.4s & \correct\\
& ham15-high & 20 & 1,799 & \bestresult 8.0s & \bestresult 1 & \nacell 3m51s & \nacell \unknown & 30.2s & \correct & qcla\_adder\_10 & 36 & 182 & \bestresult 2.8s & \bestresult 1 & \nacell 1.3s & \nacell \unknown & \wrongcell 46.6s & \wrongcell \wrong\\
& mod\_adder\_1024 & 28 & 1,436 & \bestresult 10s & \bestresult 1 & \nacell 9.2s & \nacell \unknown & 31.9s & \correct & qcla\_mod\_7 & 26 & 295 & \bestresult 2.6s & \bestresult 1 & \nacell 1m24s & \nacell \unknown & \wrongcell 38.4s & \wrongcell \wrong\\
\cmidrule(lr){1-10}\cmidrule(lr){11-19}

\multirow{ 10}{*}{\rotatebox[origin=c]{90}{\randombench}}
& 35a & 35 & 106 & \bestresult 3.2s & \bestresult 1 & \nacell 0.2s & \nacell \unknown & \wrongcell 45.7s & \wrongcell \wrong & 70a & 70 & 211 & \bestresult 16s & \bestresult 1 & \nacell 1.1s & \nacell \unknown & 1m18s & \correct\\
& 35b & 35 & 106 & 1.4s & 1 & \bestresult 0.2s & \bestresult \correct & \wrongcell 47.8s & \wrongcell \wrong & 70b & 70 & 211 & 14s & 1 & \bestresult 0.8s & \bestresult \correct & 1m11s & \correct\\
& 35c & 35 & 106 & 1.3s & 1 & \bestresult 0.2s & \bestresult \correct & 47.5s & \correct & 70c & 70 & 211 & \bestresult 12s & \bestresult 1 & \nacell 0.9s & \nacell \unknown & 1m24s & \correct\\
& 35d & 35 & 106 & 1.3s & 1 & \bestresult 0.2s & \bestresult \correct & 48.2s & \correct & 70d & 70 & 211 & 29m29s & 36 & \bestresult 1.2s & \bestresult \correct & 1m26s & \correct\\
& 35e & 35 & 106 & \bestresult 1.3s & \bestresult 1 & \nacell 0.1s & \nacell \unknown & 50.6s & \correct & 70e & 70 & 211 & \bestresult 17s & \bestresult 1 & \nacell 1.0s & \nacell \unknown & 1m30s & \correct\\
& 35f & 35 & 106 & 2.4s & 1 & \bestresult 0.3s & \bestresult \correct & \wrongcell 49.7s & \wrongcell \wrong & 70f & 70 & 211 & 33s & 1 & \bestresult 0.9s & \bestresult \correct & \wrongcell 1m26s & \wrongcell \wrong\\
& 35g & 35 & 106 & \bestresult 4.0s & \bestresult 3 & \nacell 0.2s & \nacell \unknown & 55.3s & \correct & 70g & 70 & 211 & 14m42s & 44 & \nacell 1.2s & \nacell \unknown & \bestresult 1m35s & \bestresult \correct\\
& 35h & 35 & 106 & \bestresult 1.0s & \bestresult 1 & \nacell 0.2s & \nacell \unknown & \nacell 0.6s & \nacell \unknown & 70h & 70 & 211 & \bestresult 13s & \bestresult 1 & \nacell 1.2s & \nacell \unknown & 1m36s & \correct\\
& 35i & 35 & 106 & 1.3s & 1 & \bestresult 0.2s & \bestresult \correct & 54.8s & \correct & 70i & 70 & 211 & \bestresult 23s & \bestresult 1 & \nacell 1.2s & \nacell \unknown & 1m36s & \correct\\
& 35j & 35 & 106 & \bestresult 1.8s & \bestresult 1 & \nacell 0.2s & \nacell \unknown & \wrongcell 51.4s & \wrongcell \wrong & 70j & 70 & 211 & 2m5s & 1 & \nacell 1.4s & \nacell \unknown & \bestresult 1m34s & \bestresult \correct\\
\cmidrule(lr){1-10}\cmidrule(lr){11-19}

\multirow{ 11}{*}{\rotatebox[origin=c]{90}{\revlibbench}}
& add16\_174 & 49 & 65 & \bestresult 2.6s & \bestresult 1 & \multicolumn{2}{c}{\timeout} & 1m8s & \correct & urf1\_149 & 9 & 11,555 & \bestresult 30s & \bestresult 1 & \multicolumn{2}{c}{\timeout} & 35.8s & \correct\\
& add32\_183 & 97 & 129 & \bestresult 17s & \bestresult 1 & \multicolumn{2}{c}{\timeout} & 2m4s & \correct & urf2\_152 & 8 & 5,031 & \bestresult 11s & \bestresult 1 & 21m33s & \correct & 32.5s & \correct\\
& add64\_184 & 193 & 257 & \bestresult 1m55s & \bestresult 1 & \multicolumn{2}{c}{\timeout} & \nacell 0.6s & \nacell \unknown & urf3\_155 & 10 & 26,469 & 1m19s & 1 & \multicolumn{2}{c}{\timeout} & \bestresult 33.0s & \bestresult \correct\\
& avg8\_325 & 320 & 1,758 & \bestresult 21m18s & \bestresult 1 & \multicolumn{2}{c}{\timeout} & \nacell 0.5s & \nacell \unknown & urf4\_187 & 11 & 32,005 & 1m57s & 1 & \multicolumn{2}{c}{\timeout} & \bestresult 31.4s & \bestresult \correct\\
& bw\_291 & 87 & 308 & \bestresult 10s & \bestresult 1 & 11.7s & \correct & 1m55s & \correct & urf5\_158 & 9 & 10,277 & 27s & 1 & \multicolumn{2}{c}{\timeout} & \bestresult 26.6s & \bestresult \correct\\
& cycle10\_293 & 39 & 79 & 0.5s & 1 & \bestresult 0.4s & \bestresult \correct & 1m7s & \correct & urf6\_160 & 15 & 10,741 & 1m6s & 1 & \multicolumn{2}{c}{\timeout} & \bestresult 36.2s & \bestresult \correct\\
& e64-bdd\_295 & 195 & 388 & \bestresult 36s & \bestresult 1 & \multicolumn{2}{c}{\timeout} & \nacell 0.5s & \nacell \unknown & hwb6\_301 & 46 & 160 & \bestresult 2.0s & \bestresult 1 & 2.7s & \correct & 1m7s & \correct\\
& ex5p\_296 & 206 & 648 & 1m52s & 1 & \bestresult 1m29s & \bestresult \correct & \nacell 0.4s & \nacell \unknown & hwb7\_302 & 73 & 282 & \bestresult 8.3s & \bestresult 1 & 10.9s & \correct & 1m38s & \correct\\
& ham15\_298 & 45 & 154 & \bestresult 0.6s & \bestresult 1 & \bestresult 0.6s & \bestresult \correct & 1m14s & \correct & hwb8\_303 & 112 & 450 & \bestresult 27s & \bestresult 1 & 37.9s & \correct & 2m22s & \correct\\
& mod5adder\_306 & 32 & 97 & \bestresult 0.5s & \bestresult 1 & 0.7s & \correct & 1m1s & \correct & hwb9\_304 & 170 & 700 & \bestresult 1m33s & \bestresult 1 & 2m20s & \correct & \nacell 0.6s & \nacell \unknown\\
& rd84\_313 & 34 & 105 & \bestresult 0.5s & \bestresult 1 & 1.1s & \correct & 1m2s & \correct\\
\cmidrule[\heavyrulewidth](lr){1-10}
\cmidrule[\heavyrulewidth](lr){11-19}
\end{tabular}

}
\label{table:exp_bug}
\vspace*{-3mm}
\end{table}
}[0]{
\begin{table}[t]
\caption{Results for bug finding.  The notation is the same as in
  \cref{table:exp_ver}.
  In addition, the column \textbf{bug?} indicates if the tool caught the
  injected bug: \correct denotes that the bug was found,
  \begin{tabular}{l}\wrongcell{}\!\!\wrong{}\!\!\end{tabular} denotes that the
  tool gave an incorrect result, and
  \begin{tabular}{l}\nacell{}\!\!\unknown{}\!\!\end{tabular} means unknown
  result (includes the tool reporting \emph{unknown}, crash, or not enough
  resources).
  \autoq finds all bugs within the time limit, and we provide the number of iterations needed to catch the bug
  (column \textbf{iter}).
  The timeout was 30\,min.
}\vspace{-0.3cm}
\resizebox{\textwidth}{!}{
\begin{tabular}{clrrrcrcrclrrrcrcrc}
\cmidrule[\heavyrulewidth](lr){1-10}
\cmidrule[\heavyrulewidth](lr){11-19}
  &&&& \multicolumn{2}{c}{\autoq} & \multicolumn{2}{c}{\feynman}& \multicolumn{2}{c}{\qcec}&   & &&\multicolumn{2}{c}{\autoq} & \multicolumn{2}{c}{\feynman}& \multicolumn{2}{c}{\qcec} \\
  \cmidrule(lr){5-6} \cmidrule(lr){7-8} \cmidrule(lr){9-10}
  \cmidrule(lr){14-15} \cmidrule(lr){16-17} \cmidrule(lr){18-19}
  & \multicolumn{1}{c}{\textbf{circuit}} & \multicolumn{1}{c}{\textbf{\#q}} & \multicolumn{1}{c}{\textbf{\#G}} & \multicolumn{1}{c}{\textbf{time}} & \textbf{iter} & \multicolumn{1}{c}{\textbf{time}} & \textbf{bug?} & \multicolumn{1}{c}{\textbf{time}} & \textbf{bug?}
  &\multicolumn{1}{c}{\textbf{circuit}} & \multicolumn{1}{c}{\textbf{\#q}} & \multicolumn{1}{c}{\textbf{\#G}} & \multicolumn{1}{c}{\textbf{time}} & \textbf{iter} & \multicolumn{1}{c}{\textbf{time}} & \textbf{bug?} & \multicolumn{1}{c}{\textbf{time}} & \textbf{bug?} \\

\cmidrule(lr){1-10}\cmidrule(lr){11-19}

  \multirow{ 6}{*}{\rotatebox[origin=c]{90}{\feynmanbench}}
& csum\_mux\_9 & 30 & 141 & \bestresult 0.8s & \bestresult 1 & \nacell 6.5s & \nacell \unknown & \wrongcell 44.0s & \wrongcell \wrong & hwb10 & 16 & 31,765 & 1m42s & 1 & \multicolumn{2}{c}{\timeout} & \bestresult 30.2s & \bestresult \correct\\
& gf2\textasciicircum10\_mult & 30 & 348 & \bestresult 2.0s & \bestresult 1 & \nacell 0.6s & \nacell \unknown & \wrongcell 42.7s & \wrongcell \wrong & hwb11 & 15 & 87,790 & 4m23s & 1 & \multicolumn{2}{c}{\timeout} & \bestresult 35.9s & \bestresult \correct\\
& gf2\textasciicircum16\_mult & 48 & 876 & \bestresult 11s & \bestresult 1 & \nacell 4.8s & \nacell \unknown & 58.5s & \correct & hwb12 & 20 & 171,483 & 13m43s & 1 & \multicolumn{2}{c}{\timeout} & \bestresult 1m3s & \bestresult \correct\\
& gf2\textasciicircum32\_mult & 96 & 3,323 & 2m4s & 1 & \nacell 48.1s & \nacell \unknown & \bestresult 1m58s & \bestresult \correct & hwb8 & 12 & 6,447 & \bestresult 15s & \bestresult 1 & \multicolumn{2}{c}{\timeout} & 23.4s & \correct\\
& ham15-high & 20 & 1,799 & \bestresult 8.0s & \bestresult 1 & \nacell 3m51s & \nacell \unknown & 30.2s & \correct & qcla\_adder\_10 & 36 & 182 & \bestresult 2.8s & \bestresult 1 & \nacell 1.3s & \nacell \unknown & \wrongcell 46.6s & \wrongcell \wrong\\
& mod\_adder\_1024 & 28 & 1,436 & \bestresult 10s & \bestresult 1 & \nacell 9.2s & \nacell \unknown & 31.9s & \correct & qcla\_mod\_7 & 26 & 295 & \bestresult 2.6s & \bestresult 1 & \nacell 1m24s & \nacell \unknown & \wrongcell 38.4s & \wrongcell \wrong\\
\cmidrule(lr){1-10}\cmidrule(lr){11-19}

\multirow{ 10}{*}{\rotatebox[origin=c]{90}{\randombench}}
& 35a & 35 & 106 & \bestresult 3.2s & \bestresult 1 & \nacell 0.2s & \nacell \unknown & \wrongcell 45.7s & \wrongcell \wrong & 70a & 70 & 211 & \bestresult 16s & \bestresult 1 & \nacell 1.1s & \nacell \unknown & 1m18s & \correct\\
& 35b & 35 & 106 & 1.4s & 1 & \bestresult 0.2s & \bestresult \correct & \wrongcell 47.8s & \wrongcell \wrong & 70b & 70 & 211 & 14s & 1 & \bestresult 0.8s & \bestresult \correct & 1m11s & \correct\\
& 35c & 35 & 106 & 1.3s & 1 & \bestresult 0.2s & \bestresult \correct & 47.5s & \correct & 70c & 70 & 211 & \bestresult 12s & \bestresult 1 & \nacell 0.9s & \nacell \unknown & 1m24s & \correct\\
& 35d & 35 & 106 & 1.3s & 1 & \bestresult 0.2s & \bestresult \correct & 48.2s & \correct & 70d & 70 & 211 & 29m29s & 36 & \bestresult 1.2s & \bestresult \correct & 1m26s & \correct\\
& 35e & 35 & 106 & \bestresult 1.3s & \bestresult 1 & \nacell 0.1s & \nacell \unknown & 50.6s & \correct & 70e & 70 & 211 & \bestresult 17s & \bestresult 1 & \nacell 1.0s & \nacell \unknown & 1m30s & \correct\\
& 35f & 35 & 106 & 2.4s & 1 & \bestresult 0.3s & \bestresult \correct & \wrongcell 49.7s & \wrongcell \wrong & 70f & 70 & 211 & 33s & 1 & \bestresult 0.9s & \bestresult \correct & \wrongcell 1m26s & \wrongcell \wrong\\
& 35g & 35 & 106 & \bestresult 4.0s & \bestresult 3 & \nacell 0.2s & \nacell \unknown & 55.3s & \correct & 70g & 70 & 211 & 14m42s & 44 & \nacell 1.2s & \nacell \unknown & \bestresult 1m35s & \bestresult \correct\\
& 35h & 35 & 106 & \bestresult 1.0s & \bestresult 1 & \nacell 0.2s & \nacell \unknown & \nacell 0.6s & \nacell \unknown & 70h & 70 & 211 & \bestresult 13s & \bestresult 1 & \nacell 1.2s & \nacell \unknown & 1m36s & \correct\\
& 35i & 35 & 106 & 1.3s & 1 & \bestresult 0.2s & \bestresult \correct & 54.8s & \correct & 70i & 70 & 211 & \bestresult 23s & \bestresult 1 & \nacell 1.2s & \nacell \unknown & 1m36s & \correct\\
& 35j & 35 & 106 & \bestresult 1.8s & \bestresult 1 & \nacell 0.2s & \nacell \unknown & \wrongcell 51.4s & \wrongcell \wrong & 70j & 70 & 211 & 2m5s & 1 & \nacell 1.4s & \nacell \unknown & \bestresult 1m34s & \bestresult \correct\\
\cmidrule(lr){1-10}\cmidrule(lr){11-19}

\multirow{ 11}{*}{\rotatebox[origin=c]{90}{\revlibbench}}
& add16\_174 & 49 & 65 & \bestresult 2.6s & \bestresult 1 & \multicolumn{2}{c}{\timeout} & 1m8s & \correct & urf1\_149 & 9 & 11,555 & \bestresult 30s & \bestresult 1 & \multicolumn{2}{c}{\timeout} & 35.8s & \correct\\
& add32\_183 & 97 & 129 & \bestresult 17s & \bestresult 1 & \multicolumn{2}{c}{\timeout} & 2m4s & \correct & urf2\_152 & 8 & 5,031 & \bestresult 11s & \bestresult 1 & 21m33s & \correct & 32.5s & \correct\\
& add64\_184 & 193 & 257 & \bestresult 1m55s & \bestresult 1 & \multicolumn{2}{c}{\timeout} & \nacell 0.6s & \nacell \unknown & urf3\_155 & 10 & 26,469 & 1m19s & 1 & \multicolumn{2}{c}{\timeout} & \bestresult 33.0s & \bestresult \correct\\
& avg8\_325 & 320 & 1,758 & \bestresult 21m18s & \bestresult 1 & \multicolumn{2}{c}{\timeout} & \nacell 0.5s & \nacell \unknown & urf4\_187 & 11 & 32,005 & 1m57s & 1 & \multicolumn{2}{c}{\timeout} & \bestresult 31.4s & \bestresult \correct\\
& bw\_291 & 87 & 308 & \bestresult 10s & \bestresult 1 & 11.7s & \correct & 1m55s & \correct & urf5\_158 & 9 & 10,277 & 27s & 1 & \multicolumn{2}{c}{\timeout} & \bestresult 26.6s & \bestresult \correct\\
& cycle10\_293 & 39 & 79 & 0.5s & 1 & \bestresult 0.4s & \bestresult \correct & 1m7s & \correct & urf6\_160 & 15 & 10,741 & 1m6s & 1 & \multicolumn{2}{c}{\timeout} & \bestresult 36.2s & \bestresult \correct\\
& e64-bdd\_295 & 195 & 388 & \bestresult 36s & \bestresult 1 & \multicolumn{2}{c}{\timeout} & \nacell 0.5s & \nacell \unknown & hwb6\_301 & 46 & 160 & \bestresult 2.0s & \bestresult 1 & 2.7s & \correct & 1m7s & \correct\\
& ex5p\_296 & 206 & 648 & 1m52s & 1 & \bestresult 1m29s & \bestresult \correct & \nacell 0.4s & \nacell \unknown & hwb7\_302 & 73 & 282 & \bestresult 8.3s & \bestresult 1 & 10.9s & \correct & 1m38s & \correct\\
& ham15\_298 & 45 & 154 & \bestresult 0.6s & \bestresult 1 & \bestresult 0.6s & \bestresult \correct & 1m14s & \correct & hwb8\_303 & 112 & 450 & \bestresult 27s & \bestresult 1 & 37.9s & \correct & 2m22s & \correct\\
& mod5adder\_306 & 32 & 97 & \bestresult 0.5s & \bestresult 1 & 0.7s & \correct & 1m1s & \correct & hwb9\_304 & 170 & 700 & \bestresult 1m33s & \bestresult 1 & 2m20s & \correct & \nacell 0.6s & \nacell \unknown\\
& rd84\_313 & 34 & 105 & \bestresult 0.5s & \bestresult 1 & 1.1s & \correct & 1m2s & \correct\\
\cmidrule[\heavyrulewidth](lr){1-10}
\cmidrule[\heavyrulewidth](lr){11-19}
\end{tabular}

}
\label{table:exp_bug}
\vspace*{-3mm}
\end{table}
}

\vspace{-0.0mm}
\section{Experimental Evaluation}\label{sec:experiments}
\vspace{-0.0mm}

We implemented the proposed TA-based algorithm as a~prototype tool named \autoq
in C++.
We provide two settings:
\permutation, which uses the permutation-based approach
(\cref{sec:TA_op_permutation}) to handle supported gates and switches to the
composition-based approach for the other gates, and
\composition, which handles all gates using the composition-based approach (\cref{sec:TA_op}). 
For checking language equivalence between the TA representing the set of
reachable configurations and the TA for the post-condition, we use the
\vata library~\cite{lengal2012vata}.
We use a~lightweight simulation-based reduction~\cite{BustanG03} after
finishing the Y, Z, S, T, CNOT, CZ, and Tofolli gate operations to keep the
obtained TAs small.\footnote{Our technique computes a~non-maximum simulation by
only checking whether states have the same successors.  The results are in many
cases the same as if the maximum simulation were computed, but the performance
is much better.  Further evaluation of this optimization of simulation is
a~future work.}
All experiments were conducted on a~server with an AMD EPYC 7742 64-core
processor (1.5\,GHz), 1,152\,GiB of RAM (24\,GiB for each process), and
a~1\,TB SSD running Ubuntu 20.04.4~LTS.
Further details (pre- and post-conditions, circuits, etc.) can be found
in \cref{sec:experiments_information}.

\paragraph{Data sets.}
We use the following set of benchmarks with quantum circuits:
\begin{itemize}
  \item \bvbench:  Bernstein-Vazirani's algorithm with one
    hidden string of length~$n$~\cite{BernsteinV93},
  \item \mctoffolibench:  circuits implementing multi-controlled Toffoli gates
    of size~$n$ using a~variation of Nielsen and Chuang's
    decomposition~\cite{NielsenC16} with standard Toffoli gates,
  \item \groversingbench and \grovermultbench:  implementation of Grover's
    search~\cite{Grover96} for a~single oracle and for all possible oracles of
    length~$n$ (we encode the oracle's answer to be taken from the input;
    cf.\ \cref{sec:grover_multi} for more details),
  \item \feynmanbench:  45 benchmarks from the tool suite
    \feynman~\cite{amy2018towards},
  \item \revlibbench:  80 benchmarks of reversible and quantum
    circuits~\cite{WGT+:2008}, and
  \item \randombench:  20 randomly generated quantum circuits (10 circuits with 35
    qubits and 105 gates and 10 circuits with 70 qubits and 210 gates).
\end{itemize}

We note that the benchmarks did not contain any unsupported gates.

\paragraph{Other tools.}
Since no existing work follows the same approach as we do, we compared \autoq
with representatives of the following approaches:
\begin{itemize}
  \item \emph{Quantum circuit simulators}:
    These compute the output of a~quantum circuit for a~given input quantum
    state.
    As a~representative, we selected \sliqsim~\cite{TsaiJJ21},
    a~state-of-the-art quantum circuit simulator based on decision diagrams, which
    also works with a~precise algebraic representation of complex numbers.
    We also tried the simulator from \qiskit~\cite{Qiskit} (which does not
    provide a~precise representation of numbers), but it was slower than
    \sliqsim so we do not include it in the results.

  \item \emph{Quantum circuit equivalence checkers}:
    We selected the following equivalence checkers:
    the verifier from the \feynman\footnote{Git commit \texttt{56e5b771}} tool suite~\cite{amy2018towards} (based on
    the path sum) and
    \qcec\footnote{Version 2.0.0}~\cite{burgholzer2020advanced} (combining decision diagrams,
    the ZX-calculus~\cite{Coecke_2011}, and random stimuli
    generation~\cite{BurgholzerKW21}).
\end{itemize}

\smallskip

We evaluated \autoq in two use cases, described in detail below.

\vspace{-0.0mm}
\subsection{Verification Against Pre- and Post-Conditions}\label{sec:label}
\vspace{-0.0mm}


\begin{table}[t]
\caption{Verification of quantum algorithm. Here, $n$ denotes the parameter
  value for the circuit, \textbf{\#q} denotes the number of
  qubits, \textbf{\#G} denotes the number of gates in the circuit.
  For \autoq, the columns \textbf{before} and \textbf{after} have the format
  ``states (transitions)'' denoting the number of states and transitions in TA
  in the pre-condition and the output of our analysis respectively.
  The column \textbf{analysis} contains the time it took \autoq to derive the TA for
  the output states and $=$ denotes the time it took \vata to test
  equivalence.
  The timeout was 12\,min.
  We use colours to distinguish the \begin{tabular}{l}\bestresult{}\!\!best result\!\!\end{tabular}
    in each row and \begin{tabular}{l}\nacell{}\!\!timeouts\!\!\end{tabular}.
  }
  \vspace{-0.3cm}
\resizebox{\textwidth}{!}{

}\label{table:exp_ver}
\vspace*{-3mm}
\end{table}


In the first experiment, we compared how fast \autoq computes the set of output
quantum states and checks whether the set satisfies a~given post-condition.
We compared against the simulator \sliqsim in the setting when we ran it over
all states encoded in the pre-condition of the quantum algorithm and accumulated
the times.
We note that we did not include the time for comparing the result of \sliqsim against
a~post-condition specification due to the following limitation of the tool:
it can produce the state after executing the circuit in the vector form,
but this step is not optimized and is quite time-consuming.
Since the step of accumulating the obtained states could possibly be done in
a~more efficient way, avoiding transforming them first into the vector
form, we do not include it in the runtime to not give \sliqsim an unfair
disadvantage.
The timeout was 12\,min.

We also include the time taken by \feynman to check the equivalence of
the circuits with themselves.
Although checking equivalence of quantum circuits is a~harder problem than
what we are solving (so the results cannot be used for direct comparison with
\autoq), we include these results in order to give an idea about
hardness of the circuits for path-sum-based approaches. 

We ran this experiment on the benchmarks where the semantics was known to us so
that we could construct TAs with pre- and post-conditions.
These were the following:
\bvbench, \mctoffolibench, \groversingbench, and \grovermultbench.
We give the results in \cref{table:exp_ver}. Both \bvbench and \groversingbench
work with only one input state, which should be most favourable for simulators.
Surprisingly, for the case of \groversingbench, \autoq outperforms \sliqsim on large cases
(out of curiosity, we tried to run \sliqsim on \groversingbench($n$=20)
without a timeout; the running time was 51m43s).
We attribute the good performance of \autoq to the compactness of the TA
representation of Grover's state space.
On the other hand, both \mctoffolibench and \grovermultbench consider $2^n$
input states and we can observe the exponential factor emerging; hence \autoq
outperforms \sliqsim in large cases.
All tools perform pretty well on \bvbench, even cases with 100 qubits can be easily handled.
We can also see that \permutation is consistently faster than \composition.

\vspace{-0.0mm}
\subsection{Finding Bugs}\label{sec:label}
\vspace{-0.0mm}

In the following experiment, we compared \autoq with the equivalence checkers
\feynman and \qcec and evaluated the ability of the tools to determine that two
quantum circuits are non-equivalent (this is to simulate the use case of
verifying the output of an optimizer).
We took circuits from the benchmarks \feynmanbench, \randombench, and
\revlibbench, and for each circuit, we created a~copy and injected an artificial bug
(one additional randomly selected gate at a~random location).
Then we ran the tools and let them check circuit equivalence; for \autoq, we let
it compute two TAs representing sets of output states for both circuits for the
given set of input states and then checked their language equivalence with \vata.


\begin{table}[t]
\caption{Results for bug finding.  The notation is the same as in
  \cref{table:exp_ver}.
  In addition, the column \textbf{bug?} indicates if the tool caught the
  injected bug: \correct denotes that the bug was found,
  \begin{tabular}{l}\wrongcell{}\!\!\wrong{}\!\!\end{tabular} denotes that the
  tool gave an incorrect result, and
  \begin{tabular}{l}\nacell{}\!\!\unknown{}\!\!\end{tabular} means unknown
  result (includes the tool reporting \emph{unknown}, crash, or not enough
  resources).
  \autoq finds all bugs within the time limit, and we provide the number of iterations needed to catch the bug
  (column \textbf{iter}).
  The timeout was 30\,min.
}\vspace{-0.3cm}
\resizebox{\textwidth}{!}{

}
\label{table:exp_bug}
\vspace*{-3mm}
\end{table}


Our strategy for finding bugs with \autoq (we used the \permutation setting) was
the following:
We started with a~TA representing a~single basis state, i.e., a~TA with no
top-down
nondeterminism, and gradually added more non-deterministic transitions (in each
iteration one randomly chosen transition) into the TA, making it represent a~larger set of states, running
the analysis for each of the TAs, until we found the bug.
This proved to be a~successful strategy, since running the analysis with an
input TA representing, e.g., all possible basis states, might be too
challenging (generally speaking, the larger is the TA representing the set of
states, the slower is the analysis).


We present the results in~\cref{table:exp_bug}.
We exclude trivial cases (all tools can finish within 5\,s) and difficult cases
that no tool can handle within the timeout period (30\,min).
We can see that many of the cases were so tricky that equivalence checkers
failed to conclude anything, while \autoq succeeded in finding the bug with just
the first few TAs.
For two instances from \randombench (70d and~70g), we found the bug after trying
36 TAs after 29m29s and 44 TAs after 14m42s, respectively. 
For a~few cases (e.g., csum\_mux\_9), \qcec did not find the bug and reported
that the circuits were equivalent (\wrong)\footnote{This bug has been confirmed by the \qcec team and fixed later, cf.~\cite{QCECbug}.}, while \autoq reported it (\correct).
For these cases, we fed the witness produced by \autoq to \sliqsim and
confirmed the two circuits are different.

The results show that our approach to hunting for bugs in quantum circuits is beneficial, particularly for larger circuits where equivalence checkers do not scale.
For such cases, \autoq can still find bugs using a weaker specification.
For instance, \autoq was able to find bugs in some large-scale instances from
\revlibbench with hundreds of qubits, e.g., add64\_184 and avg\_8\_325, while
both \feynman and \qcec fail.

We note that the area of quantum circuit equivalence checking is rapidly advancing.
When preparing the final version, we became aware of
\sliqec~\cite{WeiTJJ22,9951285}, a~recent tool that outperforms the other equivalence checkers that we tried on this benchmark.

\vspace{-0.0mm}
\section{Related Work}\label{sec:related}
\vspace{-0.0mm}
\emph{Circuit equivalence checkers}
are often very efficient but less flexible in specifying the desired property (only equivalence). Our approach can switch to a lightweight specification when verification fails due to insufficient resources and still find bugs in the design. Often equivalence checking is done by a reduction to normal form using a set of rewriting rules. {\em Path-sum} is a recent approach proposed in~\cite{amy2018towards}, whose rewrite rules can solve the equivalence problem of Clifford group circuits in polynomial time. The {\em ZX-calculus}~\cite{Coecke_2011} is a graphical language that is particularly useful in circuit optimization and proving equivalence. The works of~\cite{hietala2019verified} ensures correctness of the rewrite rules with a theorem prover. Quartz~\cite{xu2022quartz} is a~circuit optimization framework consisting of an equivalence checker based on some precomputed equivalence sets. We pick \feynman~\cite{amy2018towards}, a state-of-the-art equivalence checker based on path-sum, and \qcec~\cite{burgholzer2020advanced}, based on decision diagrams and ZX-calculus, as the baseline tools for comparison.
\emph{Quantum circuit simulators}, e.g.\,\sliqsim~\cite{TsaiJJ21}, can be used
as equivalence checkers for a~finite number of inputs by trying all basis
states.



\emph{Quantum abstract interpretation}~\cite{yu2021quantum,perdrix2008quantum} is particularly efficient in processing large-scale circuits, but it over-approximates state space and cannot conclude anything when verification fails. For instance, the work in~\cite{yu2021quantum} can only distinguish quantum states with zero and non-zero probability (and cannot derive exact boundary probabilities). In contrast, our approach precisely represents reachable states and can reveal bugs.
One can consider our approach to be an instantiation of classical abstract
interpretation~\cite{CousotC77} that is precise, and our approach to
non-equivalence testing as comparing output abstract contexts of two programs.
\emph{Quantum model checking} supports a~rich specification language
 (flavors of temporal logic~\cite{FengYY13,MateusRSS09,XuFMD22}). It can be seen as an extension of probabilistic model checking~\cite{QPMC,feng2017model,ying2021model,ying2021modelb,ying2014model,XuFMD22,FengYY13} and is more suitable for verifying high-level protocols due to the limited scalability~\cite{AnticoliPTZ16}.  Techniques based on \emph{quantum simulation}~\cite{GreenLRSV13,WeckerS14,PednaultGNHMSDHW17,ViamontesMH09,Samoladas08,ZulehnerHW19,ZulehnerW19,NiemannWMTD16,TsaiJJ21} allow only one input quantum state and thus have limited analyzing power.


\emph{Quantum Hoare
logic}~\cite{zhou2019applied,unruh2019quantum,feng2021quantum,ying2012floyd,liu2019formal})
allows verification against complex correctness properties and rich program
constructs such as branches and loops, but requires significant manual work. On
the other hand, \emph{quantum incorrectness logic}~\cite{yan2022incorrectness}
is a dual of quantum Hoare logic that allows showing the existence of a bug, but
cannot prove its absence. The \qbricks~\cite{Chareton2021} approach alleviates
the difficulty of proof search by combining state-of-the-art theorem provers
with decision procedures, but, according to their experiments, still requires
a~significant amount of human intervention. For instance, their experiments show
that it requires 125 times intervention during verification of Grover's
search w.r.t.\ an arbitrary number of qubits.

\vspace{-0.0mm}
\section{Concluding Remarks}\label{sec:conclusion}
\vspace{-0.0mm}

We have introduced a new paradigm for quantum circuit analysis that is exciting
from both practical and theoretical lenses.
We demonstrated one of its potential applications---circuit non-equivalence
checking, but we believe there could be much more.
In our own experience of using the method to prepare the benchmarks, its role
is similar to a~static assertion checker (like \emph{software model checkers}
for classical programs~\cite{heizmann2018ultimate,chen2016pac}); it helped us
greatly to find several problems while composing the circuits.
The connection to automata-based verification is also quite exciting.
A~series of approaches from the classical world should also be helpful in the
quantum case. For instance, the idea of regular tree model checking could be
leveraged to verify parameterized quantum circuits (w.r.t.\ an arbitrary number
of qubits)~\cite{AbdullaJMd02,armc}.
For this, one would need to deal with TAs with loops, where tagging cannot be done anymore to impose relations among trees (one would need to use an unbounded number of tags)---new ideas are needed.
Automata-learning can be used for automatic loop invariant
inference~\cite{chen2017learning}.
Symbolic automata~\cite{d2017power} and register
automata~\cite{chen2017register} would allow using variables to describe amplitude (instead of a~fixed alphabet as we use now).
We believe there are many other techniques from the automata world that could be used to extend our framework and be applied in the area of analysing quantum circuits.




\begin{acks}
  We thank the POPL'23 and PLDI'23 reviewers for their in-depth remarks that helped us improve the
  quality of the paper and the PLDI'23 artifact committee members for their helpful
  suggestions about the artifact.
  This material is based on a~work supported by
  the Czech Ministry of Education, Youth and Sports project LL1908 of the ERC.CZ programme;
  the Czech Science Foundation project GA23-07565S;
	the FIT BUT internal project FIT-S-23-8151; and
  the NSTC QC project under Grant no.\ NSTC 111-2119-M-001-004-.
\end{acks}

\vspace{-0.0mm}
\section*{Data Availability Statement}\label{sec:label}
\vspace{-0.0mm}

An environment with the tools and data used for the experimental evaluation in
the current study is available at~\cite{artifact}.

\bibliographystyle{ACM-Reference-Format}
\bibliography{literature}

\newpage
\appendix

\vspace{-0.0mm}
\section{Standard Semantics of Considered Quantum
Gates}\label{sec:quantum_gates_semantics}
\begin{table}[h]
	\begin{center}
\scalebox{1}{
\begin{tabular}{c|r||c|r}
Gate & \multicolumn{1}{|c||}{Matrix} & Gate & \multicolumn{1}{|c}{Matrix} \\
\hline
$X$ & 
$\begin{pmatrix}
0 & 1 \\
1 & 0
\end{pmatrix}$ 
&
Hadamard ($H$) &
$\frac 1 {\sqrt 2} 
\begin{pmatrix}
1 & 1 \\
1 & -1
\end{pmatrix}$ \\ [4mm]
$Y$ & 
$\begin{pmatrix}
0 & -\omega^2 \\
\omega^2 & 0
\end{pmatrix}$
&
$\mathit{Rx}(\frac \pi 2)$ &
$\frac 1 {\sqrt 2}
\begin{pmatrix}
  1 & -\omega^2 \\
  -\omega^2 & 1
\end{pmatrix}$\\[4mm]
$Z$ &
$\begin{pmatrix}
1 & 0 \\
0 & -1
\end{pmatrix}$
&
$\mathit{Ry}(\frac \pi 2)$ &
$\frac 1 {\sqrt 2}
\begin{pmatrix}
  1 & -1 \\
  1 & 1
\end{pmatrix}$ \\ [4mm]
$S$ &
$\begin{pmatrix}
1 & 0 \\
0 & \omega^2
\end{pmatrix}$
&
\begin{tabular}{c}
Controlled Z \\ ($\mathit{CZ}$)
\end{tabular} &
$\begin{pmatrix}
1 & 0 & 0 & 0 \\
0 & 1 & 0 & 0 \\
0 & 0 & 1 & 0 \\
0 & 0 & 0 & -1
\end{pmatrix}$ \\ [8mm]
$T$ &
$\begin{pmatrix}
1 & 0 \\
0 & \omega
\end{pmatrix}$
&
\begin{tabular}{c}
Controlled NOT \\ ($\cnot$)
\end{tabular} &
$\begin{pmatrix}
1 & 0 & 0 & 0 \\
0 & 1 & 0 & 0 \\
0 & 0 & 0 & 1 \\
0 & 0 & 1 & 0
\end{pmatrix}$\\ [8mm]
\hline
\\
\begin{tabular}{c}
Toffoli \\ ($\mathit{CCNOT}$)
\end{tabular}
&
 \multicolumn{3}{c}{
$\begin{pmatrix}
  1 & 0 & 0 & 0 & 0 & 0 & 0 & 0 \\
  0 & 1 & 0 & 0 & 0 & 0 & 0 & 0 \\
  0 & 0 & 1 & 0 & 0 & 0 & 0 & 0 \\
  0 & 0 & 0 & 1 & 0 & 0 & 0 & 0 \\
  0 & 0 & 0 & 0 & 1 & 0 & 0 & 0 \\
  0 & 0 & 0 & 0 & 0 & 1 & 0 & 0 \\
  0 & 0 & 0 & 0 & 0 & 0 & 0 & 1 \\
  0 & 0 & 0 & 0 & 0 & 0 & 1 & 0
\end{pmatrix}$} \\ \\
Fredkin &
 \multicolumn{3}{c}{
$\begin{pmatrix}
  1 & 0 & 0 & 0 & 0 & 0 & 0 & 0 \\
  0 & 1 & 0 & 0 & 0 & 0 & 0 & 0 \\
  0 & 0 & 1 & 0 & 0 & 0 & 0 & 0 \\
  0 & 0 & 0 & 1 & 0 & 0 & 0 & 0 \\
  0 & 0 & 0 & 0 & 1 & 0 & 0 & 0 \\
  0 & 0 & 0 & 0 & 0 & 0 & 1 & 0 \\
  0 & 0 & 0 & 0 & 0 & 1 & 0 & 0 \\
  0 & 0 & 0 & 0 & 0 & 0 & 0 & 1
\end{pmatrix}$} 
			\end{tabular}
		}
	\end{center}

\end{table}
\vspace{-0.0mm}

\section{Proofs for Section 5}\label{app:proofs}
In this section, we prove that the results of \cref{algo:p_gate_single} and \cref{algo:p_gate_multiple} are what we desire. For instance, we show that the language $\lang(\aut^R)$ recognized by the output automaton $\aut^R$ of \cref{algo:p_gate_single} on the input automaton $\aut$ and quantum gate $\mathrm{U} \in \{ \mathrm{X}_t, \mathrm{Y}_t, \mathrm{Z}_t, \mathrm{S}_t, \mathrm{T}_t\}$ (for some $t \in [n]$) is exactly the language $\{\mathrm{U}(T) \mid T \in \lang(\aut)\}$. Intuitively, $\lang(\aut^R) = \{ \mathrm{U}(T) \mid T \in \lang(\aut)\}$ if and only if for each $T^R \in \lang(\aut^R)$, we can find $T \in \lang(\aut)$ such that $\mathrm{U}(T) = T^R$ and for each $T \in \lang(\aut)$ there is $T^R \in \lang(\aut^R)$ such that $T^R=\mathrm{U}(T)$. Instead of showing this directly, we prove that $\lang(\aut)$ and $\lang(\aut^R)$ are one-to-one correspondent by giving a constructive proof, i.e., we build a bijective mapping $\mathcal{U}_t$ from $\lang(\aut)$ to $\lang(\aut^R)$, where $\mathcal{U}_t$, indeed, maps $T$ to $T^R$.

\begin{lemma}[Constant Scaling]\label{lem:constantscaling} 
Fix $n \in \mathbb{N}$ and let $\aut = \tuple{Q,\Sigma, \Delta = \Delta_i \cup \Delta_l, \rootstates}$ be a TA representing certain set of $n$-qubit states. For any given $t \in [n]$ and $a_0,a_1$ such that $\begin{pmatrix} a_0 & 0 \\ 0 & a_1\end{pmatrix}$ being unitary, let $\aut_1 = \tuple{Q',\Sigma,\Delta_1,\rootstates'}$ and $\aut^R =\tuple{Q^R,\Sigma^R=\Sigma,\Delta^R,\rootstates^R=\rootstates}$ be the resulting TAs under the corresponding procedure from line 2 to line 4 in the Algorithm~\ref{algo:p_gate_single}, where 
\begin{align*}
    Q' &:= \{ q' \mid q \in Q \}, \\
    \rootstates' &:= \{ q' \mid q \in \rootstates \},\\
    Q^R &:= Q \cup Q' \\
    \Delta_1 &:= \Delta'_i \sqcup  \{ \transtree{q'}{a_1 \times c}{} \mid \transtree{q}{c}{}\mid \Delta_l \} \\
            &= \{ \transtree{q'}{x_k}{q'_0,q'_1} \mid \transtree{q}{x_k}{q_0,q_1} \in \Delta_i \} \\ &\sqcup \{ \transtree{q'}{a_1 \times c}{} \mid \transtree{q}{c}{}\mid \Delta_l \}, \\
    \Delta^{\mathrm{new}} &:= \{ \transtree{q}{x_k}{q_0,q_1} \mid \transtree{q}{x_k}{q_0,q_1} \in \Delta_i \wedge k \neq t \} \\
    &\sqcup \{ \transtree{q}{x_k}{q_0,q'_1} \mid \transtree{q}{x_k}{q_0,q_1} \in \Delta_i \wedge k=t\} \\
    &\sqcup \{\transtree{q}{a_0 \times c}{} \mid \transtree{q}{c}{} \in \Delta_l \} \\
    \Delta^R &:= \Delta^{\mathrm{new}} \cup \Delta_1.
\end{align*}
Then, we have $\lang(\aut^R) = \{ \mathrm{U}_t(T) \mid T \in \lang(\aut) \}$, where $\mathrm{U}_t = \mathrm{Id} \otimes \cdots \mathrm{Id} \otimes \mathrm{U} \otimes \mathrm{Id} \otimes \cdots \otimes \mathrm{Id} $ whose $t$-th component $\mathrm{U} = \begin{pmatrix} a_0 & 0 \\ 0 & a_1 \end{pmatrix}$ is unitary, and we denote by $\mathrm{U}_t(\aut) = \aut^R$. 
\end{lemma}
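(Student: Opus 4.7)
The plan is to exhibit a bijection $\mathcal{U}_t\colon \lang(\aut) \to \lang(\aut^R)$ that, viewed as a transformation of functions $\{0,1\}^n \to \mathbb{C}$, coincides with $\mathrm{U}_t$. Since $\mathrm{U}_t = \identity \tensor \cdots \tensor \mathrm{U} \tensor \cdots \tensor \identity$ with the diagonal matrix $\mathrm{U}$ in the $t$-th slot, expanding the Kronecker product shows that $\mathrm{U}_t$ acts on a basis state $\ket{b_1 \ldots b_n}$ by simply multiplying its amplitude by $a_{b_t}$. I would therefore define $\mathcal{U}_t(T)$ to be the tree obtained from $T$ by scaling every leaf whose address $b_1 \ldots b_n$ satisfies $b_t = 0$ (resp.\ $b_t = 1$) by $a_0$ (resp.\ $a_1$), reducing the lemma to proving $\lang(\aut^R) = \{\mathcal{U}_t(T) \mid T \in \lang(\aut)\}$. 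This is unambiguous because every accepted tree is a full binary tree of height $n$ whose $i$-th level is labeled by $x_i$, so every root-to-leaf path crosses exactly one $x_t$-node.

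For the forward inclusion, I would lift an accepting run $\rho$ of $\aut$ on $T$ to an accepting run $\rho^R$ of $\aut^R$ on $\mathcal{U}_t(T)$ as follows: keep $\rho$ unchanged strictly above every $x_t$-node; at each $x_t$-transition $\transtree{q}{x_t}{q_0,q_1}$ used by $\rho$, substitute the rerouted counterpart $\transtree{q}{x_t}{q_0,q'_1}$ from $\Delta^{\mathrm{new}}$; and mirror the remainder of $\rho$ on the right subtree in the primed copy via $\Delta_1 = \Delta'_i$. The new leaf transitions $\transtree{q}{a_0 \cdot c}{} \in \Delta^{\mathrm{new}}$ and $\transtree{q'}{a_1 \cdot c}{} \in \Delta_1$ then produce exactly the scaled leaves of $\mathcal{U}_t(T)$, and $\rho^R$ is accepting since $\rootstates^R = \rootstates$.

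For the reverse inclusion, I would argue structurally that any accepting run $\rho^R$ of $\aut^R$ begins in an unprimed root state, and by inspection of $\Delta^R$ the only rules that introduce primed states are the rerouted $x_t$-transitions in $\Delta^{\mathrm{new}}$; furthermore, once the run enters a primed state it stays primed because $\Delta_1 = \Delta'_i$ has no transitions back to unprimed states. Un-priming every primed state in $\rho^R$ and dividing the leaf scalars by $a_0$ or $a_1$ as appropriate (well-defined because the assumption in Section~\ref{sec:TA} that distinct leaf symbols force distinct parent states prevents any ambiguity) yields a unique accepting run of $\aut$ on some tree $T$ with $\mathcal{U}_t(T) = T^R$.

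The hardest part will be the bookkeeping at the $x_t$-layer, namely checking that the primed state $q'_1$ introduced by $\Delta^{\mathrm{new}}$ has exactly the primed outgoing transitions provided by $\Delta_1$ and no others within $\aut^R$, so that the map $\rho \mapsto \rho^R$ is genuinely bijective rather than only injective. Once that structural matching is in place, the equality $\mathcal{U}_t(T) = \mathrm{U}_t(T)$ follows from the one-line Kronecker-product computation above.
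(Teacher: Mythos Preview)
Your proposal is correct and follows essentially the same approach as the paper's proof: construct the explicit map $\mathcal{U}_t$ by rerouting the run at the $x_t$-layer into the primed copy, verify the new transitions lie in $\Delta^R$, and invert by un-priming and dividing by $a_0$ or $a_1$ (using that unitarity makes these nonzero). The paper carries out the same node-by-node case analysis you sketch, including the separate surjectivity claim and the final Kronecker-product identification $\mathcal{U}_t(T)=\mathrm{U}_t(T)$.
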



\begin{proof}
We are going to construct a map
\begin{align*}
    \mathcal{U}_t: \lang(\aut) &\to \lang(\aut^R) \\
    T &\mapsto \mathcal{U}_t(T)=: T^R
\end{align*}
such that $\mathcal{U}_t$ is a bijection. Given a $T \in \lang(\aut)$ with its accepting run $\run$, since $\Sigma^R=\Sigma$, we form a tree $T^R$ with $N_{T^R}=N_T$ as follows: for each internal node $u\in N_{T^R}$ we set $L_{T^R}(u) = L_T(u) \in \Sigma^R=\Sigma$. Let $\run^R$ be the run defined as
\begin{enumerate}
    \item $L_{\run^R}(\epsilon) = L_\run(\epsilon)$;
    \item for each internal node $u \in N_{T^R}$ with $L_T(u) \in \{x_1,\dots,x_t\}$, we set $L_{\run^R}(u) = L_\run(u)$;
    \item for nodes (including leaf ones) $u = w.0.\tilde{u}$, $w \in \{0,1\}^*$, with $L_{T}(\tilde{u}) = x_t$, we set $L_{\run^R}(u) = L_\run(u)$;
    \item for nodes (including leaf ones) $u = w.1.\tilde{u}$, $w \in  \{0,1\}^*$, with $L_{T}(\tilde{u}) = x_t$, we set $L_{\run^R}(u) = L_\run(u)'$;
    \item for each leaf node $u=0.\tilde{u}$ or $u=1.\tilde{u}$ with an internal node $\tilde{u}$, 
    \begin{equation*}
        L_{T^R}(u) = \begin{cases}
        a_0 \times L_T(u) & \text{ if } L_{\run^R}(\tilde{u}) = L_\run(\tilde{u}) \\
        a_1 \times L_T(u) & \text{ if } L_{\run^R}(\tilde{u}) = L_\run(\tilde{u})'.
        \end{cases}
    \end{equation*}
\end{enumerate}
Let $u \in N_{T^R}$ be an internal node. By construction, if $L_{T^R}(u) \in \{ x_1,\dots,x_{t-1}\}$, the transition
\[
\transtree{L_{\run^R}(u)}{L_{T^R}(u)}{L_{\run^R}(0.u),L_{\run^R}(1.u)} = \transtree{L_\run(u)}{L_T(u)}{L_\run(0.u),L_\run(1.u)} \in \Delta^{\mathrm{new}} \subset \Delta^R;
\]
if $L_{T^R}(u) = x_t$, then
\[
\transtree{L_{\run^R}(u)}{L_{T^R}(u)}{L_{\run^R}(0.u),L_{\run^R}(1.u)} = \transtree{L_\run(u)}{L_T(u)}{L_\run(0.u),L_\run(1.u)'} \in \Delta^{\mathrm{new}} \subset \Delta^R;
\]
if $u=w.a.\tilde{u}$, $w \in  \cup \{0,1\}^*$ and $ a \in \{0,1\}$, with $L_{T}(\tilde{u}) =x_t$, then either for $a=0$
\[
\transtree{L_{\run^R}(u)}{L_{T^R}(u)}{L_{\run^R}(0.u),L_{\run^R}(1.u)} = \transtree{L_\run(u)}{L_T(u)}{L_\run(0.u),L_\run(1.u)} \in \Delta^{\mathrm{new}} \subset \Delta^R
\]
or for $a=1$
\[
\transtree{L_{\run^R}(u)}{L_{T^R}(u)}{L_{\run^R}(0.u),L_{\run^R}(1.u)} = \transtree{L_\run(u)'}{L_T(u)}{L_\run(0.u)',L_\run(1.u)'} \in \Delta_1 \subset \Delta^R.
\]
For each leaf node $u$, we have either
\[
\transtree{L_{\run^R}(u)}{L_{T^R}(u)}{} = \transtree{L_\run(u)}{a_0 \times L_T(u)}{} \in \Delta^{\mathrm{new}} \subset \Delta^R
\]
or
\[
\transtree{L_{\run^R}(u)}{L_{T^R}(u)}{} = \transtree{L_\run(u)'}{a_1 \times L_T(u)}{} \in \Delta_1 \subset \Delta^R.
\]
Thus, the map $\mathcal{U}_t$ is well-defined, i.e., $T^R$ together with the run $\run^R$ belongs to $\lang(\aut^R)$ and is clearly one-to-one.

\begin{claim}
The map $\mathcal{U}_t$ is surjective.
\end{claim}
\begin{claimproof}
For each $T_R \in \lang(\aut^R)$ with the accepting run $\run_R$, since $\Sigma = \Sigma^R$, we can form a tree $T$ such that $N_T = N_{T_R}$ and $L_T(u) = L_{T_R}(u) \in \Sigma$ for each internal node $u\in N_T= N_{T_R}$. Since $Q^R = Q \cup Q'$, for each leaf node $u$ we set
\[
L_T(u) = \begin{cases}
a_0^{-1} \times L_{T_R}(u) & \text{ if } L_{\run_R}(u) \in Q \\
a_1^{-1} \times L_{T_R}(u) & \text{ if } L_{\run_R}(u) \in Q'.
\end{cases}
\]
To construct the associated run $\run$, since $\rootstates^R = \rootstates \subset Q$, we may set $L_\run(\epsilon) = L_{\run_R}(\epsilon)$. Moreover, since $Q^R = Q \cup Q'$ and $Q' = \{q' \mid q\in Q\}$ is in one-to-one correspondence to $Q$, we denote by $(q')^{un-primed}$ the corresponding $q \in Q$ to $q' \in Q'$. Then we may set 
\begin{equation}\label{eq:lemmaConstant1}
L_{\run}(u) = \begin{cases}
L_{\run_R}(u) &\text{ if } L_{\run_R}(u) \in Q \\
L_{\run_R}(u)^{un-primed} &\text{ if } L_{\run_R}(u) \in Q'
\end{cases}    
\end{equation}
as well. Since each transition in $\Delta^R$ corresponds to one in $\Delta$, the tree $T$ constructed above together with the run $\run$ belongs to $\lang(\aut)$. 

It remains to show that $\mathcal{U}_t(T) =: T^R = T_R$. Let $\run^R$ be the run associated to $T^R$. By construction, we have $N_{T^R} = N_{T} = N_{T_R}$, $L_{T^R}(u) = L_T(u) = L_{T_R}(u)$ for each internal node $u$, $L_{\run^R}(\epsilon) = L_{\run}(\epsilon)=L_{\run_R}(\epsilon)$ and $L_{\run^R}(u) = L_{\run}(u) = L_{\run_R}(u)$ for internal nodes $u$ with $L_T(u) \in \{x_1,\dots,x_t\}$. If $L_T(u)=x_t=L_{T_R}(u)$, then 
\[
\transtree{L_{\run_R}(u)}{L_{T_R}(u)}{L_{\run_R}(0.u),L_{\run_R}(1.u)} \in \Delta^R
\]
if and only if $L_{\run_R}(u),L_{\run_R}(0.u) \in Q$ and $L_{\run_R}(1.u) \in Q'$. Thus, by \eqref{eq:lemmaConstant1},  
\[
L_{\run^R}(0.u) = L_{\run}(0.u) = L_{\run_R}(0.u)
\]
and
\[
L_{\run^R}(1.u) = L_{\run}(1.u)' = ((L_{\run_R}(1.u))^{un-primed})' = L_{\run_R}(1.u).
\]
Similarly for the rest internal nodes $u$, i.e., for $u=a.\tilde{u}$, $a \in \{0,1\}$, such that $L_T(\tilde{u}) \in \{x_{t+1},\dots,x_n \}$, either both $L_{\run_R}(u),L_{\run_R}(\tilde{u}) \in Q$ or both $L_{\run_R}(u),L_{\run_R}(\tilde{u}) \in Q'$. In the former cases, we have
\[
L_{\run^R}(u) = L_{\run}(u) = L_{\run_R}(u)
\]
and in the later
\[
L_{\run^R}(u) = (L_{\run}(u))' = ( L_{\run_R}(u)^{un-primed})' = L_{\run_R}(u).
\]
Now, let $u$ be a leaf node. If $L_{\run_R}(u) \in Q$, then $L_{\run^R}(u) = L_{\run}(u) = L_{\run_R}(u)$ and 
\begin{align*}
\transtree{L_{\run^R}(u)}{L_{T^R}(u)}{} &= \transtree{L_{\run}(u)}{a_0 \times L_T(u)}{} \\
&= \transtree{L_{\run_R}(u)}{a_0 \times a_0^{-1}\times L_{T_R}(u)}{}\\
&= \transtree{L_{\run_R}(u)}{L_{T_R}(u)}{}.
\end{align*}
Conversely, if $L_{\run_R}(u) \in Q'$, then $L_{\run^R}(u) = L_{\run}(u)' = (L_{\run_R}(u)^{un-primed})' = L_{\run_R}(u)$ and
\begin{align*}
    \transtree{L_{\run^R}(u)}{L_{T^R}(u)}{} &= \transtree{L_{\run}(u)'}{a_1 \times L_T(u)}{} \\
&= \transtree{L_{\run_R}(u)}{a_1 \times a_1^{-1}\times L_{T_R}(u)}{}\\
&= \transtree{L_{\run_R}(u)}{L_{T_R}(u)}{}.
\end{align*}
Therefore $T^R = T_R$ and hence $\mathcal{U}_t:\lang(\aut) \to \lang(\aut^R)$ is surjective. 
\end{claimproof}

Thus we have shown that every $T_R \in \lang(\aut^R)$ is of the form $\mathcal{U}_t(T)$ for certain $T \in \lang(\aut)$. 

It is obvious to see that $\mathcal{U}_t(T) = \mathrm{U}_t(T)$ where $\mathrm{U}_t = \mathrm{Id} \otimes \cdots \otimes \mathrm{U} \otimes \mathrm{Id} \otimes \cdots \otimes \mathrm{Id} $ whose $t$-th component $\mathrm{U} = \begin{pmatrix} a_0 & 0 \\ 0 & a_1 \end{pmatrix}$. For instance, we may without loss of generality assume that $t=n$. Then, for $T \in \lang(\aut)$ with 
\[
T = \sum_{i \in \{0,1\}^n} c_i \ket{i},
\]
we have
\[
T^R = \sum_{w \in \{0,1\}^{n-1}} \bigg(a_0 \times c_{0.w} \ket{0.w} + a_1 \times c_{1.w} \ket{1.w} \bigg)= \mathrm{U}_n(T)
\]
as desired. 

\end{proof}


\begin{lemma}[Swapping subtrees]\label{lem:swappingsubtree}
Fix $n \in \mathbb{N}$ and let $\aut = \tuple{Q,\Sigma, \Delta = \Delta_i \cup \Delta_l, \rootstates}$ be a TA representing certain set of $n$-qubit states. For any given $t \in [n]$, let $\aut^R=\tuple{Q^R,\Sigma^R = \Sigma,\Delta^R,\rootstates^R = \rootstates} =: \mathrm{X}_t(\aut)$, where $Q^R = Q$ and 
\begin{align*}
    \Delta^R &= \{ \transtree{q}{x_k}{q_0,q_1} \mid \transtree{q}{x_k}{q_0,q_1} \in \Delta_i \wedge k \neq t \} \\
    &\cup \{ \transtree{q}{x_k}{q_1,q_0} \mid \transtree{q}{x_k}{q_0,q_1} \in \Delta_i \wedge k=t \} \\
    &\cup \{ \transtree{q}{c}{} \mid \transtree{q}{c}{} \in \Delta_l \}.
\end{align*}
Then, $\lang(\aut^R) = \{ \mathrm{X}_t(T) \mid T \in \lang(\aut) \}$.
\end{lemma}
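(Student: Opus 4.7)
My plan is to mirror the structure of the proof of \cref{lem:constantscaling}, which constructs an explicit bijection between $\lang(\aut)$ and $\lang(\aut^R)$. Here the situation is simpler because $Q^R = Q$, $\Sigma^R = \Sigma$, $\rootstates^R = \rootstates$, and no priming of states or rescaling of leaves is involved; the only change is that every $x_t$-transition $\transtree{q}{x_t}{q_0,q_1}$ in $\Delta$ is replaced by $\transtree{q}{x_t}{q_1,q_0}$ in $\Delta^R$, while all other transitions are untouched.

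I would first define the map
\begin{equation*}
  \mathcal{X}_t : \lang(\aut) \to \lang(\aut^R), \qquad T \mapsto T^R,
\end{equation*}
as follows. Given $T \in \lang(\aut)$ with accepting run~$\rho$, I build $T^R$ and its run~$\rho^R$ by recursion on nodes. Let $\pi_t \colon \{0,1\}^* \to \{0,1\}^*$ be the node renaming that flips the bit immediately below every $x_t$-labeled node in~$T$; concretely, for a node $u = w_k \ldots w_1$ with $w_j \in \{0,1\}$, replace $w_j$ by $\overline{w_j}$ exactly when the node $w_{j-1}\ldots w_1$ is labeled $x_t$ in~$T$. Then set $N_{T^R} = \pi_t(N_T)$, $L_{T^R}(\pi_t(u)) = L_T(u)$, and $L_{\rho^R}(\pi_t(u)) = L_\rho(u)$. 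Intuitively, $T^R$ is obtained by swapping the left and right subtrees at every $x_t$-node of~$T$, and the run is relocated accordingly.

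The main verification is that $\rho^R$ is an accepting run of $\aut^R$ on $T^R$. For every non-$x_t$ internal node~$u$ of~$T$, the original transition $\transtree{L_\rho(u)}{L_T(u)}{L_\rho(0.u),L_\rho(1.u)}$ lies in $\Delta^R$ untouched (first bullet in the definition of $\Delta^R$); for every $x_t$-node the swapped transition $\transtree{L_\rho(u)}{x_t}{L_\rho(1.u),L_\rho(0.u)}$ is exactly what the second bullet adds to~$\Delta^R$; and leaf transitions are preserved by the third bullet. Since $L_{\rho^R}(\epsilon)=L_\rho(\epsilon)\in\rootstates = \rootstates^R$, the run is accepting. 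Injectivity of $\mathcal{X}_t$ is immediate because $\pi_t$ is a bijection on nodes determined by~$T$. Surjectivity follows by applying the same construction in reverse to any $T_R \in \lang(\aut^R)$ with its run $\rho_R$: the symmetric definition of $\Delta^R$ (it replaces each swapped transition by the original one when reversing the roles of $\aut$ and $\aut^R$) makes the inverse construction well-defined, and the two constructions are mutually inverse.

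Finally, I need to show that $\mathcal{X}_t(T)$ coincides with the quantum state $X_t(T)$. Viewing $T$ as the function $T\colon\{0,1\}^n\to\integers^5$ induced by its labeling, the node renaming $\pi_t$ flips the $t$-th coordinate of every leaf address, so $T^R(b_n\ldots b_t\ldots b_1) = T(b_n\ldots \overline{b_t}\ldots b_1) = B_{x_t}(\cdots)\cdot T_{\overline{x_t}}(\cdots) + B_{\overline{x_t}}(\cdots)\cdot T_{x_t}(\cdots)$, which is precisely the symbolic update formula for $X_t$ from~\cref{eq:x_final} and~\cref{tab:quantum_gates}. Hence $\lang(\aut^R) = \{X_t(T)\mid T\in\lang(\aut)\}$, as required. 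The only delicate point is the bookkeeping in the definition of $\pi_t$ and the verification that leaf transitions (which carry no structural information about~$x_t$) behave correctly; both are routine given the simple form of $\Delta^R$.
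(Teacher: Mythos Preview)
Your proposal is correct and follows essentially the same approach as the paper: both construct the bijection $\mathcal{X}_t$ by swapping the two subtrees below every $x_t$-node (you express this via the node renaming~$\pi_t$, the paper via relabeling the fixed node set), verify that the relocated run uses exactly the swapped transitions supplied by the second clause of~$\Delta^R$, and conclude bijectivity from the fact that the construction is an involution. Your functionality check via the leaf-address flip is the same observation the paper dismisses as ``obvious.''
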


\begin{proof} 

As before, we are going to construct a map 
\begin{align*}
\mathcal{X}_t: \lang(\aut) &\to \lang(\aut^R) \\
T &\mapsto \mathcal{X}_t =: T^R.
\end{align*}
Given a $T \in \lang(\aut)$ with its accepting run $\run$, we construct a tree $T^R$ of the same shape as follows: since $\Sigma^R = \Sigma$, we set $L_{T^R}(u) = L_T(u)$ for each internal node $u \in N_{T^R} = N_T$ and $L_{\run^R}(\epsilon) = L_{\run}(\epsilon)$. For the cases that $L_T(u) \in \{x_1,\dots,x_t\}$, we set $L_{\run^R}(u) = L_{\run}(u)$ as well. For the rest of nodes (including leaf nodes) $u$, we may write $u=w.0.\tilde{u}$ or $u=w.1.\tilde{u}$ for some (possibly empty) word $w \in \{0,1\}^*$ with $L_T(u) = x_t$ and we define
\[
L_{\run^R}(w.0.\tilde{u}) = L_{\run}(w.1.\tilde{u}) \quad  \text{and} \quad L_{\run^R}(w.1.\tilde{u}) = L_{\run}(w.0.\tilde{u})
\]
and for leaf nodes $u = \tilde{w}.0.\tilde{u}$ or $u=\tilde{w}.1.\tilde{u}$
\[
L_{T^R}(\tilde{w}.0.\tilde{u}) = L_T(\tilde{w}.1.\tilde{u}) \quad \text{and} \quad L_{T^R}(\tilde{w}.1.\tilde{u}) = L_{T}(\tilde{w}.0.\tilde{u}).
\]
Similar to the argument for constant scaling, it is clear that the constructed $T^R$ together with the run $\run^R$ belongs to $\lang(\aut^R)$. Note that such a map $\mathcal{X}_t$ is also well-defined on applying to $\lang(\aut^R)$ with destination $\lang(\aut)$, namely, $\mathcal{X}_t(\lang(\aut^R)) \subseteq \lang(\aut)$. Moreover, we have $\mathcal{X}_t(\mathcal{X}_T(T)) = T$ for all $T \in \lang(\aut)$. It follows that every element $T_R \in \lang(\aut^R)$ is of the form $\mathcal{X}_t(T)$ for certain $T \in \lang(\aut)$ and vice versa. The functionality $\mathcal{X}_t(T) = \mathrm{X}_t(T)$ holds obviously. 
\end{proof}

%
%
 

\begin{lemma}\label{lem:algorithm2}
Fix $n \in \mathbb{N}$ and let $\aut = \tuple{Q,\Sigma, \Delta = \Delta_i \cup \Delta_l, \rootstates}$ be a TA representing certain set of $n$-qubit states. For any pair $0 < c < t \leq n$, let $\aut_1 = \tuple{Q_1, \Sigma, \Delta_1, \rootstates}$ be either $\mathrm{X}_t(\aut)$ or $\mathrm{Z}_t(\aut)$ and $\aut_1'$ be the primed copy of $\aut_1$. Let $\aut^R = \tuple{Q^R=Q \cup Q_1', \Sigma^R=\Sigma, \Delta^R = \Delta^{\mathrm{new}} \cup \Delta_1', \rootstates^R = \rootstates}$ be the resulting TA under the procedure in the Line 5 of Algorithm 2, where
\begin{align*}
    \Delta^{\mathrm{new}} &= \{ \transtree{q}{x_k}{q_0,q_1} \mid \transtree{q}{x_k}{q_0,q_1} \in \Delta_i \wedge k \neq c \} \\
    &\sqcup \{ \transtree{q}{x_k}{q_0,q'_1} \mid \transtree{q}{x_k}{q_0,q_1} \in \Delta_i \wedge k=c\} \\
    &\sqcup \Delta_l.
\end{align*}
Then $\lang(\aut^R) = \{ \mathrm{U}(T) \mid T \in \aut \}$ for $\mathrm{U} = \begin{cases}
\mathrm{CNOT}^c_t & \text{ if } \aut_1 = \mathrm{X}_t(\aut) \\
\mathrm{CZ}^c_t & \text{ if } \aut_1 = \mathrm{Z}_t(\aut)
\end{cases}$. 
\end{lemma}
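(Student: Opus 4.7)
The plan is to reduce this lemma to the two previous lemmas (\cref{lem:constantscaling} and \cref{lem:swappingsubtree}) by exploiting the algebraic identities
\begin{align*}
\mathrm{CNOT}^c_t(T) &= B_{\overline{x_c}} \cdot T + B_{x_c}\cdot \mathrm{X}_t(T), \\
\mathrm{CZ}^c_t(T)   &= B_{\overline{x_c}} \cdot T + B_{x_c}\cdot \mathrm{Z}_t(T),
\end{align*}
which match exactly the shape of $\aut^R$: the ``unprimed'' part below the left child of any $x_c$-transition encodes the $B_{\overline{x_c}}\cdot T$ summand, and the ``primed'' copy $\aut_1'$ reached via the right child of $x_c$-transitions encodes $B_{x_c}\cdot \mathrm{U}_t(T)$ (with $\mathrm{U}\in\{\mathrm{X},\mathrm{Z}\}$). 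Because $c<t$, every path from the root to a leaf first meets qubit $x_c$ and only later meets $x_t$, so the decision of whether to apply $\mathrm{U}_t$ or not is cleanly made at the $x_c$ layer.

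Concretely, I would construct a bijection $\mathcal{U}\colon \lang(\aut)\to \lang(\aut^R)$ mimicking the constructions used in \cref{lem:constantscaling,lem:swappingsubtree}. Given $T\in\lang(\aut)$ with accepting run $\rho$, define $T^R$ with $N_{T^R}=N_T$ and the run $\rho^R$ by: (i)~on every node on the path from the root up to and including an $x_c$-node, let $L_{T^R}(u)=L_T(u)$ and $L_{\rho^R}(u)=L_\rho(u)$; (ii)~in every subtree rooted at the left child of an $x_c$-node, copy $T$ and $\rho$ verbatim; and (iii)~in every subtree rooted at the right child of an $x_c$-node, copy the corresponding subtree of $\mathrm{U}_t(T)$ with its run from $\aut_1'$, using primed states. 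By the definition of $\Delta^R$, each resulting transition either lies in $\Delta^{\mathrm{new}}$ (cases (i) and (ii), plus the $x_c$-transition itself whose right child now points into $Q'_1$) or in $\Delta_1'$ (case (iii) once we have descended into the primed part); hence $T^R\in\lang(\aut^R)$.

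For injectivity, note that $T$ is recoverable from $T^R$ by applying $\mathrm{U}_t^{-1}=\mathrm{U}_t$ to every right-of-$x_c$ subtree (both $\mathrm{X}$ and $\mathrm{Z}$ are self-inverse). For surjectivity, given any $T_R\in\lang(\aut^R)$ with accepting run $\rho_R$, the structure of $\Delta^R$ forces the run to stay in $Q$ until it traverses the right child of an $x_c$-transition, after which it must stay in $Q'_1$. Stripping the primes and applying $\mathrm{U}_t$ to the primed subtrees (using \cref{lem:constantscaling,lem:swappingsubtree} to translate back to an accepting run of $\aut$) yields the required preimage $T\in\lang(\aut)$. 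Finally, the functional identity $\mathcal{U}(T)=\mathrm{CNOT}^c_t(T)$ (resp.\ $\mathrm{CZ}^c_t(T)$) is verified amplitude by amplitude: for each basis state $\ket{b_1\ldots b_n}$, if $b_c=0$ then $T^R$ agrees with $T$ at that leaf, and if $b_c=1$ then $T^R$ agrees with $\mathrm{X}_t(T)$ or $\mathrm{Z}_t(T)$ at that leaf, which is precisely the amplitude prescribed by the update formula in \cref{tab:quantum_gates}.

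The main obstacle I expect is bookkeeping at the $x_c$-transitions themselves: the transition $\transtree{q}{x_c}{q_0,q'_1}$ straddles the unprimed and primed parts, so one has to argue that the state $q'_1$ in $\aut_1'$ generates exactly the $x_c=1$ subtree of $\mathrm{U}_t(T)$ that corresponds to the subtree generated by $q_1$ in $\aut$; this is the correspondence provided by \cref{lem:constantscaling,lem:swappingsubtree} applied locally to each subtree rooted at such a $q_1$, and care is needed to ensure this local application is compatible with the run structure above the $x_c$-node. The extension to Toffoli (handled recursively via $\mathrm{CNOT}^{c'}_t$ at Line~3 of \cref{algo:p_gate_multiple}) then follows by induction on the number of control qubits, using the formula $\mathrm{Toffoli}^{c,c'}_t(T)=B_{\overline{x_c}}\cdot T + B_{x_c}\cdot\mathrm{CNOT}^{c'}_t(T)$ and applying the present lemma once more with $\mathrm{CNOT}^{c'}_t(\aut)$ in place of $\mathrm{X}_t(\aut)$.
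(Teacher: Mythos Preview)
Your proposal is correct and follows essentially the same approach as the paper: both construct a bijection $\mathcal{U}\colon\lang(\aut)\to\lang(\aut^R)$ by keeping the run of~$T$ unchanged above and to the left of every $x_c$-node, and switching to the primed run of $\mathrm{U}_t(T)$ in $\aut_1'$ to the right of each $x_c$-node, then verifying well-definedness transition by transition and proving surjectivity by stripping primes. The paper phrases the ``primed subtree'' correspondence via an abstract bijection $\mathcal{V}\colon\lang(\aut)\to\lang(\aut_1)$ on runs (which absorbs the state-renaming subtleties when $\aut_1=\mathrm{Z}_t(\aut)$ doubles the state set), whereas you invoke \cref{lem:constantscaling,lem:swappingsubtree} locally on subtrees; these are the same device, and the obstacle you flag at the $x_c$-transition is exactly the point the paper addresses by noting that $c<t$ forces $L_\rho(1.u)'=L_{\mathcal{V}(\rho)}(1.u)'$ there.
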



\begin{proof} By above lemmas, there exist a bijection $\mathcal{V}: \lang(\aut) \to \lang(\aut_1)$. For any $T \in \lang(\aut)$ with the accepting run $\run$, we denote by $\mathcal{V}(T)$ and $\mathcal{V}(\run)$ their corresponding image in $\lang(\aut_1)$ under the bijective map $\mathcal{V}$. We also denote by $\mathcal{V}(Q) \subset Q_1$ the subset of states in $Q_1$ that appear in the accepted languages in $\lang(\aut_1)$. Thus the map $\mathcal{V}:Q \to \mathcal{V}(Q)$ is bijective. 

Again we are going to construct a map
\begin{align*}
    \mathcal{U}:\lang(\aut) &\to \lang(\aut^R) \\
    T &\mapsto \mathcal{U}(T) =: T^R
\end{align*}
to prove the statement. Given a $T\in\lang(\aut)$ with its accepting run $\run$, we form a tree $T^R$ such that $N_{T^R}=N_T$ together with a run $\run^R$ as follows: since $\Sigma=\Sigma^R$, for each internal node $u\in N_{T^R}$ we set $L_{T^R}(u) = L_T(u) \in \Sigma$ and $L_{\run^R}(\epsilon) = L_\run(\epsilon)$ since $\rootstates^R = \rootstates$. And 
\begin{enumerate}
    \item for each internal node $u \in N_{T^R}$ with $L_T(u) \in \{ x_1,\dots,x_c\}$, we simply let $L_{\run^R}(u) = L_{\run}(u)$;
    \item for $u$ such that $L_T(u) = x_c$, we set $L_{\run^R}(0.u) = L_{\run}(0.u)$ and $L_{\run^R}(1.u) = L_{\run}(1.u)'$;
    \item for $u=w.0.\tilde{u}$ for $w \in \{0,1\}^*$ and $\tilde{u}$ with $L_T(\tilde{u})=x_c$, we set $L_{\run^R}(u) = L_\run(u)$;
    \item similarly for $u=w.1.\tilde{u}$ for some non-empty word $w \in \{0,1\}^*$ and $\tilde{u}$ with $L_T(\tilde{u})=x_c$, set $L_{\run^R}(u) = L_{\mathcal{V}(\run)}(u)'$;
    \item for leaf node $u=\tilde{w}.0.\tilde{u}$ (resp. $u=\tilde{w}.1.\tilde{u}$), set $L_{T^R}(u)=L_T(u)$ (resp. $L_{T^R}(u) = L_{\mathcal{V}(T)}(u)$).
\end{enumerate}
By construction, for each internal node $u\in N_{T^R}$, if $L_{T^R}(u) \in \{x_1,\dots,x_{c-1}\}$, then
\[
\transtree{L_{\run^R}(u)}{L_{T^R}(u)}{L_{\run^R}(0.u),L_{\run^R}(1.u)} = \transtree{L_\run(u)}{L_T(u)}{L_\run(0.u),L_\run(1.u)} \in \Delta^{\mathrm{new}} \subset \Delta^R;
\]
if $L_{T^R}(u) = x_c$, the transition
\[
\transtree{L_{\run^R}(u)}{L_{T^R}(u)}{L_{\run^R}(0.u),L_{\run^R}(1.u)} = \transtree{L_\run(u)}{L_T(u)}{L_\run(0.u),L_\run(1.u)'} \in \Delta^{\mathrm{new}} \subset \Delta^R.
\]
If $u=w.0.\tilde{u}$ for $w \in  \{0,1\}^*$ and $L_T(\tilde{u})=x_c$, we have
\[
\transtree{L_{\run^R}(u)}{L_{T^R}(u)}{L_{\run^R}(0.u),L_{\run^R}(1.u)} = \transtree{L_\run(u)}{L_T(u)}{L_\run(0.u),L_\run(1.u)} \in \Delta^{\mathrm{new}} \subset \Delta^R.
\]
For $u=w.1.\tilde{u}$ with $L_T(\tilde{u})=x_c$, if $w = \epsilon$, since $c < t$, $L_\run(u)' = L_{\mathcal{V}(\run)}(u)'$ and hence
\begin{align*}
\transtree{L_{\run^R}(u)}{L_{T^R}(u)}{L_{\run^R}(0.u),L_{\run^R}(1.u)} 
&= \transtree{L_{\run}(u)'}{L_T(u)}{L_{\mathcal{V}(\run)}(0.u)',L_{\mathcal{V}(\run)}(1.u)'} \\
&= \transtree{L_{\mathcal{V}(\run)}(u)'}{L_T(u)}{L_{\mathcal{V}(\run)}(0.u)',L_{\mathcal{V}(\run)}(1.u)'} \\
&\in \Delta'_1 \subset \Delta^R.
\end{align*}
Finally, for the case $u=w.1.\tilde{u}$ with $L_T(\tilde{u})=x_c$ and $w \in \{0,1\}^*$, we have
\begin{align*}
\transtree{L_{\run^R}(u)}{L_{T^R}(u)}{L_{\run^R}(0.u),L_{\run^R}(1.u)} &= \transtree{L_{\mathcal{V}(\run)}(u)'}{L_T(u)}{L_{\mathcal{V}(\run)}(0.u)',L_{\mathcal{V}(\run)}(1.u)'} \\
&\in \Delta'_1 \subset \Delta^R.  
\end{align*}
Thus $\mathcal{U}$ is well-defined, i.e., $T^R$ together with $\run^R$ belongs to $\lang(\aut^R)$, and is injective. It remains to prove the following Claim:

\begin{claim}
The map $\mathcal{U}$ is surjective. 
\end{claim}
\begin{claimproof}
For each $T_R \in \lang(\aut^R)$ with its accepting run $\run_R$, since $\Sigma=\Sigma^R$, we can form a tree $T$ such that $N_T=N_{T_R}$ and $L_T(u)=L_{T_R}(u)$ for all internal nodes $u\in N_T$. Moreover, since $\mathcal{V}(Q)' \subset Q_1'$ is in one-to-one correspondence to $Q$, we denote by $(\mathcal{V}^{-1}(q'))^{un-primed}$ the corresponding $q \in Q$ to $\mathcal{V}(q)' \in Q_1'$. Then we may set 
\[
L_{\run}(u) = \begin{cases}
L_{\run_R}(u) &\text{ if } L_{\run_R}(u) \in Q \\
(\mathcal{V}^{-1}(L_{\run_R}(u)))^{un-primed} &\text{ if } L_{\run_R}(u) \in Q_1'
\end{cases}
\]
as well. Since each transition in $\Delta^R$ corresponds to one in $\Delta$, the tree $T$ constructed above together with the run $\run$ belongs to $\lang(\aut)$. 

To show that $T^R:= \mathcal{U}(T) = T_R$, let $\run^R$ be the run associated to $T^R$. By construction, we have $N_{T^R} = N_{T} = N_{T_R}$, $L_{T^R}(u) = L_T(u) = L_{T_R}(u)$ for each internal node $u$, $L_{\run^R}(\epsilon) = L_{\run}(\epsilon)=L_{\run_R}(\epsilon)$ and $L_{\run^R}(u) = L_{\run}(u) = L_{\run_R}(u)$ for internal nodes $u$ with $L_T(u) \in \{x_1,\dots,x_c\}$. If $L_T(u)=x_c=L_{T_R}(u)$, then 
\[
\transtree{L_{\run_R}(u)}{L_{T_R}(u)}{L_{\run_R}(0.u),L_{\run_R}(1.u)} \in \Delta^R
\]
if and only if $L_{\run_R}(u),L_{\run_R}(0.u) \in Q$ and $L_{\run_R}(1.u) \in Q_1'$. Thus 
\[
L_{\run^R}(0.u) = L_{\run}(0.u) = L_{\run_R}(0.u)
\]
and since $c <t$, 
\[
L_{\run^R}(1.u) = L_{\run}(1.u)' = ((\mathcal{V}^{-1}(L_{\run_R}(1.u)))^{un-primed})' = \mathcal{V}^{-1}(L_{\run_R}(1.u))= L_{\run_R}(1.u).
\]
Similarly for the rest internal nodes $u$, i.e., for $u=a.\tilde{u}$, $a \in \{0,1\}$, such that $L_T(\tilde{u}) \in \{x_{t+1},\dots,x_n \}$, either both $L_{\run_R}(u),L_{\run_R}(\tilde{u}) \in Q$ or both $L_{\run_R}(u),L_{\run_R}(\tilde{u}) \in Q_1'$. In the former cases, we have always
\[
L_{\run^R}(u) = L_{\run}(u) = L_{\run_R}(u)
\]
and in the later
\begin{equation}\label{eq:lemma3}
L_{\run^R}(u) = (L_{\mathcal{V}(\run)}(u))' = \mathcal{V}(L_\run(u))' = \mathcal{V}(((\mathcal{V}^{-1}(L_{\run_R}(u)))^{un-primed})') =  L_{\run_R}(u).
\end{equation}
Now, let $u$ be a leaf node. If $L_{\run_R}(u) \in Q$, then $L_{\run^R}(u) = L_{\run}(u) = L_{\run_R}(u)$ and 
\begin{align*}
\transtree{L_{\run^R}(u)}{L_{T^R}(u)}{} &= \transtree{L_{\run}(u)}{L_T(u)}{} \\
&= \transtree{L_{\run_R}(u)}{L_{T_R}(u)}{}.
\end{align*}
Conversely, if $L_{\run_R}(u) \in Q_1'$, then, similar to \cref{eq:lemma3}, we have $L_{\run^R}(u) = L_{\mathcal{V}(\run)}(u)' = L_{\run_R}(u)$ and
\begin{align*}
    \transtree{L_{\run^R}(u)}{L_{T^R}(u)}{} 
&= \transtree{L_{\mathcal{V}(\run)}(u)'}{L_{\mathcal{V}(T)}(u)}{} \\
&= \transtree{L_{\run_R}(u)}{L_{T_R}(u)}{}.
\end{align*}
Therefore $T^R = T_R$ and hence $\mathcal{U}:\lang(\aut) \to \lang(\aut^R)$ is surjective. 
\end{claimproof}

The functionality is obvious. 

\end{proof}


\begin{theorem}[Theorem 5.1, 5.2 and 5.3]
 	$\lang( \mathrm{U}(\aut) )  = \{\mathrm{U}(T) \mid  T\in \lang(\aut) \}$ for 
 	$$\mathrm{U}\in \{\mathrm{X}_t,\mathrm{Y}_t,\mathrm{Z}_t,\mathrm{S}_t,\mathrm{T}_t,\mathrm{CNOT}^c_t, \mathrm{CZ}^c_t , \mathrm{Toffoli}^{c,c'}_t\}.$$
\end{theorem}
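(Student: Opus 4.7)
My plan is to handle the eight gates by splitting the statement into the three families that correspond to the three construction patterns (the pure swap $\mathrm{X}_t$, the constant-scaling family $\{\mathrm{Y}_t,\mathrm{Z}_t,\mathrm{S}_t,\mathrm{T}_t\}$ handled by \cref{algo:p_gate_single}, and the controlled family $\{\mathrm{CNOT}^c_t,\mathrm{CZ}^c_t,\mathrm{Toffoli}^{c,c'}_t\}$ handled by \cref{algo:p_gate_multiple}), and in each case exhibiting an explicit bijection $\Phi_U\colon \lang(\aut)\to \lang(\mathrm{U}(\aut))$ that satisfies $\Phi_U(T)=\mathrm{U}(T)$. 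The bijection will be built by constructing a run in the output automaton directly from a given accepting run in~$\aut$, and conversely recovering the original run by using the fact that the primed copies added by the algorithms are isomorphic replicas reachable only via the rerouted transitions.

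For $\mathrm{X}_t$, the construction leaves the state set unchanged and swaps the children of every $x_t$-transition, so an accepting run $\rho$ of $T$ maps to a run $\rho^R$ of the tree $T^R$ obtained by exchanging, at every $x_t$-node, the two subtrees rooted at it. By the symbolic formula $\mathrm{X}_t(T)=B_{x_t}\cdot T_{\overline{x_t}}+B_{\overline{x_t}}\cdot T_{x_t}$, this coincides with $\mathrm{X}_t(T)$. For the constant-scaling cases the update formula has the shape $a_0\cdot B_{\overline{x_t}}\cdot T + a_1\cdot B_{x_t}\cdot T$; \cref{algo:p_gate_single} makes a primed copy of~$\aut$ whose leaves are scaled by $a_1$, scales all original leaves by $a_0$, and reroutes the right child of each $x_t$-transition to the primed copy. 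Every accepting run in the output automaton therefore factorises uniquely along the first $x_t$-node on each branch: above the $x_t$-node (and along the left $0$-child) it stays in the original, along the right $1$-child it enters the primed copy. Because both halves are isomorphic copies of the corresponding portion of $\aut$, this factorisation is in bijection with accepting runs of $\aut$, and combining the scalars with the projection/restriction identities gives exactly $\mathrm{U}(T)$. The gate $\mathrm{Y}_t$ reduces to composing the constant-scaling case (with $a_0=-\omega^2$, $a_1=\omega^2$) with the swap from $\mathrm{X}_t$.

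For the controlled gates, \cref{algo:p_gate_multiple} first builds $\aut_1$ encoding the inner gate ($\mathrm{X}_t$, $\mathrm{Z}_t$, or $\mathrm{CNOT}^{c'}_t$) by one of the already-proven cases, takes a primed copy $\aut_1'$, and reroutes the right child of every $x_c$-transition of $\aut$ to~$\aut_1'$. The same unique-factorisation argument applies at the $x_c$-layer: on branches whose control bit is $0$ the run stays in the original $\aut$ (giving the identity summand $B_{\overline{x_c}}\cdot T$), and on branches whose control is $1$ the run descends into $\aut_1'$, where by the induction hypothesis for the inner gate it computes the correct inner-gate image; together this realises $B_{\overline{x_c}}\cdot T + B_{x_c}\cdot \mathsf{inner}(T)$, matching the formulae of \cref{tab:quantum_gates}. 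The assumption $c<t$ (and $c<c'<t$ for Toffoli) is used here to guarantee that the $x_c$-layer is strictly above the layer at which the inner gate acts, so the rerouting does not interfere with the inner construction.

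The main obstacle is the \emph{unique factorisation} of accepting runs of the output automaton. In both algorithms the primed copy shares no states with the original, and the only transitions crossing the boundary are the rewired $x_t$- (respectively $x_c$-) transitions; proving this carefully, together with the observation that once a run enters the primed copy at such a transition it never returns to the original, is what lets the bijection be written down and inverted. A secondary subtlety, which the proof must respect, is that for the scaling constructions the leaf multiplication by $a_0$ and by $a_1$ must be shown invertible (so that the bijection can be inverted on the leaf alphabet); this is immediate from $|a_0|=|a_1|=1$ for the gates considered. Once these points are settled, correctness in each case is a short bookkeeping argument, and the theorem follows by combining the three families.
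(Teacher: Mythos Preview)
Your proposal is correct and follows essentially the same approach as the paper: the paper also proves the result by establishing, for each of the three construction patterns, an explicit bijection between $\lang(\aut)$ and $\lang(\mathrm{U}(\aut))$ built by transporting accepting runs (Lemmas for constant scaling, subtree swapping, and the controlled case), then composing these for $\mathrm{Y}_t$ and recursing for Toffoli. Your identification of the unique-factorisation argument (the primed copy is disjoint from the original and reachable only through the rewired transitions) and the need for invertibility of the leaf scalars matches exactly what the paper's surjectivity claims rely on.
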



\begin{proof}
For $\mathrm{U} \in \{ \mathrm{Z}_t,\mathrm{S}_t,\mathrm{T}_t\}$, the statement follows from Lemma \ref{lem:constantscaling}, for $\mathrm{U}=\mathrm{X}_t$, the statement follows from Lemma \ref{lem:swappingsubtree} and for $\mathrm{U} = \mathrm{Y}_t$ it follows from Lemma \ref{lem:constantscaling} and Lemma \ref{lem:swappingsubtree} combined. For $\mathrm{U} \in \{\mathrm{CNOT}^c_t, \mathrm{CZ}^c_t \}$ it follows from Lemma \ref{lem:algorithm2}. 

Finally, for $\mathrm{U}= \mathrm{Toffoli}^{c,c'}_t$, we may set $\aut_1 = \mathrm{CNOT}^{c'}_t(\aut)$. Since $0 <c < c' < t \leq n$ and there exists a bijection $\mathcal{V}:\lang(\aut) \to \lang(\mathrm{CNOT}^{c'}_t(\aut))$ by above. Following line by line in the proof of Lemma \ref{lem:algorithm2} with $\aut_1 = \mathrm{CNOT}^{c'}_t(\aut)$, we deduce the statement for $\mathrm{U}=\mathrm{Toffoli}^{c,c'}_t$. 
\end{proof}


\section{Proofs for Section 6}

Similar to the previous section, in this one we provide the proofs for Theorems and Lemmas in Section 6.

\begin{lemma}[Lemma 6.3]\label{lem:tag}
All non-single-valued trees in a tagged TA have different tags.
\end{lemma}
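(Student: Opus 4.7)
I will prove the slightly stronger statement that any two distinct trees accepted by a tagged TA have distinct tags; Lemma 6.3 follows as a corollary. The plan rests on two structural facts available for a tagged TA $\aut_{\mathrm{Tag}} = \tuple{Q, \Sigma', \Delta', \rootstates}$: (i)~by the construction of \cref{algo:tagging}, every internal transition carries a \emph{globally unique} symbol $x^j_k$ (since $\mathit{ord}$ is an injection on $\Delta$), so for each symbol appearing in an internal transition there is exactly one such transition in $\Delta'$; and (ii)~by the standing assumption on TAs recalled in \cref{sec:TA}, each state has at most one outgoing leaf transition, and therefore a given leaf state determines a unique leaf symbol.

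Assume for contradiction two non-single-valued trees $T_1, T_2 \in \lang(\aut_{\mathrm{Tag}})$ with $T_1 \neq T_2$ but $\tagg(T_1) = \tagg(T_2)$. Since tagging replaces every leaf symbol by $\square$ while keeping the tree shape and internal labels, the equality of tags forces $N_{T_1} = N_{T_2}$ and $L_{T_1}(u) = L_{T_2}(u)$ for every internal node $u$. Hence $T_1$ and $T_2$ can differ only at leaf nodes. Fix accepting runs $\rho_1$ on $T_1$ and $\rho_2$ on $T_2$.

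The key step is a structural induction showing $L_{\rho_1}(u) = L_{\rho_2}(u)$ for every node $u \in N_{T_1}$. For the root $u = \epsilon$, its common internal label $x^{j_0}_{k_0}$ identifies, by fact~(i), the unique transition $\transtree{q}{x^{j_0}_{k_0}}{q_l, q_r}$ in $\Delta'$, so both runs must assign $q$ to $\epsilon$ and $q_l, q_r$ to its children. Inductively, at any internal node $u$ with $L_{\rho_1}(u) = L_{\rho_2}(u)$, the common internal label $L_{T_1}(u) = L_{T_2}(u) = x^j_k$ again pinpoints a unique transition in $\Delta'$, forcing the child states to agree in both runs. At leaves, the runs therefore coincide as well.

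Now pick any leaf node $u$ and let $q = L_{\rho_1}(u) = L_{\rho_2}(u)$. Both $\transtree{q}{L_{T_1}(u)}{}$ and $\transtree{q}{L_{T_2}(u)}{}$ belong to $\Delta'$, and fact~(ii) yields $L_{T_1}(u) = L_{T_2}(u)$. Combined with the earlier agreement on internal labels, this gives $T_1 = T_2$, the desired contradiction. The only subtlety in executing this plan is checking that the root-state nondeterminism does not create a branch point; this is handled uniformly by fact~(i), because globally unique internal symbols force the choice of root state as soon as the root's (tagged) label is fixed.
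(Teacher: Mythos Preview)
Your proof is correct and follows essentially the same approach as the paper's: both arguments use the uniqueness of tagged internal symbols to force the runs of two same-tag trees to coincide at every internal node, and then invoke the standing assumption on leaf transitions to conclude the leaf labels coincide as well. Your version is simply more explicit about the induction and about why root-state nondeterminism is harmless, and you correctly observe that the ``non-single-valued'' hypothesis is not actually needed for the argument.
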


\begin{proof}
Let $\aut$ be a tagged TA. Suppose that $T_1,T_2 \in \lang(\aut)$ have the same tag, say 
\[
\tagg(T_1)=\tagg(T_2) = x_1^{i_1}(x_2^{i_2}(\cdots (x_n^{i_m}(\square,\square),x_n^{i_l}(\square,\square)),\cdots))),
\]
and let $\run_1$ (resp. $\run_2$) be the corresponding accepting run of $T_1$ (resp. $T_2$). Note that since $\aut$ is deterministic, the accepting run is unique for $T_1$ and $T_2$.
Since every transition in a tagged TA has a~unique symbol, $\run_1$ and $\run_2$ have the same internal nodes in the underlying (same) tree.
Moreover, from our definition of TAs, every leaf transition has a~unique parent state.
Hence $\run_1$ and $\run_2$ have the same leaf nodes too.
Thus, we have $\run_1=\run_2$ and therefore $T_1=T_2$. 
\end{proof}

%
%


\begin{theorem}[Theorem 6.6]\label{thmRestrTagPreserv}
The restriction procedure is tag preserving on the tree restriction operation that transforms $T$ to $B_{x}\cdot T$.
\end{theorem}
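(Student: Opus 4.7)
The plan is to establish both claims of the theorem simultaneously by constructing an explicit bijection $S\colon \lang(\aut) \to \lang(\mathsf{Res}(\aut, x_t, b))$ that witnesses tag preservation and also realizes the restriction operation $T \mapsto b\,?\,B_{x_t}\cdot T : B_{\overline{x_t}}\cdot T$ on the tree-as-function semantics. The key structural observation is that the primed copy $\aut' = \tuple{Q', \Sigma, \Delta'_i \cup \Delta'_l, \rootstates'}$ mirrors $\aut$ transition-for-transition on internal symbols (so both automata produce the \emph{same} shape of internal tree and thus the same tag), differing only in that every leaf symbol has been replaced by $\algzero$. Algorithm~\ref{algo:restriction} merely redirects one child at every $x^i_t$-transition into this primed copy, so any accepting run is forced to switch to primed states precisely on the subtree being zeroed out.

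First I would define $S(T)$ constructively. Given $T \in \lang(\aut)$ with its unique run $\run$ (uniqueness follows because tagged TAs are top-down deterministic on internal labels, and leaf transitions have unique parent states by our standing assumption on TAs), I build $T^R$ and $\run^R$ node by node: on every path that, from some $x^i_t$-ancestor, descends into the child selected by the restriction (left if $b = \TT$, right if $b = \FF$), I prime every state of $\run$ and replace every leaf value by $\algzero$; on all other nodes $\run^R$ and $T^R$ agree with $\run$ and $T$ respectively. A~straightforward case analysis against the three groups of transitions in the algorithm's output (untouched transitions of $\aut$, modified $x^i_t$-transitions with one primed child, and transitions from $\Delta'_i \cup \Delta'_l$) verifies that $\run^R$ is a valid accepting run on $T^R$ and that internal labels of $T^R$ coincide with those of $T$, which gives $\tagg(T^R) = \tagg(T)$.

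Next I would show $S$ is a bijection. Injectivity is immediate from determinism: distinct $T_1, T_2$ produce distinct internal-label patterns or distinct non-zero leaf values on the preserved side, both of which are copied verbatim into $S(T_1), S(T_2)$. For surjectivity, take any $T^R \in \lang(\mathsf{Res}(\aut, x_t, b))$ with its run $\run^R$. Because primed and unprimed state sets are disjoint and the only transitions crossing from unprimed to primed are exactly the rewritten $x^i_t$-transitions, the run's primed region is determined to be precisely the subtree below the restricted child of each $x^i_t$-node; unprime all states there and replace each $\algzero$-leaf by the leaf value dictated by the uniquely corresponding transition of $\aut'$ (here I use that $\aut'$ is built as a shape-preserving copy, so the pre-image leaf value is well defined by the state of the original $\aut$). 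This produces a tree $T \in \lang(\aut)$ with $S(T) = T^R$.

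Finally, functionality $S(T) = b\,?\,B_{x_t}\cdot T : B_{\overline{x_t}}\cdot T$ is read off the construction: the leaves of $S(T)$ agree with those of $T$ on the preserved side and are $\algzero$ on the restricted side, which is exactly the definition of $B_{x_t}\cdot T$ (or its complement). The main obstacle I anticipate is notational rather than conceptual: keeping the primed/unprimed bookkeeping clean while iterating over every ancestor-descendant relation involving an $x^i_t$-transition, in particular handling the case of nested $x^i_t$-labels on a single root-to-leaf path (where the outer restriction already forces the whole subtree to be primed, so inner $x^i_t$-transitions contribute no further zeroing). This can be handled cleanly by defining the primed region of $\run^R$ as the union of subtrees rooted at the restricted child of every $x^i_t$-node, and arguing set-theoretically.
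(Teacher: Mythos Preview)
Your approach is essentially the same as the paper's: both construct an explicit tag-preserving bijection by priming the states (and zeroing the leaves) precisely in the subtree below the restricted child of each $x_t$-transition, verify the resulting run is accepting by case analysis over the three groups of output transitions, and invert the construction for surjectivity by un-priming and recovering leaf values from the unique-parent-state assumption.

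Two minor remarks. First, your anticipated obstacle of ``nested $x^i_t$-labels on a single root-to-leaf path'' does not arise: the trees in $\lang(\aut)$ are full binary trees with level~$t$ labelled exclusively by (tagged variants of) $x_t$, so each root-to-leaf path meets exactly one $x^i_t$-node and the primed region is simply the subtree below its restricted child. Second, for injectivity the paper invokes \cref{lem:tag} directly (distinct trees in a tagged TA have distinct tags, hence so do their images), which is slightly cleaner than your argument via ``distinct non-zero leaf values on the preserved side''; your version is correct but implicitly relies on the same lemma to rule out two trees that share a tag yet differ only on the zeroed side.
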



\begin{proof}
Fix $t\in [n]$. Since the restriction procedures for $\aut_{B_{x_t}\cdot T}$ and $\aut_{B_{\overline{x_t}}\cdot T}$ are symmetric, we will prove the case of $\aut_{B_{x_t}\cdot T}$ and the proof for the later case is similar. 

For a given TA $\aut = (Q,\Sigma,\Delta,\rootstates) := \aut_T$, let us denote by $\aut_B =\{ Q_B = Q \cup Q', \Sigma_B = \Sigma \cup \{c_0\}, \Delta_B = (\Delta \setminus \Delta_{\mathsf{rm}})\cup \Delta_{\mathsf{add}} \cup \Delta',  \rootstates_B = \rootstates \}$ the TA $\aut_{B_{x_t} \cdot T}$ constructed from $\aut$ via our restriction procedure Algorithm~\ref{algo:restriction}. We are going to construct a map
\begin{align*}
(\cdot)_B: \lang(\aut) &\to \lang(\aut_B) \\
T &\mapsto T_B
\end{align*}
as follows: for each $T \in \lang(\aut)$ with an accepting run $\run$, since $\Sigma \subseteq \Sigma_B$, we can form a tree $T_B$ such that $tag(T_B)=tag(T)$ with each internal node $u\in N_{T_B}$ being such that $L_{T_B}(u) = L_T(u) \in \Sigma \subseteq \Sigma_B$. Let $\run_B$ be the run of $T_B$ defined as
\begin{enumerate}
    \item $L_{\run_B}(\epsilon) = L_\run(\epsilon)$;
    \item for each internal node $u \in N_{T_B}$, 
\[
    L_{\run_B}(u) = 
    \begin{cases}
    L_\run(u) & \text{ if } L_T(u) \text{ has variants in } \{x_1,\dots,x_t\} \text{ as its label} \\
    L_\run(u)' & \text{ otherwise }
    \end{cases};
\]
    \item for each leaf node $u= 0.u'$ or $u=1.u'$ with an internal node $u'$, 
    \[
    L_{\run_B}(u) = \begin{cases}
    L_{\run}(u)' & \text{ if } L_{\run_B}(u') = L_\run(u')' \\
    L_\run(u)' & \text{ if } u=0.u', L_{\run_B}(u') = L_\run(u') \text{ and }L_{T}(u') \text{ has variants in $x_t$ as label} \\
    L_\run(u) & \text{ otherwise } 
    \end{cases}.
    \]
\end{enumerate}
Then, for every internal node $u \in N_{T_B}$, we have 
\[
\transtree {L_{\run_B}(u)} {L_{T_B}(u)} {L_{\run_B}(0.u),L_{\run_B}(1.u)}  \in \begin{cases}
\Delta \setminus \Delta_{\mathsf{rm}} & \text{ if } L_{\run_B}(u),L_{\run_B}(0.u),L_{\run_B}(1.u) \in Q \\
\Delta_{\mathsf{add}} & \text{ if } L_{\run_B}(u) \in Q \text{ and } L_{\run_B}(0.u),L_{\run_B}(1.u) \in Q' \\
\Delta' & \text{ if } L_{\run_B}(u),L_{\run_B}(0.u),L_{\run_B}(1.u) \in Q'
\end{cases}
\]
and hence 
\[
\transtree {L_{\run_B}(u)} {L_{T_B}(u)} {L_{\run_B}(0.u),L_{\run_B}(1.u)} \in \Delta_B.
\]
For each leaf node $u \in N_{T_B}$, we set 
\[
L_{T_B}(u) = \begin{cases}
L_{T}(u) & \text{ if } L_{\run_B}(u) \in Q\\
c_0 & \text{ if } L_{\run_B}(u) \in Q'
\end{cases}
\]
and then
\[
\transtree {L_{\run_B}(u)} {L_{T_B}(u)} {}  \in \Delta_B
\]
for each leaf node $u\in N_{T_B}$. Moreover, $L_{\run_B}(\epsilon) = L_\run(\epsilon) \in \rootstates = \rootstates_B$. Thus $R_B$ is an accepting run of $\lang(\aut_B)$ over $T_B$ and hence $T_B \in \lang(\aut_B)$, i.e., the map $(\cdot)_B$ is well-defined. 

By \cref{lem:tag}, all different trees $T \in \lang (\aut)$ have different tags and, since $\tagg(T_B) = \tagg(T)$ for each $T$ by the construction, we have that all $T_B$'s are different as well. It follows that the map $(\cdot)_B$ is injective with $\tagg(T) = \tagg(T_B)$. 

\noindent
\begin{claim}
The map $(\cdot)_B$ is surjective.
\end{claim}

\begin{claimproof}
For each $T' \in \lang(\aut_B)$ with an accepting run $R'$, since $\Sigma_B = \Sigma \cup \{c_0\}$ and $c_0$ corresponds to the leaf transition only, we can form a tree $T$ such that $\tagg(T) = \tagg(T')$ and, for each internal node $u \in N_T$, $L_T(u) = L_{T'}(u) \in \Sigma_B \setminus \{c_0\} = \Sigma $. To construct an accepting run $\run$ of $T$ such that $T \in \lang(\aut)$, we first note that the set of root states $\rootstates_B$ of $\aut_B$ is identical to the one $\rootstates$ of $\aut$. Therefore $L_{\run'}(\epsilon) \in \rootstates_B = \rootstates$ and we may set $L_\run(\epsilon) = L_{\run'}(\epsilon)$. For each internal node $u \in N_T$, we set
\[
L_\run(u) = \begin{cases}
L_{\run'}(u) & \text{ if } L_{\run'}(u) \in Q \\
(L_{\run'}(u))^{un-primed} & \text{ if } L_{\run'}(u) \in Q'
\end{cases}
\]
where, since $Q'$ is in one-to-one correspondence to $Q$, we denote by $(q')^{un-primed}$ the corresponding $q \in Q$ to $q' \in Q'$. By construction, all elements in the range of $N_B$ belong to $Q$. For an internal transition 
\[
\transtree {L_{\run'}(u)} {L_{T'}(u)} {(L_{\run'}(0.u),L_{\run'}(1.u) } \in \Delta_B,
\]
if all $L_{\run'}(u),L_{\run'}(0.u),L_{\run'}(1.u) \in Q $, then 
\[\transtree {L_{\run}(u)} {L_{T}(u)} {(L_{\run}(0.u),L_{\run}(1.u) } \in \Delta.
\]
If all $L_{\run'}(u),L_{\run'}(0.u),L_{\run'}(1.u) \in Q' $, then 
\[
\transtree {L_{\run}(u)} {L_{T}(u)} {(L_{\run}(0.u),L_{\run}(1.u) } \in \Delta
\]
too since $\Delta'$ is in one-to-one correspondence to $\Delta$. For the cases $L_{\run'}(u) \in Q$ and $L_{\run'}(0.u),L_{\run'}(1.u) \in Q' $, the transition $\transtree {L_{\run}(u)} {L_{T}(u)} {(L_{\run}(0.u),L_{\run}(1.u) }$ belong to $\Delta$ as well by the add/remove step in our procedure. The leaf transitions for the leaf nodes are uniquely determined by their starting states which have been defined. Therefore such an $\run$ is an accepting run of $T$ over $\aut$, namely, $T \in \lang(\aut)$. Thus the claim follows. 
\end{claimproof}

Finally, the functionality
\[
\untagg(T_B) = B_{x_t} \cdot \untagg(T)
\]
follows directly from the construction of $(\cdot)_B$.
\end{proof}


%
%
%


\begin{theorem}[Theorem 6.7]
The multiplication procedure over a tagged TA is tag preserving on the tree multiplication operation transforming $T$ to $v\cdot T$.
\end{theorem}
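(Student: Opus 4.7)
The plan is to construct an explicit bijection $S\colon \lang(\aut) \to \lang(\mathsf{Mult}(\aut, v))$ that preserves tags and implements the multiplication on the untagged quantum-state level. Given $T \in \lang(\aut)$ with its accepting run $\rho$, I define $S(T)$ to be the tree with the same shape and same internal labels as $T$, but with every leaf label $(a,b,c,d,k)$ replaced by its product with $v$, computed via the algebraic rule used in the algorithm: namely $(-d,a,b,c,k)$ if $v=\omega$, and $(a,b,c,d,k+1)$ if $v=\tfrac{1}{\sqrt 2}$. These rules are correct implementations of multiplication by $\omega$ and by $\tfrac{1}{\sqrt 2}$ in the algebraic encoding $\tfrac{1}{\sqrt 2^k}(a+b\omega+c\omega^2+d\omega^3)$ (using $\omega^4=-1$), so $\untagg(S(T)) = v\cdot \untagg(T)$ holds pointwise at each leaf, which gives the functionality claim.

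Next I would argue that $\rho$ is itself an accepting run of $S(T)$ in $\mathsf{Mult}(\aut, v)$. For each internal node $u$, the transition $\transtree{L_\rho(u)}{L_T(u)}{L_\rho(0.u),L_\rho(1.u)}$ lies in $\Delta \setminus \Delta_{\mathsf{rm}}$ since $\Delta_{\mathsf{rm}}$ contains only leaf transitions, hence it lies in the new transition set. For each leaf node $u$, the original leaf transition $\transtree{L_\rho(u)}{L_T(u)}{}$ has been removed and the transition $\transtree{L_\rho(u)}{v\cdot L_T(u)}{} = \transtree{L_\rho(u)}{L_{S(T)}(u)}{}$ added. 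Here is where I would use the TA well-formedness assumption from \cref{sec:TA} that every leaf transition has a unique parent state: this guarantees that the \textbf{foreach} loop in \cref{algo:multiplication} rewrites each leaf transition exactly once, and that the rewriting is unambiguously determined by the state $L_\rho(u)$, so no spurious runs are introduced. Since the root label $L_\rho(\epsilon)\in\rootstates$ is unchanged, $\rho$ remains accepting.

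Bijectivity is obtained by mirroring the construction with $v^{-1}$ (both $\omega$ and $\tfrac{1}{\sqrt 2}$ are invertible, and the inverse operations, namely a left circular shift with sign change and decrementing $k$, respectively, are expressible in the same algebraic encoding up to adjusting $k$). The resulting map $S^{-1}$ is well-defined on $\lang(\mathsf{Mult}(\aut,v))$ by the same unique-parent-state argument applied to the new TA, and $S\circ S^{-1}$ and $S^{-1}\circ S$ are identities because multiplication by $v$ and $v^{-1}$ cancel in the algebraic representation. Tag preservation $\tagg(T) = \tagg(S(T))$ is then immediate: the tag only records internal-node structure and labels (replacing all leaves by $\square$), and $S$ modifies only leaf labels. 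Combining the bijection, the identity $\tagg(T)=\tagg(S(T))$, and the equality $\untagg(S(T))=v\cdot\untagg(T)$ yields both $\mathsf{Mult}(\aut,v)\simeq_{\tagg}\aut$ and the language characterization $\lang(\mathsf{Mult}(\aut,v))=\{v\cdot T \mid T\in\lang(\aut)\}$.

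The only real obstacle is the bookkeeping around the unique-parent-state assumption: I must verify that the algorithm's rewriting of $\Delta$ does not accidentally merge or duplicate leaf transitions (which would break either well-formedness of the output TA or the bijection). All other steps are routine, since the algorithm only touches leaf transitions and the multiplication formulas faithfully encode scalar multiplication in the chosen representation of $\complex$.
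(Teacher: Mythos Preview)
Your proof is correct and follows essentially the same idea as the paper, namely that the multiplication algorithm modifies only leaf symbols and therefore leaves tags untouched; the paper's own proof compresses all of this into the single sentence ``the theorem trivially holds because the multiplication procedure only changes the leaf symbols.'' Your explicit construction of the bijection, the verification that the same run $\rho$ accepts $S(T)$, and the inverse-via-$v^{-1}$ argument are all sound but more detailed than the paper deems necessary (and the unique-parent-state assumption is not actually needed here, since the leaf rewritings $(a,b,c,d,k)\mapsto(-d,a,b,c,k)$ and $(a,b,c,d,k)\mapsto(a,b,c,d,k+1)$ are already bijections on $\integers^5$).
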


\begin{proof}
The theorem trivially holds because the multiplication procedure only changes the leaf symbols.
\end{proof}

%


\begin{lemma}[Lemma 6.8]\label{lem:copy.tag.inv}
Subtree copying $\mathsf{s.copy_n}$ is tag-preserving over the tree projection operation $T\rightarrow T_{x_n}$.
\end{lemma}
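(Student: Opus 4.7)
The plan is to exhibit an explicit bijection $S\colon \lang(\aut) \to \lang(\mathsf{s.copy}_n(\aut))$ that preserves the tag of every accepted tree and realizes the projection $T \mapsto T_{x_n}$ after untagging. First I would define $S$. Given $T \in \lang(\aut)$ with its (necessarily unique, since $\aut$ is tagged) accepting run $\run$, I form $S(T)$ by keeping every internal label of $T$ unchanged and, at every node $u$ whose label is of the form $x_n^i$, replacing the constant $L_T(0.u)$ at the left leaf with the constant $L_T(1.u)$ at the right leaf. The corresponding run $\run'$ agrees with $\run$ everywhere except that $L_{\run'}(0.u) := L_{\run}(1.u)$ at such nodes $u$. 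Internal transitions not touching $x_n$-labels are untouched by \cref{algo:subtree_copying}; the rewritten transition $\transtree{L_\run(u)}{x_n^i}{L_\run(1.u),L_\run(1.u)}$ matches $\run'$ by construction; and the leaf transitions are unchanged, so $S(T) \in \lang(\mathsf{s.copy}_n(\aut))$. Since only leaf constants change, $\tagg(S(T)) = \tagg(T)$.

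Next I would verify bijectivity. After tagging, every internal transition of $\aut$ is uniquely identified by its symbol $x_k^i$; combined with the standing assumption that each leaf state is the parent of a unique leaf transition, a tree accepted by any tagged TA is completely determined by its tag. Injectivity follows: $S(T_1) = S(T_2)$ implies $\tagg(T_1) = \tagg(T_2)$ and hence $T_1 = T_2$. For surjectivity, given $T' \in \lang(\mathsf{s.copy}_n(\aut))$ with run $\run'$, the modified transitions $\transtree{q}{x_n^i}{q_r,q_r}$ force both children of every $x_n^i$-node to carry the same state (and, by leaf-symbol uniqueness, the same constant). I then recover $T$ by replacing, at each such node, the left leaf constant in $T'$ by the constant prescribed by the original $\aut$-transition's left child state $q_l$; using the original $\aut$-transition $\transtree{q}{x_n^i}{q_l,q_r}$ at that node, one checks that $T \in \lang(\aut)$ and $S(T) = T'$.

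Finally, functionality. Unfolding $\mathsf{Untag}$ just erases tag indices, so $\mathsf{Untag}(T)$ and $\mathsf{Untag}(S(T))$ share the same internal structure. By definition of projection, $(\mathsf{Untag}(T))_{x_n}$ is obtained from $\mathsf{Untag}(T)$ by overwriting, at every $x_n$-node, the left child with a copy of the right child; this is exactly the modification that $S$ performs on leaf constants, so $\mathsf{Untag}(S(T)) = (\mathsf{Untag}(T))_{x_n}$.

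The main obstacle is the surjectivity step: one must lean simultaneously on (i) per-transition tagging, which pins down which original transition to invert once the internal structure of $T'$ is fixed, and (ii) the leaf-symbol-per-state uniqueness assumption, so that the left leaf constant recovered for $T$ is well-defined and independent of any choice. Both facts fail once the projection is attempted at any layer higher than $x_n$, which is precisely why the lemma is restricted to $t = n$ and motivates the variable-swapping machinery used in the projection algorithm.
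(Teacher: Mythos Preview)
Your argument is correct and follows essentially the same route as the paper: an explicit map $S$ that rewrites the left leaf under every $x_n^i$-node to match the right one, with well-definedness checked against the rewritten transitions, injectivity from tag-determinism in a tagged TA, surjectivity by inverting along the unique original $x_n^i$-transition to recover $q_l$, and functionality by unfolding the definition of projection. The only cosmetic difference is that the paper phrases injectivity via \cref{lem:tag} rather than stating directly that a tree in a tagged TA is determined by its tag; your formulation is slightly sharper but equivalent.
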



\begin{proof}
Let $\aut = (Q,\Sigma,\Delta, \rootstates)$ be a tagged TA and $\aut'
  =(Q',\Sigma',\Delta', \rootstates'):= \mathsf{s.copy_n}(\aut)$ be the TA obtained after subtree copying procedure on $x_n$. Note that, by construction, $Q'=Q, \Sigma'=\Sigma, \rootstates' = \rootstates$. Let us construct a map
\begin{align*}
(\cdot)_{x_n}: 
  \lang(\aut) &\to \lang(\aut') \\
    \hfill T &\mapsto T_{x_n}
\end{align*}
as follows: for each $T \in \lang(\aut)$ with its accepting run $\run$, we form a
  tree $T_{x_n}$ such that $\tagg(T_{x_n}) = \tagg(T)$ with each internal node $u \in N_{T_{x_n}}$ being such that $L_{T_{x_n}}(u) = L_T(u) \in \Sigma$. Let $\run_{x_n}$ be the run of $T_{x_n}$ on $\aut'$ defined as
\begin{enumerate}
    \item $L_{\run_{x_n}}(\epsilon) = L_\run(\epsilon)$;
    \item $L_{\run_{x_n}}(u) = L_\run(u)$ for each internal node $u \in N_{T_{x_n}}$ except those nodes in the $x_n$'s layer;
    \item and for the remaining internal nodes $u' = 0.u$ and $u''=1.u$, we set $L_{\run_{x_n}}(0.u) = L_\run(1.u)$ and $L_{\run_{x_n}}(1.u) = L_\run(1.u)$;
\end{enumerate}
and for the leaf nodes $u$, $L_{\run_{x_n}}(u)$ and $L_{T_{x_n}}(u)$ are then uniquely determined by the structure of the TA $\aut'$. By construction, we have, for $u$ in the layer just above the leaf nodes, 
\begin{align*}
\transtree {L_{\run_{x_n}}(u)} {L_{T_{x_n}}(u)} {(L_{\run_{x_n}}(0.u),L_{\run_{x_n}}(1.u) }
= \transtree {L_{\run}(u)} {L_T(u)} {(L_\run(1.u),L_\run(1.u) } \in \Delta_{\mathsf{add}} \subset \Delta'.
\end{align*}
For the other nodes, the corresponding transitions remain unchanged and hence
  belong to $\Delta \setminus \Delta_{\mathsf{rm}} \subset \Delta'$. Thus the
  map $(\cdot)_{x_n}$ is well-defined. Moreover, since $\tagg(T) = \tagg(T_{x_n})$ and each tree has its own unique tag by \cref{lem:tag}, it follows that $(\cdot)_{x_n}$ is injective. 

\begin{claim}
The map $(\cdot)_{x_n}$ is surjective.
\end{claim}

\begin{claimproof}
For each $T' \in \lang (\aut')$ with the accepting run $\run'$, again since
  $\Sigma' = \Sigma$, we can form a tree $T$ such that $\tagg(T)= \tagg(T')$, and, for each internal node $u \in N_{T_{x_n}}$, $L_T(u) = L_{T'}(u) \in \Sigma$. To construct an accepting run $R$ of $T$, note that the set of root states $\rootstates'$ of $\aut'$ is identical to the one of $\aut$. Therefore we may set $L_{\run}(\epsilon) = L_{\run'}(\epsilon) \in \rootstates' = \rootstates$. For each internal node $u\in N_T$ other than in the $x_n$'s layer, we set $L_\run(u) = L_{\run'}(u)$. For each internal node $u$ in the $x_n$'s layer, we set (1) $L_\run(u) = L_{\run'}(u)$, (2) $L_\run(1.u) = L_{\run'}(1.u)$, and  (3) $L_\run(0.u) = q_l$, where $q_l$ is the left child of the transition $\transtree {L_\run(u)} {L_T(u)} {q_l,q_r}$. Such a transition must exist since there is a one-to-one correspondence between $\Delta$ and $\Delta'$ by $\mathsf{s.copy}$ algorithm. The transition 
\[
\transtree {L_\run(u)} {L_T(u)} {q_l,q_r}
\]
is nothing but the one corresponding to the transition 
\[
\transtree {L_{\run'}(u)} {L_{T'}(u)} {(L_{\run'}(0.u),L_{\run'}(1.u)} 
\]
which, by assumption, belongs to $\Delta'$. Those labelings of leaf nodes are then uniquely determined by the structure of the TA. Thus, by construction, $T \in \lang(\aut)$ and hence the map $(\cdot)_{x_n}$ is surjective.
\end{claimproof}

Finally, 
$\untagg(T_{x_n}) = \untagg(T)_{x_n}$
follows directly from the construction~of~$(\cdot)_{x_n}$.
\end{proof}

\begin{lemma}[Forward and backward swapping preserves quantum states]\label{lem:swap.state.inv}
Given a TA $\aut$ that encodes a set of quantum states, $\lang(\mathsf{f.swap}_t(\aut))$, $\lang(\aut)$, and $\lang(\mathsf{b.swap}_t(\aut))$ represent the same set of quantum states. 
\end{lemma}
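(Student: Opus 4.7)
The plan is to establish a bijective correspondence $\phi \colon \lang(\aut) \to \lang(\mathsf{f.swap}_t(\aut))$ such that for every $T \in \lang(\aut)$ the image tree $\phi(T)$ encodes the \emph{same} function $\{0,1\}^n \to \mathbb{Z}^5$ as $T$, merely represented under the variable ordering obtained by swapping the positions of $x_t$ and $x_l$. Since two binary-tree representations of the same function under different variable orderings denote the same quantum state, this bijection immediately implies the equality of the represented state sets. The statement for $\mathsf{b.swap}_t$ will then follow by observing that $\mathsf{b.swap}_t$ is the structural inverse of $\mathsf{f.swap}_t$ on its image.

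First I would make precise the semantics of a depth-$n$ full binary tree as a function of the variable ordering: for an ordering $\pi$, following a path $b_1 b_2 \cdots b_n$ from the root yields the amplitude of the computational basis state that assigns $b_i$ to the variable at position $\pi(i)$. A direct calculation then shows that swapping two consecutive positions in $\pi$ corresponds at the tree level to the following local rewrite at every $(x_t, x_l)$-boundary: the subtrees reached by paths $01$ and $10$ must be exchanged, while those at $00$ and $11$ remain in place, because the $01$-path under the new order indexes the amplitude previously accessed by the $10$-path and vice versa.

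Next I would verify that Algorithm~\ref{algo:forward_swapping} performs exactly this local rewrite on every matching triple of transitions. For each triple $\transtree{q}{x^h_t}{q_0,q_1}$, $\transtree{q_0}{x^i_l}{q_{00},q_{01}}$, $\transtree{q_1}{x^j_l}{q_{10},q_{11}}$ in $\Delta$, the new transitions $\transtree{q}{x^{i,j}_l}{q_0',q_1'}$, $\transtree{q_0'}{x^h_t}{q_{00},q_{10}}$, $\transtree{q_1'}{x^h_t}{q_{01},q_{11}}$ route the four paths $00,01,10,11$ to the states $q_{00}, q_{10}, q_{01}, q_{11}$ respectively, which is precisely the required permutation. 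I can then define $\phi$ by applying this local rewrite to every accepting run of $\aut$, using the primed copies $q_0', q_1'$ as the new intermediate states; by the same-function argument above, $\phi(T)$ and $T$ denote the same quantum state. Surjectivity is witnessed by the inverse rewrite (which is exactly what $\mathsf{b.swap}_t$ performs, using the composite label $x^{i,j}_l$ as a record of the original parent states), and injectivity follows because the local rewrite is invertible on runs. Stringing the argument together gives $\lang(\aut)$ and $\lang(\mathsf{f.swap}_t(\aut))$ in bijection via state-preserving maps, and symmetrically for $\mathsf{b.swap}_t$.

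The main obstacle I anticipate is the bookkeeping around which triples of transitions are actually ``matching'' in $\aut$: distinct $x_l$-transitions from $q_0$ and $q_1$ can be combined in many ways across accepting runs, so I must argue that $\phi$ both (i)~covers all accepting runs of $\mathsf{f.swap}_t(\aut)$, meaning every combined label $x^{i,j}_l$ introduced by the algorithm is in fact witnessed by a pair $(x^i_l, x^j_l)$ co-occurring in some run of $\aut$, and (ii)~never identifies two distinct trees of $\lang(\aut)$, which hinges on the tag-uniqueness granted by \cref{lem:tag}. Once this run-level bookkeeping is settled, the semantic invariance reduces to the local rewrite analysis above, and $\mathsf{b.swap}_t \circ \mathsf{f.swap}_t$ acts as the identity (up to state renaming) on $\aut$, closing the argument for the backward direction.
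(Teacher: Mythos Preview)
Your proposal is correct and takes essentially the same approach as the paper: both argue that $\mathsf{f.swap}_t$ re-represents each accepted tree's underlying function $\{0,1\}^n \to \mathbb{Z}^5$ under the variable ordering with $x_t$ and its successor $x_l$ exchanged, yielding a state-preserving bijection between $\lang(\aut)$ and $\lang(\mathsf{f.swap}_t(\aut))$, with the backward case handled symmetrically. The paper's proof is considerably terser---it simply asserts the bijection as ``clear'' from the algorithm without the run-level bookkeeping or the explicit $01/10$-path analysis you outline---so the obstacle you anticipate about matching triples is precisely the detail the paper elides.
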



\begin{proof}
First of all, the binary tree representations of a~function $T\colon\{0,1\}^n \to \mathbb{Z}^5$ are parametrized by the orders on the variables $x_i$'s. Fix $t \in [n]$. We denote by $T$ also the binary tree representing the function (as in this whole article) on the standard ordering 
\begin{equation}\label{eq:order1}
x_1 > x_2 > \cdots > x_t > x_{t+1} > \cdots x_n
\end{equation}
and by $T'$ the binary tree representing the same function but on the order
\begin{equation}\label{eq:order2}
x_1 > x_2 > \cdots x_{t-1} > x_{t+1} > x_{t} > x_{t+2} > \cdots x_n.    
\end{equation}
Such a correspondence between the set of binary trees with respect to standard ordering in \cref{eq:order1} and the set of trees with respect to the ordering in \cref{eq:order2} is clearly one-to-one. Let $\aut$ be the tagged TA consisting of $T \in \lang(\aut)$ with respect to the standard order \cref{eq:order1}. By \cref{algo:forward_swapping} of the $\mathsf{f.swap}_t$ procedure, it is clear that $\mathsf{f.swap}_t(\aut)$ consists of all the $T'$ and vice versa. Thus there is a one-to-one correspondence $\mathsf{f.swap}_t: \lang(\aut) \to \lang(\mathsf{f.swap}_t(\aut))$. Moreover, since $T$ and its corresponding $T'$ are by definition representing the same function as well as the same state, we have that the TAs $\aut$ and $\mathsf{f.swap}_t(\aut)$ represent the same set of quantum states. 

The proof of the case that $\aut$ and $\mathsf{b.swap}_t(\aut)$ (if $\mathsf{b.swap}_t$ is applicable) represent the same set of quantum states is in the same fashion.
\end{proof}

By the above \cref{lem:swap.state.inv}, let us define the induced maps $\mathsf{f}_k$ and $\mathsf{b}_k$ on tags of trees, namely, $ \mathsf{f}_k(\tagg(T)):=\tagg(T')$ and $\mathsf{b}_k(\tagg(T')):= \tagg(T)$. It is easy to see that 
  \begin{equation}\label{eq:tagaction}
    \tagg(\mathsf{b.swap}_k(\mathsf{f.swap}_k(T))) = \mathsf{b}_k(\tagg(\mathsf{f.swap}_k(T))) = \mathsf{b}_k(\mathsf{f}_k(\tagg(T)))= \tagg(T).  
  \end{equation}
We also introduce the notion $\aut_1 \simeq_{state} \aut_2$ between two TAs $\aut_1$ and $\aut_2$ if there is a bijection $S\colon \lang(\aut_1) \to \lang(\aut_2)$ and $\lang(\aut_1)$ and $\lang(\aut_2)$ represent the same set of quantum states. 

Next, we prove an auxiliary lemma and use it to show that the projection procedure is tag preserving.



\begin{lemma}\label{lem:swap.tag.inv}
Given a TA $\aut$. For each $t\in [n]$ and for each TA $\aut'$ with $\aut'
  \simeq_{\tagg} \mathsf{f.swap}_t(\aut)$, we have $\mathsf{b.swap}_t (\aut')
  \simeq_{\tagg} \aut$.
\end{lemma}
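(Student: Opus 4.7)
The plan is to show the lemma by explicit composition of bijections, using the tag-preservation data provided by the hypothesis together with the structural bijections from \cref{lem:swap.state.inv} (and the implicit observation that $\mathsf{b.swap}_t$ is applicable to $\aut'$ because its transitions have the same tag-layout as those of $\mathsf{f.swap}_t(\aut)$, on which $\mathsf{b.swap}_t$ undoes $\mathsf{f.swap}_t$).

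First, I would collect the three bijections at hand: (i) $\mathsf{f.swap}_t \colon \lang(\aut) \to \lang(\mathsf{f.swap}_t(\aut))$, which is a bijection and moreover preserves the underlying quantum state; (ii) the hypothesised tag-preserving bijection $S' \colon \lang(\aut') \to \lang(\mathsf{f.swap}_t(\aut))$ from $\aut' \simeq_{\tagg} \mathsf{f.swap}_t(\aut)$; and (iii) $\mathsf{b.swap}_t \colon \lang(\aut') \to \lang(\mathsf{b.swap}_t(\aut'))$, again a bijection by \cref{lem:swap.state.inv}. I would then define
\[
  S \;:=\; \mathsf{b.swap}_t \,\circ\, (S')^{-1} \,\circ\, \mathsf{f.swap}_t \;\colon\; \lang(\aut) \longrightarrow \lang(\mathsf{b.swap}_t(\aut')),
\]
which is a bijection as a composition of bijections.

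It then remains to verify that $S$ is tag-preserving, i.e.\ $\tagg(S(T)) = \tagg(T)$ for each $T \in \lang(\aut)$. For this I would chase the induced maps $\mathsf{f}_t$ and $\mathsf{b}_t$ on tags: starting from $\tagg(\mathsf{f.swap}_t(T)) = \mathsf{f}_t(\tagg(T))$, the hypothesis that $S'$ (and hence $(S')^{-1}$) is tag-preserving gives $\tagg((S')^{-1}(\mathsf{f.swap}_t(T))) = \mathsf{f}_t(\tagg(T))$, and then applying $\mathsf{b.swap}_t$ and using $\tagg(\mathsf{b.swap}_t(\cdot)) = \mathsf{b}_t(\tagg(\cdot))$ yields $\tagg(S(T)) = \mathsf{b}_t(\mathsf{f}_t(\tagg(T)))$. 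The identity \eqref{eq:tagaction} ($\mathsf{b}_t \circ \mathsf{f}_t = \mathrm{id}$ on tags) closes the computation and gives $\tagg(S(T)) = \tagg(T)$, establishing $\mathsf{b.swap}_t(\aut') \simeq_{\tagg} \aut$.

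The only non-trivial point, and the one I would be most careful about, is that $\mathsf{b.swap}_t$ is genuinely applicable to $\aut'$ (it expects transitions already carrying the composite label $x^{i,j}_l$ produced by forward swapping). This is guaranteed because a tag-preserving bijection with $\mathsf{f.swap}_t(\aut)$ forces $\aut'$ to exhibit, on each accepted tree, exactly the same tag layout as $\mathsf{f.swap}_t(\aut)$, and \cref{algo:backward_swapping} only inspects this layout; everything else is routine diagram-chasing of bijections and tag maps.
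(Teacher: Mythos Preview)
Your proposal is correct and follows essentially the same approach as the paper: both define the witnessing bijection as the composite $\mathsf{b.swap}_t \circ (S')^{-1} \circ \mathsf{f.swap}_t$ and verify tag-preservation by chasing the induced tag maps $\mathsf{f}_t$, $\mathsf{b}_t$ together with \eqref{eq:tagaction}. Your explicit remark on the applicability of $\mathsf{b.swap}_t$ to $\aut'$ is a welcome clarification that the paper leaves implicit.
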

\begin{proof}
By assumption, there is a one-to-one correspondence between $\lang(\aut')$ and
  $\lang(\mathsf{f.swap}_t(\aut))$ such that for each $T' \in \lang(\aut')$ and
  its corresponding $T'' \in \lang(\mathsf{f.swap}_t(\aut))$, we have $\tagg(T') =
  \tagg(T'')$ (and vice versa). We denote by $S'(\cdot)\colon \lang(\aut') \to \lang(\mathsf{f.swap}_t(\aut))$ such a bijection. 

Since $\aut \simeq_{state} \mathsf{f.swap}_t(\aut)$ by the above Lemma \ref{lem:swap.state.inv}, there
  exists a bijection $\mathsf{f.swap}_t(\cdot)\colon \lang(\aut) \to \lang(
  \mathsf{f.swap}_t(\aut) )$ and similarly there is a bijection
  $\mathsf{b.swap}_t(\cdot)\colon \lang(\aut') \to \lang(\mathsf{b.swap}_t(\aut'))$. Thus we have the following diagram
\[
\begin{tikzcd}
\lang(\aut) \arrow[r, "\mathsf{f.swap}_t"] \arrow[d, dashrightarrow, "S"] & \lang( \mathsf{f.swap}_t(\aut) ) \arrow[d, "S'"] \\
\lang(\mathsf{b.swap}_t(\aut'))  & \lang(\aut') \arrow[l, "\mathsf{b.swap}_t"]
\end{tikzcd}
\]
and the composition map 
\[
S(\cdot) := \mathsf{b.swap}_t(\cdot) \circ S'(\cdot) \circ \mathsf{f.swap}_t(\cdot) : \lang(\aut) \to \lang(\mathsf{b.swap}_t(\aut'))
\]
is well-defined and is bijective since all the three maps are bijective. Moreover, for each $T \in \lang(\aut)$, 
\begin{align*}
\tagg(S(T)) &= \tagg(\mathsf{b.swap}_t( S'(\mathsf{f.swap}_t(T))) &\quad \text{By Def. of } S \\
&= \mathsf{b}_t(\tagg( S'(\mathsf{f.swap}_t(T))) &\quad \text{By Def. of } \mathsf{b}_t \\
&= \mathsf{b}_t(\tagg( \mathsf{f.swap}_t(T)) &\quad \text{By assumption of }S' \\
&= \mathsf{b}_t(\mathsf{f}_t (\tagg(T))) &\quad \text{By Def. of } \mathsf{f}_t \\
&= \tagg(T)&\quad \text{By \cref{lem:swap.state.inv}}. \end{align*}
Thus we have $\mathsf{b.swap}_t (\aut') \simeq_{\tagg} \aut$.
\end{proof}




\begin{theorem}[Theorem 6.11]
The projection procedure is tag preserving 
with respect to the tree projection operations that transforms a tree $T$ to $T_{x}$.
\end{theorem}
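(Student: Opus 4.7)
The plan is to chain the three component procedures in the same diagram-chasing style used in Lemma~\ref{lem:swap.tag.inv}. Let $\ell = n - t$ and define the ``forward'' chain
\[
\aut_0 := \aut, \qquad \aut_{i+1} := \mathsf{f.swap}_t(\aut_i) \text{ for } i = 0, \ldots, \ell - 1,
\]
set $\aut'_\ell := \mathsf{s.copy}(\aut_\ell, x_t, b)$, and then define the ``backward'' chain
\[
\aut'_{i} := \mathsf{b.swap}_t(\aut'_{i+1}) \text{ for } i = \ell - 1, \ldots, 0,
\]
so that $\mathsf{Prj}(\aut, x_t, b) = \aut'_0$.

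First, I would establish tag preservation. After $\ell$ applications of $\mathsf{f.swap}_t$, the variable $x_t$ occupies the layer immediately above the leaf transitions in $\aut_\ell$, so Lemma~\ref{lem:copy.tag.inv} applies and gives $\aut'_\ell \simeq_{\tagg} \aut_\ell$, with the accompanying functionality $\untagg(S_\ell(T)) = (b \, ? \, T_{x_t} : T_{\overline{x_t}})$ for the induced bijection $S_\ell$ on $\lang(\aut_\ell)$. Then I apply Lemma~\ref{lem:swap.tag.inv} inductively: from $\aut'_{i+1} \simeq_{\tagg} \aut_{i+1} = \mathsf{f.swap}_t(\aut_i)$, the lemma delivers $\aut'_i = \mathsf{b.swap}_t(\aut'_{i+1}) \simeq_{\tagg} \aut_i$ together with a bijection $S_i \colon \lang(\aut_i) \to \lang(\aut'_i)$ preserving tags. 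At $i = 0$ this is exactly $\mathsf{Prj}(\aut, x_t, b) \simeq_{\tagg} \aut$.

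For the functionality claim, I would chain the state-preservation of Lemma~\ref{lem:swap.state.inv}:
\[
\aut \simeq_{\mathrm{state}} \aut_1 \simeq_{\mathrm{state}} \cdots \simeq_{\mathrm{state}} \aut_\ell \text{ and } \aut'_\ell \simeq_{\mathrm{state}} \aut'_{\ell-1} \simeq_{\mathrm{state}} \cdots \simeq_{\mathrm{state}} \aut'_0,
\]
so each bijection in these chains identifies (on untagged trees) the same underlying function $\{0,1\}^n \to \integers^5$. Composing these bijections with $S_\ell$ gives a bijection $S \colon \lang(\aut) \to \lang(\mathsf{Prj}(\aut, x_t, b))$ with $\untagg(S(T)) = (b \, ? \, \untagg(T)_{x_t} : \untagg(T)_{\overline{x_t}})$, from which $\lang(\mathsf{Prj}(\aut, x_t, b)) = \{b \, ? \, T_{x_t} : T_{\overline{x_t}} \mid T \in \lang(\aut)\}$ follows.

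The main obstacle is guaranteeing that the backward-swap chain remains well-defined after the intervening $\mathsf{s.copy}$ step. Recall that $\mathsf{f.swap}_t$ merges a pair of consecutive transition tags $x^i_l$ and $x^j_l$ into a combined label $x^{i,j}_l$, and $\mathsf{b.swap}_t$ relies on exactly this label to recover the original pair. One must therefore verify that $\mathsf{s.copy}(\cdot, x_t, b)$, which rewrites only transitions labeled by the now-lowered $x_t$, preserves every merged $x^{i,j}_l$-label sitting above it and the branching structure that $\mathsf{b.swap}_t$ needs. This invariant is precisely what Lemma~\ref{lem:swap.tag.inv} packages: any tag-preserving modification $\aut' \simeq_{\tagg} \mathsf{f.swap}_t(\aut)$ can be backward-swapped without disturbing the accepting tree tags, so that the inductive step of the chain closes cleanly.
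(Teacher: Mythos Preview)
Your proposal is correct and follows essentially the same approach as the paper: set up the forward/backward swap chains with $\mathsf{s.copy}$ in the middle, invoke Lemma~\ref{lem:copy.tag.inv} at the leaf layer to get $\aut'_\ell \simeq_{\tagg} \aut_\ell$, then apply Lemma~\ref{lem:swap.tag.inv} inductively down to $i=0$, and derive functionality from the $\simeq_{\mathrm{state}}$ chains supplied by Lemma~\ref{lem:swap.state.inv}. Your additional paragraph on why $\mathsf{b.swap}_t$ remains well-defined after $\mathsf{s.copy}$ is a helpful clarification but is not needed beyond what Lemma~\ref{lem:swap.tag.inv} already encapsulates.
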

\begin{proof}
By construction, we have the following diagram
\[
\begin{tikzcd}
\lang(\aut) \arrow[r, "\mathsf{f.swap}_t"] \arrow[d, dashrightarrow, "S"] & \lang(\aut_1) \arrow[r, "\mathsf{f.swap}_t"] \arrow[d, dashrightarrow, "S_1"] & \lang(\aut_2) \arrow[r] \arrow[d, dashrightarrow, "S_2"] & \cdots \arrow[r] & \lang(\aut_{l-1} ) \arrow[r, "\mathsf{f.swap}_t"] \arrow[d, dashrightarrow, "S_{l-1}"]& \lang(\aut_l) \arrow[d, "\mathsf{s.copy}_t"] \\
\lang(\aut') & \lang(\aut'_1) \arrow[l, "\mathsf{b.swap}_t"] & \lang(\aut'_2) \arrow[l, "\mathsf{b.swap}_t"] & \cdots \arrow[l] & \lang(\aut'_{l-1}) \arrow[l] & \lang(\aut'_l) \arrow[l, "\mathsf{b.swap}_t"]
\end{tikzcd}
\]
where the maps $S_i$'s are defined inductively as 
\begin{enumerate}
    \item $S_{l-1} := \mathsf{b.swap}_t \circ \mathsf{s.copy}_t \circ \mathsf{f.swap}_t$, and
    \item $S_{i} := \mathsf{b.swap}_t \circ S_{i+1} \circ \mathsf{f.swap}_t$ for $i = 0,1, \dots, l-2$ and $S:=S_0$.
\end{enumerate}
Since now $x_t$'s are in the layer just above leaf transition, $\mathsf{s.copy}_t$ is tag preserving by \cref{lem:copy.tag.inv}, we have
  $\aut_l \simeq_{\tagg} \aut'_l$. Apply \cref{lem:swap.tag.inv} inductively, we
  have $\aut_i \simeq_{\tagg} \aut'_i$ for each $i$ and hence $\aut \simeq_{\tagg} \aut'$. Moreover, since
\[
\aut \simeq_{state} \aut_1 \simeq_{state} \cdots \simeq_{state} \aut_l \text{ and } \aut' \simeq_{state} \aut'_1 \simeq_{state} \cdots \simeq_{state} \aut'_l,
\]
the functionality follows from the functionality of $\mathsf{s.copy}_t$. 
\end{proof}


%



\begin{theorem}[Theorem 6.12]
Given two tagged TAs $\aut_{T_1}$ and $\aut_{T_2}$ as the input of the binary operation, the resulting TA $\aut_{T_1\pm T_2}$ recognizes $\{T_1\pm T_2\mid T_1\in \lang(\aut_{T_1}) \land T_2\in \lang(\aut_{T_2}) \land \tagg(T_1)=\tagg(T_2)\}$.
\end{theorem}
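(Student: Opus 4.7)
The plan is to prove both set inclusions via a bijective correspondence between accepting runs in the product automaton and pairs of accepting runs with identical tags in the component automata. Let $\aut' = \mathsf{Bin}(\aut_{T_1},\aut_{T_2},\pm) = \tuple{Q_1\times Q_2, \Sigma', \Delta'_i\cup\Delta'_l, \{(q_1,q_2)\}}$ denote the output TA of Algorithm~\ref{algo:binary_op}. The core observation is that the product construction only pairs transitions sharing the same tagged internal symbol $x^i_j$, and it combines leaf symbols additively.

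First I would establish a correspondence lemma: a tree $T$ has an accepting run $\run$ in $\aut'$ if and only if there exist trees $T^1\in\lang(\aut_{T_1})$, $T^2\in\lang(\aut_{T_2})$ with accepting runs $\run^1$, $\run^2$ in their respective TAs such that (i)~$N_T = N_{T^1} = N_{T^2}$, (ii)~for every internal node $u$, $L_T(u) = L_{T^1}(u) = L_{T^2}(u)$, (iii)~for every node $u$, $L_\run(u) = (L_{\run^1}(u), L_{\run^2}(u))$, and (iv)~for every leaf node $u$, $L_T(u)$ is the componentwise sum (or difference) of $L_{T^1}(u)$ and $L_{T^2}(u)$ in the algebraic representation. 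Condition~(ii) immediately implies $\tagg(T^1)=\tagg(T^2)$, since the tags are determined by the internal symbols. This correspondence is proved by a straightforward bottom-up induction on $N_T$: the definition of $\Delta'_i$ forces matching symbols $x^i_j$ at every internal node (so the tags coincide), and the definition of $\Delta'_l$ forces the leaf labels to be componentwise sums with common exponent~$k$.

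For the forward inclusion ($\supseteq$), given $T^1\in\lang(\aut_{T_1})$ and $T^2\in\lang(\aut_{T_2})$ with $\tagg(T^1)=\tagg(T^2)$, I would build the product run explicitly: because the tags agree, the internal symbols at every node coincide, so for each internal transition used in $\run^1$ and the corresponding one in $\run^2$ there is a matching product transition in $\Delta'_i$; the leaf transitions combine via $\Delta'_l$ to yield a tree whose leaves realize $T^1 \pm T^2$. For the reverse inclusion ($\subseteq$), given an accepting run of $\aut'$ over some $T$, I would project onto the first and second coordinates of each state label to recover runs $\run^1,\run^2$ in $\aut_{T_1},\aut_{T_2}$, and reconstruct $T^1,T^2$ from the leaf decomposition guaranteed by $\Delta'_l$ (here we use the assumption $k_1 = k_2$ in the exponent, as justified by the discussion preceding the algorithm).

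The main obstacle is a subtle well-definedness issue in the reverse direction: the definition of $\Delta'_l$ writes each leaf symbol in~$\aut'$ as a single sum $(a_1\pm a_2,\ldots,k_1)$, and one must check that this decomposition back into a pair of component leaf labels is consistent with the rest of the run. This is handled by the assumption that every leaf transition has a unique parent state (from Section~\ref{sec:TA}) and by the invariant $k_1=k_2$, which together ensure that each pair of component states $(q^1,q^2)$ appearing at a leaf uniquely identifies one leaf transition in $\aut_{T_1}$ and one in $\aut_{T_2}$, so the decomposition is canonical. Once these two inclusions are combined, functionality follows immediately from condition~(iv) of the correspondence lemma.
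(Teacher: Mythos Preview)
Your proposal is correct and follows essentially the same approach as the paper: both arguments establish a bijective correspondence between accepting runs of the product automaton and pairs of runs in the component automata with matching tags, relying on the fact that $\Delta'_i$ only pairs transitions with identical tagged symbols $x^i_j$ and that $\Delta'_l$ performs the componentwise arithmetic. The paper's proof is considerably terser (it simply points to the analogous construction in the proof of Theorem~\ref{thmRestrTagPreserv} and notes that functionality follows from the leaf-transition algorithm), whereas you spell out the two inclusions and explicitly address the well-definedness of the leaf decomposition via the unique-parent-state assumption and the $k_1=k_2$ invariant---details the paper leaves implicit.
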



\begin{proof}
It is similar to the proof of Theorem \ref{thmRestrTagPreserv} that there exists a one-to-one correspondence 
\begin{align*}
    \lang(\aut_{T_1}) \times \lang(\aut_{T_2}) &\to \lang(\aut_{T_1 \pm T_2}) \\
    (T_1, T_2) &\mapsto \begin{cases}
    T_1 \pm T_2 & \text{ if } \tagg(T_1) = \tagg(T_2) \\
    \emptyset & \text{ otherwise}
    \end{cases}
\end{align*}
and, by the binary operation algorithm, we have $\tagg(T_1 \pm T_2) = \tagg(T_1) = \tagg(T_2)$ for any pair $(T_1,T_2) \in \lang(\aut_{T_1}) \times \lang(\aut_{T_2})$ with $\tagg(T_1) = \tagg(T_2)$. The functionality follows from the algorithm of leaf transitions (of the binary operation). 
\end{proof}


\vspace{-0.0mm}
\section{The Circuit of Grover's Search Algorithm with All Possible Oracles}\label{sec:grover_multi}
\vspace{-0.0mm}
With the operations over the symbolic representation of quantum states
introduced in the main text, we are able to perform all gate operations
in \cref{tab:quantum_gates}.
In the current section, we will use Grover's search algorithm~\cite{Grover96} to demonstrate how to use these gate operations for automatic verification of the safety properties of quantum algorithms.

The Grover's search algorithm assumes the existence of an \emph{oracle} that knows the problem of interest and can answer related questions.

\begin{definition}[Phase Oracle]
Given a~function $f\colon \{0,1\}^{n}\rightarrow \{0,1\}$, a~\emph{phase oracle}
  $O_f$ transforms a quantum state $\ket{x}$ to $(-1)^{f(x)}\ket{x}$.
\end{definition}


\problemStatement{Grover's Problem}{
A function $f(x)$ that returns $1$ if $x=s$ for some unique secret string $s\in
\{0,1\}^{n}$ and $0$ otherwise, and
a~phase oracle $O_f$.
}{The unique secret string $s$.}



\newcommand{\figGrover}[0]{
\begin{figure}
\centering
\scalebox{0.85}{
\begin{quantikz}
\lstick{$\ket{s_1}$} & \gate{X}\gategroup[wires=3,steps=9,style={dotted,
rounded corners,fill=blue!10,draw opacity=0},background]{} & \ctrl{3} & \qw & \qw & \qw & \qw & \qw & \ctrl{3} & \gate{X} & \qw & \qw & \qw & \qw & \qw  \\
\lstick{$\ket{s_2}$} & \gate{X} & \qw & \ctrl{3} & \qw & \qw & \qw & \ctrl{3} & \qw & \gate{X} & \qw & \qw & \qw & \qw & \qw \\
\lstick{$\ket{s_3}$} & \gate{X} & \gategroup[wires=5,steps=7,style={dashed,
rounded corners,fill=blue!10,draw opacity=0},background]{}\qw & \qw & \ctrl{3} & \qw & \ctrl{3} & \qw & \qw & \gate{X} & \qw & \qw & \qw & \qw & \qw\\
\lstick{$\ket{0}$} & \gate{H} & \gate{X} & \qw & \qw & \ctrl{1}\gategroup[wires=4,steps=1]{} & \qw & \qw & \gate{X} & \gate{H} & \gate{X} & \ctrl{1}\gategroup[wires=3,steps=1]{} & \gate{X} & \gate{H} & \qw \\
\lstick{$\ket{0}$} & \gate{H} & \qw & \gate{X} & \qw & \ctrl{1} & \qw & \gate{X} & \qw & \gate{H} & \gate{X} & \ctrl{1} & \gate{X} & \gate{H} & \qw \\
\lstick{$\ket{0}$} & \gate{H} & \qw & \qw & \gate{X} & \ctrl{1} & \gate{X} & \qw & \qw & \gate{H} & \gate{X} & \gate{Z} & \gate{X} & \gate{H} & \qw \\
\lstick{$\ket{1}$} & \gate{H}\slice[style=solid]{} & \gategroup[steps=13,style={draw opacity=0},background,label style={label position=below,yshift=-0.7cm}]{$\xleftarrow{\hspace{4.25cm}}$ one Grover iteration $\xrightarrow{\hspace{4.25cm}}$}\qw & \qw & \qw & \targ{} & \qw & \qw & \qw & \qw & \qw & \qw & \qw & \qw & \qw \\
&&&&&&&&&&&&&&&&&&&&&&&&&&&&&&&&&&&&&&&&&&&&&&&&&&&&&&&&&&&&&&
\end{quantikz}

}
\vspace{-4mm}
 \caption{An implementation of Grover's search algorithm.
  The boxed area denotes a~multi-control gate, which can be implemented with
  Toffoli gates and two additional ancilla qubits.
  The \purplelab{\ \ \ } area is the oracle circuit, and the part to the right
  of the red separator is one Grover iteration.}
  \label{fig:grover}
\end{figure}
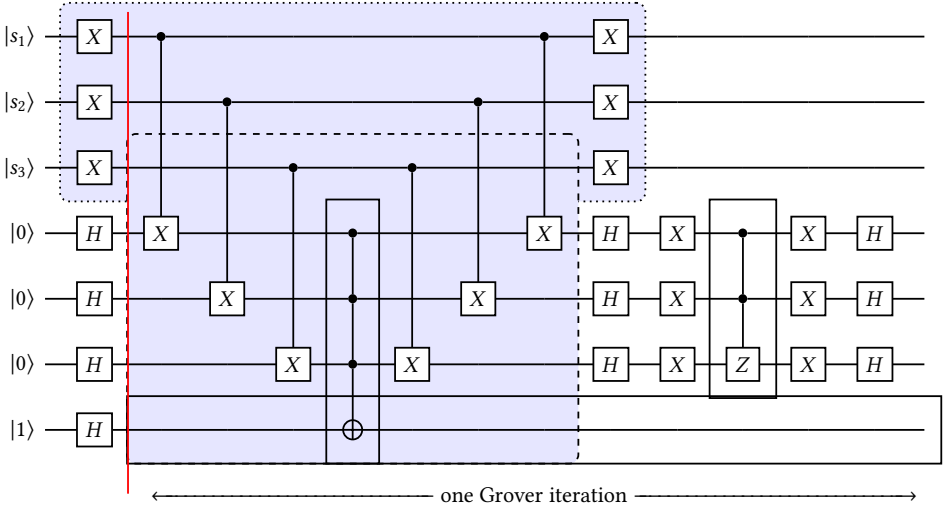
}

\figGrover
In \cref{fig:grover}, we show the circuit implementing 3-qubit Grover's search
algorithm together with its oracle circuit.
The oracle circuit takes as input qubits $\ket{s_1s_2s_3}$ and uses them as the only solution to $f(x)$. 
For the analysis of the algorithm, we create a TA that accepts the set of quantum basis states $\{\ket{s_1s_2s_30001}\mid s_1s_2s_3 \in
\{0,1\}^3 \}$ and use it as the initial TA.

The circuit to the right of the red separator in \cref{fig:grover} is called
a~\emph{Grover iteration}, whose function is to increase the probability
amplitude of the basis $\ket{s_1s_2s_3s_1s_2s_3+}$.
The number of gates required per iteration is linear with our encoding and is exponential with an enumeration-based approach.
Intuitively, a Grover iteration increases the probability that the measured outcome of
$x_1x_2x_3$ equals to the solution $s_1s_2s_3$. 
It is executed repeatedly for roughly $\bigO(\sqrt{2^n})$
times in order to obtain a~high enough probability that $x_1x_2x_3$ is measured
to be $s_1s_2s_3$ (see~\cite{Grover96} for more details).
After that, we check the equivalence of the obtained TA against the reference answer.

\vspace{-0.0mm}
\section{Additional Information About the Experiments}\label{sec:experiments_information}

\vspace{-0.0mm}

The pre-conditions, circuits and post-conditions of each experiment are described as follows. For Bernstein-Vazirani's algorithm, the parameter $n$ is the length of the hidden string. \cref{fig:bv} is an example circuit from the hidden string $s_2s_1s_0 = 011$. For each $s_i = 1$, the corresponding qubit $q_i$ is attached to the control qubit of a CNOT gate whose target is located at the bottom qubit. In this algorithm, the initial state can only be $|0\rangle$, and the output state has probability 100\% at hidden string's basis state. More precisely, when $(n=m+1)$
\begin{align*}
\mbox{Pre: } & \{\ket{0^n}\}    \\
\mbox{Post: } & \{(10)^{(m-1)/2}11\}&\mbox{ if $m$ is odd}\\
 & \{(10)^{m/2}1\} &\mbox{ if $m$ is even.}
\end{align*}
Notice that for the sake of constructing the post condition, our implementation actually appends one more H gate at the target qubit to the end of the circuit.

\begin{figure}
\resizebox*{\textwidth/2}{!}{
\begin{quantikz}
\lstick{$q_0$} & \gate{H} & \qw & \qw & \qw & \ctrl{3} & \qw & \qw & \qw & \gate{H} & \qw \\
\lstick{$q_1$} & \gate{H} & \qw & \qw & \qw & \qw & \ctrl{2} & \qw & \qw & \gate{H} & \qw \\
\lstick{$q_2$} & \gate{H} & \qw & \qw & \qw & \qw & \qw & \qw & \qw & \gate{H} & \qw \\
\lstick{$q_3$} & \gate{H} & \gate{Z} & \qw\slice[style=solid]{} & \qw & \targ{} & \targ{} & \qw\slice[style=solid]{} & \qw & \qw & \qw
\end{quantikz}
}
\caption{An implementation of Bernstein-Vazirani's algorithm. The middle part between the two red separators is the quantum oracle corresponding to the hidden string $s_2s_1s_0 = 011$.}
\label{fig:bv}
\end{figure}
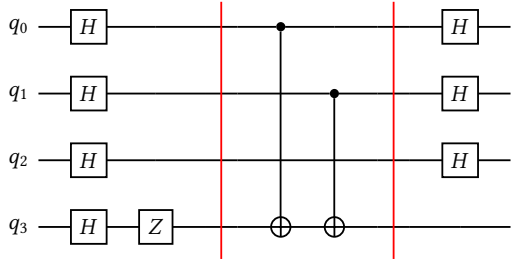

For Grover's algorithm with one oracle, the problem size $n$ is also the length of the hidden string. The circuit can be easily modified from \cref{fig:grover} by hardwiring particular X gates from the unique secret string and then removing all $|s_i\rangle$ qubits. The initial state of the bottom qubit can only be $|1\rangle$ and those of other qubits can only be $|0\rangle$. The output state has the highest probability at hidden
string's basis state if the proper number of iterations is given. For Grover's algorithm with all possible oracles, the readers can refer to \cref{sec:grover_multi}. Notice that we still append one more H gate at the target qubit to the end of the circuit here for the sake of constructing the post condition. Besides, the representation below does not assume an interleaving of control qubits and work qubits.
\begin{align*}
\mbox{Grover-Single } & (n=2m,\ m\ge3)\\
\mbox{Pre: } & \{|0^n\rangle\}   \\
\mbox{Post: } & \{a_h |s0^{m-1}1\rangle + \sum_{i\in\{0,1\}^m,i\ne s}a_l |i0^{m-1}1\rangle\ |\ s=(01)^{(m-1)/2}0\}&\mbox{ if $m$ is odd}\\
 & \{a_h |s0^{m-1}1\rangle + \sum_{i\in\{0,1\}^m,i\ne s}a_l |i0^{m-1}1\rangle\ |\ s=(01)^{m/2}\} &\mbox{ if $m$ is even.}\\
\mbox{Grover-All } & (n=3m,\ m\ge3)\\
\mbox{Pre: }& \{|s0^m0^m\rangle\ |\ s\in\{0,1\}^m\}\\
\mbox{Post: }& \{a_h |ss0^{m-1}1\rangle + \sum_{i\in\{0,1\}^m,i\ne s}a_l |si0^{m-1}1\rangle\ |\ s=\{0,1\}^m\}
\end{align*}

For multi-controlled Toffoli gates, the problem size $n$ is the number of control qubits. Our circuit of multi-controlled Toffoli can be found in~\cref{fig:mc} and the total number of qubits is $2n$. We test this gate with the initial automaton consisting of all basis states that have control qubits and the target qubit $0$ or $1$ and work qubits only $0$. In this case, the post-condition TA would be the same as the pre-condition TA. The representation below does not assume an interleaving of control qubits and work qubits either.
\begin{align*}
\mbox{MCToffoli } & (n=2m,\ m\ge3)\\
\mbox{Pre: } & \{|c0^{m-1}t\rangle\ |\ c\in\{0,1\}^m \land t\in\{0,1\}\} \\
\mbox{Post: } & \{|c0^{m-1}t\rangle\ |\ c\in\{0,1\}^m \land t\in\{0,1\}\}
\end{align*}

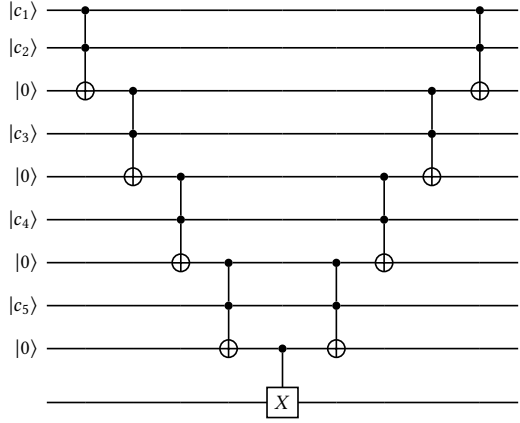
\begin{figure}
\resizebox*{0.5\textwidth}{!}{
\begin{quantikz}
\lstick{$\ket{c_1}$} & \ctrl{1} & \qw & \qw & \qw & \qw & \qw & \qw & \qw & \ctrl{1} & \qw\\
\lstick{$\ket{c_2}$} & \ctrl{1} & \qw & \qw & \qw & \qw & \qw & \qw & \qw & \ctrl{1} & \qw\\
\lstick{$\ket{0}$} & \targ{} & \ctrl{1} & \qw & \qw & \qw & \qw & \qw & \ctrl{1} & \targ{} & \qw\\
\lstick{$\ket{c_3}$} & \qw & \ctrl{1} & \qw & \qw & \qw & \qw & \qw & \ctrl{1} & \qw & \qw\\
\lstick{$\ket{0}$} & \qw & \targ{} & \ctrl{1} & \qw & \qw & \qw & \ctrl{1} & \targ{} & \qw & \qw\\
\lstick{$\ket{c_4}$} & \qw & \qw & \ctrl{1} & \qw & \qw & \qw & \ctrl{1} & \qw & \qw & \qw\\
\lstick{$\ket{0}$} & \qw & \qw & \targ{} & \ctrl{1} & \qw & \ctrl{1} & \targ{} & \qw & \qw & \qw\\
\lstick{$\ket{c_5}$} & \qw & \qw & \qw & \ctrl{1} & \qw & \ctrl{1} & \qw & \qw & \qw & \qw\\
\lstick{$\ket{0}$} & \qw & \qw & \qw & \targ{} & \ctrl{1} & \targ{} & \qw & \qw & \qw & \qw\\
\lstick{} & \qw & \qw & \qw & \qw & \gate{X} & \qw & \qw & \qw & \qw & \qw\\
\end{quantikz}
}
\vspace{-6mm}
\caption{Multi-controlled Toffoli of 5 control qubits}
\label{fig:mc}
\end{figure}
For randomly generated gates, the problem size is simply the number of qubits. Following~\cite{TsaiJJ21}'s configuration, the ratio
of \#qubits : \#gates is fixed to 1:3, and there are 10 circuits for each size. The gates and the applied qubits are picked uniformly at random. The initial state consists of only $|0\rangle$.


\end{document}